\newif\ifdraftpaper
\newcommand{\nb}[1]{\textcolor{blue}{\bf\large \#}\footnote{\textcolor{blue}{#1}}}
\newcommand{\nw}[1]{\textcolor{magenta}{\bf\large \#}\footnote{\textcolor{magenta}{#1}}}
\newcommand{\nb}[1]{}
\newcommand{\nw}[1]{}
 \title[Two-variable fragments with one transitive relation]{On the satisfiability problem for fragments of two-variable logic with one transitive relation}
 \author[Wies\l aw~Szwast]
 {Wies\l aw~Szwast$^*$}\thanks{$^*$Corresponding author. Institute  of Computer Science,  University of Opole, Oleska 48, 45-052 Opole, Poland; Email {\tt szwast@uni.opole.pl}, Tel {\tt +48 774527210}.}
 \author[Lidia~Tendera]{Lidia~Tendera}
  \address{Institute  of Computer Science, University of Opole, Poland}
 \email{[szwast,tendera]@uni.opole.pl}
\newcommand{\+}{+}
\newcommand{\m}{-}
\newcommand{\mna}{\mbox{$<_{\fA}$}}
\newcommand{\nmna}{\mbox{$\not<_{\fA}$}}
\newcommand{\mnb}{\mbox{$<_{\exb}$}}
\newcommand{\mnaa}{\mbox{$<_{\fA_1}$}}
\newcommand{\mn}{\mbox{$<$}}
\newcommand{\wi}{\mbox{$>$}}
\newcommand{\simw}{\sim}
\newcommand{\sima}{\mbox{$\sim_{\fA}$}}
\newcommand{\simb}{\mbox{$\sim_{\exb}$}}
\newcommand{\lsima}{\mbox{$\lesssim_{\fA}$}}
\newcommand{\lsimb}{\mbox{$\lesssim_{\exb}$}}
\newcommand{\N}{{\mathbb N}}
\newcommand{\Z}{{\mathbb Z}}
\newcommand{\Ur}{\overrightarrow{U}}
\newcommand{\Ul}{\overleftarrow{U}}
\newcommand{\Ulr}{U}
\newcommand{\Un}{\overline{U}}
\newcommand{\phir}{\phi^{\rightarrow}}
\newcommand{\restr}{\!\!\restriction\!\!}
\newcommand{\AAA}{\mbox{\large \boldmath $\alpha$}}
\newcommand{\BBB}{\mbox{\large \boldmath $\beta$}}
\newcommand{\BBBr}{\mbox{\large \boldmath $\beta$}^{\rightarrow}}
\newcommand{\BBBl}{\mbox{\large \boldmath $\beta$}^{\leftarrow}}
\newcommand{\BBBlr}{\mbox{\large \boldmath $\beta$}^{\leftrightarrow}}
\newcommand{\BBBn}{\mbox{\large \boldmath $\beta$}^{-}}
\newcommand{\fA}{\mathfrak{A}}%
\newcommand{\fB}{\mathfrak{B}}%
\newcommand{\fC}{\mathfrak{C}}%
\newcommand{\fD}{\mathfrak{D}}%
\newcommand{\gfdtg}{\ensuremath{\mbox{GF\/}^2\mbox{+TG}}}
\newcommand{\gftg}{\ensuremath{\mbox{GF+TG}}}
\newcommand{\gf}{\ensuremath{\mbox{GF}}}
\newcommand{\gfd}{\ensuremath{\mbox{GF\/}^2}}
\newcommand{\FOt}{\mbox{$\mbox{\rm FO}^2$}}
\newcommand{\GFt}{\mbox{$\mbox{\rm GF}^2$}}
\newcommand{\fod}{\ensuremath{\mbox{FO\/}^2}}
\newcommand{\fodt}{\ensuremath{\mbox{FO\/}^2_{T}}}
\newcommand{\NExpTime}{\textsc{NExpTime}}
\newcommand{\TwoExpTime}{2\textsc{-ExpTime}}
\newcommand{\TwoNExpTime}{2\textsc{-NExpTime}}
\newcommand{\fodeg}{\ensuremath{\mbox{FO\/}^2_{T:tw}}\xspace}
\newcommand{\fodeng}{\ensuremath{\mbox{FO\/}^2_{T:fw}}\xspace}
\renewcommand{\phi}{\varphi}
\newcommand{\obciety}{\!\!\upharpoonright\!\!}
\newtheorem{tw}{Theorem}[section]
\newtheorem{corollary}[tw]{Corollary}
\newtheorem{claim}[tw]{Claim}
\newtheorem{lemma}[tw]{Lemma}
\newtheorem{proposition}[tw]{Proposition}
\newtheorem{definition}[tw]{Definition}
\newcommand{\spl}{{{sp}} }
\newcommand{\In}{{{In}} }
\newcommand{\Out}{{{Out}} }
\newcommand{\SP}{{{Sp}}}
\newcommand{\exb}{{\fA_{+B}}}
\newcommand{\exba}{{\fA_{+B_1}}}
\newcommand{\id}{{{id}} }
\newcommand{\cont}[2]{\langle #1\rangle_{#2}}
\begin{document}

\begin{abstract}
	
We study the satisfiability problem for two-variable
first-order logic over structures with one transitive relation. 
We show that the problem is decidable in \TwoNExpTime{} for the fragment consisting of formulas where existential quantifiers are guarded by transitive atoms. As this fragment enjoys neither the finite model property nor the tree model
property, to show decidability we introduce a novel model construction technique based on the infinite Ramsey theorem. 

We also point out why the technique is not sufficient to obtain decidability for the full two-variable logic with one transitive relation, hence contrary to our previous claim, [{FO$^2$ with one transitive relation is decidable}, STACS 2013: 317-328], 
the status of the latter problem remains open.   

\end{abstract}

\keywords{
two-variable first-order logic, decidability, satisfiability problem, transitivity, computational complexity}

\subjclass[2000]{03B25, 03B70}

\maketitle

\section{Introduction}

\label{sec:intro}

The two-variable fragment of first-order logic, \fod, is the restriction of classical first-order logic over
relational signatures to formulas with at most two variables.
It is well-known that   \fod\ enjoys the finite model
property~\cite{Mor75}, and its satisfiability (hence also finite
satisfiability) problem is \NExpTime-complete~\cite{GKV97}.

One  drawback of \fod\ is that it can neither express
transitivity of a binary relation nor say that a binary relation
is a partial (or linear) order, or an equivalence relation. These
natural properties are important for practical applications, thus attempts have been made 
to investigate \fod\ over restricted classes
of structures in which some distinguished binary symbols are
required to be interpreted as transitive relations, orders,
equivalences, etc. The idea to restrict the class of structures  comes from modal correspondence
theory, where various conditions on the accessibility relations
allow one to restrict the class of  Kripke structures considered, e.g.
to transitive structures for the modal logic K4 or equivalence
structures for the modal logic  S5. Orderings, on the other hand,
are very natural when considering temporal logics, where they
model time flow, but they also are used in different scenarios,
e.g.~in databases or description logics. 

However, the picture 
for \fod\ is more complex. In particular, both the satisfiability and the finite
satisfiability problems for \fod\ are undecidable in the presence
of several equivalence relations, several transitive relations, or several linear orders
\cite{GO98,GradelOR99,Otto01}. These results were later strengthened:
\fod\ is undecidable in the presence of two transitive relations
\cite{Kie05,Kaz06}, three equivalence relations \cite{KO05,KO12}, one
transitive and one equivalence relation \cite{KT09}, 
or three
linear orders \cite{Kie2011}.

On the positive side it is known that \fod\ with one or two
equivalence relations is decidable \cite{KO12,KT09,KMP-HT14}. The
same holds for \fod\ with one linear order \cite{Otto01}. The
intriguing question left open by this research was the case of
\fod\ with one transitive relation, and \fod{} with two linear orders. 

The above-mentioned and additional related results are summarized in Figure \ref{tabela}. 
There, \gfd\ is the two-variable
restriction of the {\em guarded fragment}  \gf~\cite{ABN98}, where all
quantifiers are guarded by atoms, and \gftg\ is the restriction of
\gfd\ with transitive relations, where the transitive relation
symbols are allowed to appear only in guards. As shown in
\cite{ST01, ST04}  undecidability of \fod\ with
transitivity transfers to \gfd\ with transitivity; however, \gftg\
is decidable for {\em any} number of transitive symbols.
Moreover, as noted in \cite{Kie05}, the decision procedure
developed for \gfdtg\ can be applied to \gfd\ with one transitive
relation that is allowed to appear also outside guards, giving
\TwoExpTime-completeness of the latter fragment. 

We denote by \fodt\ the set of \fod-formulas over any signature containing a distinguished binary
predicate $T$ which is always interpreted as a
\emph{transitive} relation. We distinguish two fragments of \fodt\ depending on how existential quantifiers are used: \fodt{} {\em with transitive witnesses}, where existential quantifiers are guarded by transitive atoms (when written in negation normal form), and \fodt{} {\em with free witnesses}, where existential quantifiers are guarded by negated transitive atoms (no restrictions are imposed on the usage of universal quantifiers, cf.~Section~\ref{sec:prel} for precise definition).  

It has already been mentioned that \fod\ has the finite
model property. Adding one transitive relation to \gfd\ (even
restricted to guards) we can write infinity axioms, i.e.~formulas that have only infinite models. However, models for this logic still enjoy the so-called tree-like
property, i.e.~new elements required by $\forall\exists$-conjuncts
can be added  independently. This property does no longer hold for \fodt{} or the fragments mentioned above, where one can write arbitrary universal formulas with two variables (cf.~Section~\ref{sec:nontr-wit} for some examples). 

This article was originally meant to be a full version of the conference paper \cite{ST13} where we announced the theorem that \fodt\ is decidable. In the meantime, we have realized that one of the technical lemmas of the conference paper is flawed (Claim~10, page~323) and the technique introduced there gives decidability only for \fodt{} with transitive witnesses. The main result of this article is that  the satisfiability problem for \fodt{} with transitive witnesses is decidable in \TwoNExpTime. We also discuss limitations of our technique and point out why it does not extend to give decidability of the satisfiability problem for \fodt{} with free witnesses.  Accordingly, the status of the satisfiability problem for \fodt{} remains open. 

\begin{figure}[tbh]
\begin{center}
\small\hspace{-2mm}  
\begin{tabular}{|c||c|c|c|}\hline
 & \multicolumn{3}{|c|}{} \\
{\large{\em Logic}  }  &
\multicolumn{3}{c|}{\large{ \em with transitive relations:}} \\
\cline{2-4}
 &  1 & 2 & 3 or more \\
%
%
\hline &  \multicolumn{3}{c|}{}
\\
\GFt{}+TG &  \multicolumn{3}{c|}{\TwoExpTime-complete}\\
&    \multicolumn{3}{c|}{\cite{ST04, Kie06} }\\
 \hline
 & & &
\\
\GFt{} & \TwoExpTime-complete & undecidable & undecidable \\

  & \cite{Kie05}  &  \cite{Kie05} & \cite{GMV99}\\

\hline
& & &
\\
 \FOt{} & {{{ SAT: ??? }}}  & undecidable
&
undecidable \\
&  {FinSAT: in 3-\NExpTime{}  } & \cite{Kie05,Kaz06} & \cite{GradelOR99}\\
&  \cite{P-H18} & & \\
 \hline

\multicolumn{4}{c}{}\\
\multicolumn{4}{c}{\large \em \hspace{2cm} with linear orders:}\\
\hline
& & &
\\
\FOt{} & \NExpTime-complete & SAT: ??? & undecidable \\
&   {\cite{Otto01}}  & FinSAT: in \TwoNExpTime{}  & \cite{Kie2011}\\
&     &  \cite{ZeumeH2016} & \\
\hline

\multicolumn{4}{c}{}\\
\multicolumn{4}{c}{\large \em \hspace{2cm} with equivalence relations:}\\
\hline
& & & 
\\
\GFt{} & \NExpTime-complete & \TwoExpTime-complete & undecidable \\
 
& {\tiny inherited from full \fod}  &  \cite{Kie05,KP-HT17} & \cite{KO12}\\

\hline & & &
\\
\FOt{} & \NExpTime-complete &
{{\TwoNExpTime}-complete} & undecidable \\
&   {\cite{KO12}}  & \cite{KO12,KMP-HT14}  & \cite{GO98}\\
&     &  & \\
\hline

\end{tabular}
\end{center}
\caption{\textsf{Two variable logics over transitive, linearly ordered or
equivalence structures}}\label{tabela}
\end{figure}

It should be pointed out that decidability of the finite satisfiability  problems for \fodt{} and for \fod{}  with two linear orders has already been confirmed and the following upper bounds are known: 3-\NExpTime{} for \fodt{} \cite{P-H18}, and \TwoNExpTime{} for \fod{} with two linear orders \cite{ZeumeH2016} (these bounds are not yet known to be tight). 
 
It also makes sense to consider more expressive systems in which
we may refer to the transitive closure of some relation. In fact,
relatively few decidable fragments of first-order logic with
transitive closure are known. One exception is the logic \gfd\
with a transitive closure operator applied to binary symbols
appearing only in guards~\cite{Michaliszyn09}. This fragment
captures the two-variable guarded fragment with transitive guards,
\gfdtg, preserving its complexity. Also decidable is the prefix class $\exists\forall$ 
(so essentially a fragment of \fod) extended by the positive deterministic transitive
closure of one binary relation, which is shown to enjoy
the exponential model property \cite{ImmermanRRSY04}.  In   \cite{KM12}
it was shown that the satisfiability problem for the
two-variable universal fragment of first-order logic with
constants remains decidable when extended by the transitive closure of a
single binary relation. Whether the same holds for
the finite satisfiability problem is open.

Also of note in this context is the interpretation of \FOt{} over {\em
data words} and {\em data trees} that appear e.g.~in verification
and XML processing. Decidability of \FOt{} over data words with
one additional equivalence relation was shown in \cite{BDM-LICS06}. For more results related to \FOt{} over data words or data trees see e.g.~\cite{Manuel10, SchZ10,DavidLT10, 	NiewerthS11,BDM-PODS06}.

\noindent {\bf Outline of the proof.}
Models for \fodt-formulas, taking into account the interpretation
of the transitive relation, can be seen as partitioned into cliques (for a formal definition of a clique see Subsection~\ref{cliques}).
As usual for two-variable logics, we first establish a
``Scott-type'' normal form for \fodt{} that allows us to
restrict the nesting of quantifiers to depth two.  
Moreover, the form of the $\forall\exists$-conjuncts enables us to distinguish witnesses required inside cliques (i.e.~realizing a 2-type containing both $Txy$ and $Tyx$, cf.~Subsection \ref{cliques}) from witnesses outside cliques.
We also establish a {\em small clique property} for \fodt{}, allowing us to restrict attention to models with cliques exponentially bounded in the size of the signature. 
Further constructions proceed on the level of cliques rather than individual elements.

Crucial to our decidability proof for \fodt{} with transitive witnesses is the following property: any infinitely
satisfiable sentence has an infinite {\em narrow} model,
 i.e.~a model whose universe can be
partitioned into segments (i.e.~sets of cliques) $S_0,
S_1,\ldots$, each of doubly exponential size, such that every
element in $\bigcup_{k=0}^{j-1} S_k$ requiring a witness  outside
its clique has the witness either in $S_0$ or in $S_j$
(cf.~Definition~\ref{def-narrow}). As substructures of a model preserve satisfiabilty of universal sentences, this immediately implies that, when
needed, every single segment $S_j$ ($j>0$) can be removed from the
universe of the model, and the structure restricted to the  remaining part of the universe is also a (more regular) model. In particular, this property allows us to build certain {\em regular} 
models where every two segments of the infinite partition (except the
first) are isomorphic (cf.~Definition~\ref{def:canon}).
Moreover, in regular models  the connection types between segments
can be further
simplified to a two element set. This
construction is based on the infinite Ramsey
theorem \cite{Ram30}, where
segments of the models correspond to nodes in a colored
graph, and connection types between segments correspond to colors of
edges.

The above properties suffice to obtain a \TwoNExpTime{} decision
procedure for the satisfiability problem for \fodt{} with transitive witnesses. We note that the best lower bound  coming from \gfdtg\ is \TwoExpTime{} and our result leaves a  gap in complexity. We also point out that our
decision procedure cannot be straightforwardly generalized to 
solve the satisfiability problem for \fodt{} with free witnesses. 

\medskip The rest of the paper is structured as follows. In Section~\ref{sec:prel} we introduce the basic notions, define the normal form for \fodt{} and show the small clique property. 
In Section~\ref{sec:narrow} we introduce the notion of a {\em splice} and the notion of a {\em narrow model} -- the most important technical notions of the paper. 
In Section~\ref{sec:main} we give the main result of the paper. Section~\ref{sec:nontr-wit} contains a discussion of the limitations of the above technique; in particular we give an example of an \fodt{-}formula with free witnesses that has only infinite models but does not admit narrow models in the above sense. 

\section{Preliminaries}\label{sec:prel}

\subsection{Basic concepts and notations}
We denote by \fod\ the two-variable fragment of first-order logic
(with equality) over relational signatures. By \fodt, we
understand the set of \fod-formulas over any signature
$\sigma=\sigma_0 \cup \{T\}$, where $T$ is a distinguished binary
predicate. The semantics for \fodt\ is as for \fod, subject to
the restriction that $T$ is always interpreted as a
\emph{transitive} relation.

In this paper, $\sigma$-structures are denoted by Gothic capital
letters and their universes by corresponding Latin capitals. Where
a structure is clear from context, we frequently equivocate
between predicates and their realizations, thus writing, for
example, $R$ in place of the technically correct $R^\fA$. If $\fA$
is a $\sigma$-structure and $B\subseteq A$, then
$\fA\obciety B$ denotes the (induced) substructure of $\fA$ with the
universe $B$.

An (atomic, proper) {\em $k$-type} (over a given signature) is a maximal
consistent set of atoms or negated atoms over $k$ distinct variables $x_1,\ldots,x_k$, containing the atoms  $x_i\neq x_j$ for every pair of distinct variables $x_i$ and $x_j$. 
If $\beta(x,y)$
is a $2$-type over variables $x$ and $y$, then   $\beta\obciety x$  (respectively,
$\beta\obciety y$) denotes the unique $1$-type that is
obtained from $\beta$ by removing atoms with the variable $y$
(respectively, the variable $x$). We denote by $\AAA$ the set of
all $1$-types and by $\BBB$ the set of all $2$-types (over a given
signature). Note that $|\AAA|$  and $|\BBB|$ are bounded
exponentially in the size of the signature. We often identify a
type with the conjunction of its elements.

For a given $\sigma$-structure $\fA$ and $a \in A$ we say that $a$
\emph{realizes} a $1$-type $\alpha$ if $\alpha$ is the unique
$1$-type such that $\fA \models \alpha[a]$.  We denote by
$tp^{\fA}[a]$ the $1$-type realized by $a$. Similarly, for
distinct $a,b \in A$, we denote by $tp^{\fA}[a,b]$ the unique
$2$-type \emph{realized} by the pair $a,b$, i.e.~the 2-type
$\beta$ such that $\fA \models \beta[a,b]$.

Assume $\fA$ is a $\sigma$-structure and $B,C\subseteq A$. We
denote by $\AAA^\fA$ (respectively, $\AAA^\fA[B]$) the set of all $1$-types
realized in $\fA$ (respectively, realized in $\fA \obciety B$), and by
$\BBB^\fA$ (respectively, $\BBB^\fA[B]$) the set of all $2$-types realized in
$\fA$ (respectively, realized in  $\fA \obciety B$). We denote by $\BBB^\fA[a,B]$
the set of all $2$-types $tp^{\fA}[a,b]$ with $b \in B$, and by
$\BBB^\fA[B,C]$ the set of all $2$-types $tp^{\fA}[b,c]$ with
$b \in B,  c\in C$.

Let $\gamma$ be a $\sigma$-sentence  of the form $\forall x\,
\exists y\,  \psi (x,y)$ and $a\in A.$ We say that an element
$b\in A$ is a {\em $\gamma$-witness} for $ a$ in the structure
$\mathfrak A$
 if $\mathfrak{A} \models \psi[a,b]$; $b$ is a {\em proper $\gamma$-witness},
 if $b$ is a $\gamma$-witness and $a\not= b.$

\subsection{Scott normal form}

As with \FOt{}, so too with \fodt, analysis is facilitated by the
availability of normal forms.

\setcounter{equation}{0}

\begin{definition}
\label{normalform}
An \fod-sentence $\Psi$ is in {\em Scott normal form} if it is  of
the following form:
\begin{eqnarray*}
\label{f1} &&\forall x \forall y\,  \psi_0 (x,y) \wedge
\bigwedge_{i=1}^M \forall x\exists y\, \psi_i(x,y),
\end{eqnarray*}
where every $\psi_i$ is quantifier-free and includes unary and
binary predicate letters only.
\end{definition}

In the above normal form without loss of generality we suppose that for $i\geq1$, $\psi_i(x,y)$ entails
$x\neq
 y$ (replacing $\psi_i(x,y)$ with $(\psi_i(x,y) \vee \psi_i(x,x))
 \wedge x\neq y$, which is sound over all structures
 with at least two elements).

Two formulas are said to be {\em equisatisfiable} if they
are satisfiable over the same universe. The following lemma is
typical for two-variable logics.

\begin{lemma}[\cite{Sco62,GKV97}]
\label{l-normalform} For every formula $\phi\in\FOt$ one can
compute in polynomial time an equisatisfiable, normal form
formula $\psi\in\FOt$ over a new signature whose length is linear
in the length of $\phi$.
\end{lemma}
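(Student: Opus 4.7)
The plan is to iteratively flatten the quantifier structure of $\phi$ by renaming innermost quantified subformulas with fresh unary predicates, in the style of a standard Scott transformation. First I would push negations inwards to put $\phi$ into negation normal form, so that every quantified subformula occurs in a positive context, and rename bound variables using only $x$ and $y$. In NNF, every quantified subformula has the shape $\exists y\,\chi$ or $\forall y\,\chi$ (or symmetrically with $x$ and $y$ swapped, which is handled in the same way).

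The main step is the renaming. Pick any innermost quantified subformula $Qy\,\chi(x,y)$ of $\phi$, so that $\chi$ is quantifier-free. Introduce a fresh unary predicate $P$, and let $\hat\phi$ be the result of replacing that occurrence of $Qy\,\chi(x,y)$ inside $\phi$ by the atom $P(x)$. Then conjoin a one-sided defining axiom: for $Q = \exists$, the axiom $\forall x\,(P(x) \rightarrow \exists y\,\chi(x,y))$ is logically equivalent to $\forall x\exists y\,(P(x) \rightarrow \chi(x,y))$ and is already in the required $\forall\exists$-shape; for $Q = \forall$, the axiom $\forall x\,(P(x) \rightarrow \forall y\,\chi(x,y))$ is equivalent to $\forall x\forall y\,(P(x) \rightarrow \chi(x,y))$ and can be absorbed into the single $\forall\forall$-conjunct. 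Iterating this renaming strictly reduces the number of quantifier occurrences in the main body, so after linearly many steps one arrives at a formula whose conjuncts are either universal $\forall x\forall y$-formulas, collected into $\psi_0$, or of the shape $\forall x\exists y\,\psi_i(x,y)$, matching Definition~\ref{normalform}.

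The key correctness claim is that $\phi$ and $\hat\phi$ conjoined with the added axiom are equisatisfiable over the same universe. For the forward direction, any model $\fA$ of $\phi$ is extended to a model of the new formula by interpreting $P$ as $\{a \in A : \fA \models (Qy\,\chi)[a]\}$; with this interpretation $P(x)$ and $Qy\,\chi(x,y)$ have the same extension, and the axiom holds by construction. For the converse, since $P$ occurs only positively in $\hat\phi$ and the axiom forces the extension of $P$ to be contained in that of $Qy\,\chi(x,y)$, a standard monotonicity argument on positive contexts shows that any model of the new formula is already a model of $\phi$; this monotonicity step is the point that most requires care and is the main obstacle of the argument, although it is routine once NNF has been enforced. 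Finally, each renaming adds one fresh unary predicate and increases the formula size by an additive term linear in $|\chi|$; since the processed subformulas $\chi$ in successive iterations are pairwise disjoint subformulas of the original $\phi$, the total size of the resulting $\psi$ is $O(|\phi|)$ and the whole procedure runs in polynomial time.
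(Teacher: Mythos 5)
Your proposal is correct and follows essentially the approach the paper relies on: the paper gives no proof of this lemma, citing the standard renaming-of-subformulas technique of Scott and Gr\"adel--Kolaitis--Vardi, which is exactly what you carry out. Your use of one-sided defining implications justified by polarity after putting the formula in negation normal form, in place of the usual biconditional definitions (whose converse direction is itself a $\forall\forall$-conjunct absorbed into $\psi_0$), is a minor and standard variant; both yield the stated linear-size, same-universe equisatisfiable normal form.
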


Suppose the signature $\sigma$ consists of predicates of arity at most 2. To define  a $\sigma$-structure $\fA$, it suffices to specify the 1-types  and 2-types realized by elements and pairs of elements from the universe $A$.
In the presence of a transitive relation,
we classify 2-types according to the transitive
connection between $x$ and $y$. And so, we distinguish $\BBBr=\{\beta: Txy \wedge \neg
Tyx\in \beta\}$, $\BBBl=\{\beta: \neg Txy \wedge Tyx\in \beta\}$, $\BBBlr=\{\beta: Txy \wedge 
Tyx\in \beta\}$ and $\BBBn=\{\beta: \neg Txy \wedge \neg
Tyx\in \beta\}$.
Obviously $$\BBB=\BBBr\;
\dot{\cup}\; \BBBl\; \dot{\cup}\; \BBBlr\;
 \dot{\cup}\; \BBBn.$$
For a quantifier-free \fodt-formula $\phi(x,y)$ we use superscripts $^\rightarrow$, $^\leftarrow$,
 $^\leftrightarrow$ and $^-$ to define new formulas that explicitly specify  the transitive connection between $x$ and $y$. For
 instance, for a quantifier-free formula $\phi(x,y)\in\fodt$
 we let
$$\phir(x,y):=Txy\wedge \neg Tyx \wedge \phi(x,y).$$
This conversion of \fodt-formulas leads to the following
variant of the Scott normal form that we will employ in this paper.

\begin{eqnarray}
&\forall x \forall y\, \psi_0 \wedge  & \bigwedge_{i=1}^m \gamma_i
\wedge \bigwedge_{i=1}^{\overline{m}} \delta_i
\label{f-dwa}
\end{eqnarray}
 where $\gamma_i=\forall x\exists y\, \psi_i^{d_i}(x,y)$ with $d_i\in\{^\rightarrow,
^\leftarrow, ^-\}$, and $\delta_i=\forall x\exists y\,
\psi_i^{^\leftrightarrow}(x,y)$.

When a sentence $\Psi$ in the normal form  \eqref{f-dwa}
is fixed, we often write $\gamma_i\in \Psi$ to indicate that $\gamma_i$ is a  conjunct of $\Psi$ of the form $\forall x\exists y\, \psi_i^{d_i}(x,y)$.

\begin{lemma} \label{lemma_normalform}
Let $\phi$ be an \fodt{}-formula over a signature $\tau$.  One can
compute, in polynomial time, a formula $\Psi$ in normal form~\eqref{f-dwa}, over a signature $\sigma$
consisting of $\tau$ together with a number of additional unary
and binary predicates such that: (i) $\models \Psi \rightarrow \phi$; and (ii) every model of $\phi$ can be expanded to a model of $\Psi$. 
\end{lemma}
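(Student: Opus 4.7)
The reduction proceeds in two stages. \textit{Stage 1} applies Lemma~\ref{l-normalform} to $\phi$, producing in polynomial time an equisatisfiable \fod{}-sentence
\[
\phi' \;=\; \forall x\forall y\,\psi_0(x,y)\;\wedge\;\bigwedge_{j=1}^{N}\forall x\exists y\,\psi_j(x,y)
\]
over a signature $\sigma'$ extending $\tau$ by polynomially many fresh unary and binary predicates. Since the quantifier-manipulation behind Lemma~\ref{l-normalform} does not depend on which predicate is required to be transitive, it applies verbatim in our setting; as usual, we arrange each $\psi_j$ with $j\geq 1$ to entail $x\neq y$.

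\textit{Stage 2} rewrites each $\forall x\exists y\,\psi_j$ in the directionally-tagged shape required by (\ref{f-dwa}). Because $\psi_j$ entails $x\neq y$, in any $\sigma'$-structure the pair $(x,y)$ falls into exactly one of the classes $\BBBr,\BBBl,\BBBlr,\BBBn$, giving $\psi_j\equiv\psi_j^{\rightarrow}\vee\psi_j^{\leftarrow}\vee\psi_j^{\leftrightarrow}\vee\psi_j^{-}$. To internalise the per-element choice of direction I introduce four fresh unary predicates $P_j^{d}$ ($d\in\{\rightarrow,\leftarrow,\leftrightarrow,-\}$), conjoin $\bigvee_d P_j^d(x)$ to $\psi_0$, and replace the existential conjunct by four directionally-tagged conjuncts $\gamma_j^d=\forall x\exists y\,\chi_j^d(x,y)$ built so that, at elements $a$ with $P_j^d(a)$ true, $\gamma_j^d$ enforces the existence of a direction-$d$ $\psi_j$-witness, and $\gamma_j^d$ is discharged trivially otherwise. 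The resulting $\Psi$ is in the normal form (\ref{f-dwa}), polynomially bounded in $|\phi|$, and produced in polynomial time.

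Both conditions are then straightforward. For (i), in any model of $\Psi$ the clause $\bigvee_d P_j^d(a)$ selects some $d$ at $a$, and the matching $\gamma_j^d$ supplies a $y$ with $\psi_j(a,y)$, so $\phi'$, hence $\phi$, holds. For (ii), given $\fA\models\phi$ and a fixed $\psi_j$-witness $b_{j,a}$ for each pair $(j,a)$, set $P_j^d(a)$ to hold precisely for the unique $d$ with $tp^{\fA}[a,b_{j,a}]\in\BBB^d$; then the matching $\gamma_j^d$ is witnessed by $b_{j,a}$, and the other three $\gamma_j^{d'}$ ($d'\neq d$) should be discharged by the $\neg P_j^{d'}$-guard inside $\chi_j^{d'}$.

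The main technical obstacle lies in Stage 2: (\ref{f-dwa}) forces the atoms $Txy,Tyx$ to sit as top-level conjuncts of each $\chi_j^d$, so the naive choice $\chi_j^d=(\text{direction-}d\text{ atoms})\wedge(\neg P_j^d(x)\vee\psi_j(x,y))$ still demands a direction-$d$ neighbour at every element --- even those with $P_j^d(a)$ false --- which need not exist in an arbitrary model of $\phi$. Reconciling the rigid top-level directional shape of (\ref{f-dwa}) with the desire to make $\gamma_j^d$ vacuous under $\neg P_j^d(x)$ is the central technical point; it will be handled by enlarging $\sigma'$ with a small family of auxiliary predicates and clauses in $\psi_0$ that guarantee a ``dummy'' direction-$d$ neighbour wherever needed.
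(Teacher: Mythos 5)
Your overall route --- Stage 1: the ordinary Scott reduction via renaming of subformulas; Stage 2: splitting each existential conjunct into four directionally tagged conjuncts selected by fresh unary predicates $P_j^d$, using $\exists y\,\psi\equiv\bigvee_d\exists y\,\psi^d$ --- is exactly the route the paper takes (its own proof is only a two-line sketch invoking renaming and that equivalence). To your credit, you have also put your finger on the one genuinely non-routine point, which the paper's sketch passes over in silence: in the normal form~\eqref{f-dwa} the atoms $Txy$, $Tyx$ (or their negations) must sit as \emph{top-level conjuncts} of the quantified matrix, so a conjunct $\forall x\exists y\,\psi^{d}(x,y)$ unconditionally asserts that every element has a direction-$d$ neighbour, and this cannot be switched off by a $\neg P_j^{d}(x)$ disjunct buried inside $\psi$.

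The gap is that your proposed resolution of this point cannot work, so the proof is not complete. You defer the difficulty to ``auxiliary predicates and clauses in $\psi_0$ that guarantee a dummy direction-$d$ neighbour wherever needed.'' But $\psi_0$ sits under $\forall x\forall y$ and can only forbid 2-types, never force the existence of an element; and condition (ii) requires an \emph{expansion}, which fixes both the universe and the interpretation of $T$, so no choice of fresh unary or binary predicates can create a $T$-edge that is not already there. Concretely, take $\phi=\forall x\exists y\,(x\neq y\wedge Exy)$ and its two-element model with $E$ symmetric and $T=\emptyset$: no expansion of this model satisfies any conjunct of the form $\forall x\exists y\,(Txy\wedge\neg Tyx\wedge\chi(x,y))$, whatever $\chi$ is, because no element has a strict $T$-successor; symmetrically, the model in which $T$ is total kills every conjunct tagged $^-$. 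So the three conjuncts $\gamma_j^{d'}$ with $d'\neq d$ are not ``discharged by the $\neg P_j^{d'}$-guard'' --- they remain unsatisfiable in such expansions --- and condition (ii) fails for the $\Psi$ you build. To close the argument one must genuinely change the shape of the existential conjuncts, e.g.\ work with relativized conjuncts $\forall x\,(p_i x\rightarrow\exists y\,\psi_i^{d_i}(x,y))$ and carry the relativization through the rest of the development (witness sets $W_i^{\fA}(a)$ defined only for $a$ with $p_i a$, and so on); the ``dummy neighbour'' device is not a substitute for this.
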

\begin{proof}[Sketch]
We employ the standard technique of renaming subformulas familiar
from \cite{Sco62} and \cite{GKV97}, noting that any formula $\exists y \psi$
is logically equivalent to
$ \exists y \psi^\rightarrow \vee \exists y \psi^\leftarrow \vee
\exists y \psi^\leftrightarrow \vee \exists y \psi^-$.
\end{proof}

The following immediate observation is helpful when showing that a structure is a model of normal form formulas.

\begin{proposition} \label{claim-iff}
Assume $\mathfrak A$ is a $\sigma$-structure and $\Psi$ is a
\fodt-sentence over $\sigma$ in normal form (\ref{f-dwa}).
 Then ${\mathfrak
A}\models \Psi$ if and only if
\begin{enumerate}[(a)]
\item\label{A} for each $a\in A$, for each $\gamma_i$  $(1\leq i\leq m)$ there is a
$\gamma_i$-witness for $a$ in $\mathfrak A$,
\item\label{B} for each $a\in A$, for each $\delta_i$  $(1\leq i\leq \overline{m})$ there is a
$\delta_i$-witness for $a$ in $\mathfrak A$,
\item\label{C} for each $a,b\in A$, $tp^{\mathfrak A}
[a,b]\models \psi_0(x,y)$ and $tp^{\mathfrak A}
[a]\models \psi_0(x,x)$,
\item\label{D} $T^\fA$ is transitive in $\mathfrak A$.
\end{enumerate}
\end{proposition}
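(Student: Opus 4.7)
The plan is to unpack $\fA\models\Psi$ as the conjunction of four independent requirements---one for the universal conjunct $\forall x\forall y\,\psi_0$, one for each $\gamma_i$, one for each $\delta_i$, plus the standing semantic requirement on \fodt\ that $T^\fA$ be transitive---and to show that each of these requirements matches, line by line, one of the conditions (a)--(d). Since $\Psi$ is a conjunction at the top level, both directions of the biconditional follow from the corresponding per-conjunct equivalences, so I would treat the four parts in parallel rather than giving separate $\Rightarrow$ and $\Leftarrow$ arguments.

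For (a) and (b), which concern the $\forall\exists$-conjuncts, I would simply apply the definition: $\fA\models\forall x\exists y\,\psi_i^{d_i}(x,y)$ iff for every $a\in A$ there is some $b\in A$ with $\fA\models\psi_i^{d_i}[a,b]$, which by the definition given just before Definition~\ref{normalform} is exactly the statement that $b$ is a $\gamma_i$-witness for $a$; the argument for $\delta_i$ is identical. I would note in passing that the convention $\psi_i(x,y)\models x\neq y$ for $i\geq 1$ adopted right after Definition~\ref{normalform} ensures the witness is automatically distinct from $a$, so one does not have to separate proper from improper witnesses.

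For (c) I would split on whether the two universal variables are bound to the same element. If $a\neq b$, then the 2-type $tp^\fA[a,b]$ records the truth values of every atom of $\psi_0(x,y)$, so $\fA\models\psi_0[a,b]$ iff $tp^\fA[a,b]\models\psi_0(x,y)$. The diagonal case $a=b$ must be treated via 1-types, since only atoms with a single variable survive when $x$ and $y$ are identified: $\fA\models\psi_0[a,a]$ iff $tp^\fA[a]\models\psi_0(x,x)$. Together these account for the two clauses of (c). Finally, (d) is just the built-in semantic condition defining \fodt, so nothing needs to be proved. The overall proof is routine and presents no real obstacle; the only point requiring any care is the diagonal case in (c), which explains why condition (c) has to be stated as two separate clauses rather than one.
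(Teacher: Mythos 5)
Your proof is correct and is exactly the routine semantic unpacking the paper has in mind: it states this proposition as an ``immediate observation'' and gives no proof at all. Your treatment of the diagonal case in (c) via 1-types is the right way to account for the two clauses, and the rest follows line by line from the definitions of witness, 2-type, and the standing transitivity requirement on $T$ in \fodt.
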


\subsection{A small clique property for \fodt}\label{cliques}

Let $\fA$ be a $\sigma$-structure and $T$ a transitive relation on $A$. A subset $B$ of $A$ is called
$T$-\emph{connected} if $\BBB[B]\subseteq \BBBlr[\fA]$. Maximal
$T$-connected subsets of $A$ are called \emph{cliques}. So, in this paper by a clique we always mean {\em a maximal clique}.  Note that
if $\fA \models \neg Taa$, for some $a\in A$, then
$\{a\}$ is a clique. It is obvious that the set of cliques forms a partition of $A$. If $B$ and $C$ are distinct cliques and $b, c\in A$ are distinct elements, then we write 
\begin{itemize}
	\item $b\mna c$ iff $Tbc$ but not $Tcb$,
	\item $b\mna C$ iff for all $c\in C$: $b\mna c$,
	\item $C \mna b$ iff for all $c\in C$: $c \mna b$,
	\item $B\mna C$ iff for all $b\in B$: $b\mna C$.
\end{itemize}
When it is not ambiguous we simply write $\mn$ instead of $\mna$.
It is routine to show:
\begin{lemma}\label{l:partialorder}
Let $\fA$ be a $\sigma$-structure. The interpretation of  $T$ is  transitive in $\fA$ iff the relation $\mna$ is a partial order on the set of cliques of $\fA$. 
\end{lemma}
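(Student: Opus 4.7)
The plan is to prove both directions of the equivalence by direct unfolding of the definitions of clique, of $\mna$, and of transitivity of $T$; both directions reduce to short case analyses.

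For the forward direction ($T$ transitive $\Rightarrow$ $\mna$ is a strict partial order on cliques), irreflexivity of $\mna$ on cliques is immediate because $\mna$ between cliques is defined only for distinct cliques. For transitivity, suppose $B \mna C$ and $C \mna D$; I would pick arbitrary $b \in B$, $c \in C$, and $d \in D$. From $Tbc$ and $Tcd$, transitivity of $T$ yields $Tbd$; moreover, $Tdb$ is impossible, since $Tdb \wedge Tbc$ would force $Tdc$ by transitivity of $T$, contradicting $\neg Tdc$. Since $b,d$ were arbitrary, $B \mna D$. A separate quick argument ensures $B \neq D$: if $B = D$, then $B \mna C$ and $C \mna B$ together would demand both $Tbc$ and $\neg Tbc$ for any $b \in B$, $c \in C$, which is absurd.

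For the backward direction, suppose $\mna$ is a partial order on the set of cliques. I would take arbitrary $a, b, c \in A$ with $Tab$ and $Tbc$ and derive $Tac$ by case analysis on the cliques $C_a, C_b, C_c$ containing $a, b, c$. If $a, b, c$ all lie in the same clique and $a \neq c$, then $Tac$ follows from the definition of clique, since $\BBB[C_a] \subseteq \BBBlr$ forces $tp^{\fA}[a,c] \in \BBBlr$. If the elements span two or three distinct cliques, the partial-order transitivity of $\mna$ on cliques, combined with the universal quantifier in the definition of $B\mna C$, directly gives $a \mna c$, hence $Tac$.

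The main obstacle lies in the edge cases of the backward direction: when some of $a, b, c$ coincide (for instance $a = c$, so that $Tab \wedge Tba$ must be shown to yield $Taa$), or when a pair such as $\{b, c\}$ is $T$-connected while $b$ and $c$ still belong to distinct cliques. These degenerate configurations require explicit appeals to the maximality of cliques and to the standing assumption that cliques form a partition of $A$, but the resulting arguments remain short and mechanical, so the overall proof stays at the level of routine verification.
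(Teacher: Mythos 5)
The paper gives no proof of this lemma at all (it is dismissed with ``It is routine to show''), so the only question is whether your argument actually works. Your forward direction does: irreflexivity is by fiat, and your derivation of $B\mna D$ from $B\mna C$ and $C\mna D$ via transitivity of $T$ (including the check that $B\neq D$) is correct. Note that essentially the same computation also shows that, when $T$ is transitive, the $T$-direction between two distinct cliques is \emph{uniform}: if one pair $(b,c)\in B\times C$ satisfies $b\mna c$, then all pairs do. This observation is exactly what your backward direction silently relies on, and there it is not available.

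That is the genuine gap. In the backward direction, from $Tab$ and $Tbc$ with $a,b,c$ spanning distinct cliques you obtain the \emph{element-level} facts $a\mna b$ and $b\mna c$, but to invoke the partial-order hypothesis you must first lift these to the \emph{clique-level} facts $C_a\mna C_b$ and $C_b\mna C_c$, and $C_b\mna C_c$ is a universal statement over all pairs in $C_b\times C_c$. Nothing in the hypotheses forces this lift, and in fact the lemma read completely literally is false in this direction: take $A=\{a,b,c\}$ with $Taa,Tbb,Tcc,Tab,Tba,Tbc$ true and all other $T$-atoms false. The cliques are $\{a,b\}$ and $\{c\}$, they partition $A$, and $\mna$ on cliques is the empty relation (since $a\not\mna c$), hence a partial order; yet $Tab\wedge Tbc\wedge\neg Tac$, so $T$ is not transitive. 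The degenerate case you flag as ``short and mechanical'' is likewise not closable from the stated hypotheses: $Tab\wedge Tba\wedge\neg Taa$ is consistent with $\{a,b\}$ being a clique and with $\mna$ being a partial order, because $T$-connectedness constrains only the atoms $Txy$ and $Tyx$ of a $2$-type, never $Txx$. What is missing is the implicit standing assumption under which the paper actually applies the lemma (e.g.\ inside Claim~\ref{claim-single-dupl}, where all connection types between cliques are transferred wholesale from a transitive structure): between any two distinct cliques all realized $2$-types lie in a single class $\BBBr$, $\BBBl$ or $\BBBn$, and the $1$-types within cliques are inherited from a transitive structure. With that uniformity hypothesis made explicit your case analysis goes through; without it, the step ``directly gives $a\mna c$'' fails.
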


Now we establish the {\em small clique property} for \fodt.

\begin{lemma} \label{lma:theorem_si} 
	Let $\Psi$ be a satisfiable \fodt-sentence in normal form, over a
signature $\sigma$. Then there exists a model of $\Psi$ in which
the size of each clique is bounded exponentially in $|\sigma|$.
\end{lemma}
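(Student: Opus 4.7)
My plan is to take any model $\fA$ of $\Psi$ and shrink each of its cliques to an exponentially bounded subset, obtaining a new model $\fA^{*}$ of $\Psi$. The shrinking inside each clique follows the classical ``king-and-court'' construction of Gr\"adel, Kolaitis and Vardi~\cite{GKV97} used for the small-model theorem of \fod.

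For every clique $C$ of $\fA$, I construct $C^{*} \subseteq C$ as follows. For each $1$-type $\alpha \in \AAA^{\fA}[C]$, pick a king $k_{\alpha} \in C$ of type $\alpha$; and for each king $k_{\alpha}$ and each conjunct $\delta_i = \forall x \exists y\, \psi_i^{\leftrightarrow}(x,y)$ ($1 \le i \le \overline{m}$), pick a $\delta_i$-witness $w_{\alpha, i} \in C$ for $k_{\alpha}$, which exists by Proposition~\ref{claim-iff}. The resulting set $C^{*} = \{k_{\alpha}\} \cup \{w_{\alpha, i}\}$ has size at most $|\AAA| \cdot (\overline{m} + 1) = 2^{O(|\sigma|)}$. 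I then take $\fA^{*}$ to be the substructure of $\fA$ induced by $\bigcup_{C} C^{*}$, so that transitivity of $T$ (cf.~Lemma~\ref{l:partialorder}) and satisfaction of $\psi_0$ are inherited automatically, and each $C^{*}$ remains a clique of $\fA^{*}$. By Proposition~\ref{claim-iff} it remains to verify the existence of witnesses: the kings have their $\delta_i$-witnesses inside $C^{*}$ by construction, and the courtiers inherit their own $\delta_j$-witnesses via a short closure step which adds only a bounded number of further elements, since no new $1$-types ever enter $C^{*}$.

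The main obstacle will be the $\gamma_i$-witnesses of elements of $C^{*}$, which necessarily lie outside the clique and may disappear when the other cliques are shrunk. I would coordinate the selection across cliques so that, for each $c \in C^{*}$ and each $\gamma_i$-conjunct, the original $\gamma_i$-witness of $c$ in $\fA$ is also included in the shrunken version of its clique. The delicate point is bounding $|C^{*}|$ under this coordination; since $\Psi$ only requires the existence of a witness satisfying $\psi_i^{d_i}$ (not a specific $2$-type), elements of different shrunken cliques can share the same witness in $C$ whenever a suitable $2$-type is realized, giving a bound of the form $|\BBB|\cdot m$ on the externally demanded part of $C^{*}$, which is still $2^{O(|\sigma|)}$. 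If this sharing argument becomes awkward, a cleaner alternative is to rebuild $\fA^{*}$ on a fresh universe, keeping the partial order on cliques from $\fA$ (so transitivity of $T$ follows from Lemma~\ref{l:partialorder}) and choosing inter-clique $2$-types so that each king has $\gamma_i$-witnesses among the kings of neighboring cliques, using the fact that the required witness $2$-types are already realized somewhere in $\fA$ between the relevant $1$-types.
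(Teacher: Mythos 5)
There is a genuine gap at the point you yourself flag as ``the main obstacle,'' and it is not repairable along the induced-substructure route. If $\fA^{*}$ is an induced substructure of $\fA$, all $2$-types are frozen, so a witness for one element is in general not a witness for another: an element $a$ outside the clique $C$ needs some $d\in C$ with $tp^{\fA}[a,d]\models\psi_i^{d_i}$, and distinct outside elements $a_1,a_2,\dots$ (of which there may be infinitely many, one batch from each of infinitely many other shrunken cliques) may each require a \emph{different} $d\in C$, since $tp^{\fA}[a_j,d]$ depends on both endpoints. Hence no exponentially bounded subset $C^{*}\subseteq C$ can serve all external demands, and the proposed ``sharing'' bound of $|\BBB|\cdot m$ does not materialize. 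The same freezing of $2$-types undermines the internal closure step: courtiers' $\delta_j$-witnesses may form an infinite chain of fresh elements of already-present $1$-types, so ``no new $1$-types enter $C^{*}$'' does not bound the closure.

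What the paper does instead is precisely to \emph{rebuild} the clique rather than select inside it: it invokes the replacement lemma of Kiero\'nski and Otto (Lemma~\ref{lem:ssp}), which produces a fresh bounded set $B'$ with redefined $2$-types satisfying, crucially, $\BBB[a,B']\supseteq\BBB[a,B]$ for \emph{every} outside element $a$ (property (v)) and $\BBB[b',A']\supseteq\BBB[b,A]$ for each new element (property (iv)); these two containments are exactly what guarantee that no witness is lost on either side. Lemma~\ref{lem:sc} then adds the transitivity bookkeeping by marking outside elements with fresh unary predicates $\Ur,\Ul,\Un$ recording their order-relation to the clique, so that the replacement is again a clique sitting at the same position in the partial order of cliques, after which one iterates over all cliques and takes a limit. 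Your ``fresh universe'' alternative points in this direction, but as sketched it neither establishes the containment $\BBB[a,B']\supseteq\BBB[a,B]$ nor resolves the conflicts that arise when several witness requirements would impose different $2$-types on the same pair (the reason the cited lemma works with several copies of a representative set rather than with single kings). So the plan correctly identifies the difficulties but does not yet contain the idea --- a non-selective, $2$-type-redefining local replacement with the two containment properties --- that actually resolves them.
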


We first show how to replace a single clique in models of
normal-form \fodt-sentences by an equivalent small one. The idea
is not new (cf.~\cite{GKV97}). It was used in \cite{ST04} to show
that $T$-cliques in models of \gfdtg\ can be replaced by
appropriate small structures called $T$-petals (Lemma 17).
Later, in~\cite{KO12} it was proved that for any structure $\fA$
and its substructure $\fB$, one may replace $\fB$ by an
alternative structure $\fB'$ of a bounded size in such a way that
the obtained structure $\fA'$ and the original structure $\fA$
satisfy exactly the same normal form \FOt{} formulas. Below we
present a precise statement of the latter lemma.

\begin{lemma}[\cite{KO12}, Prop.~4]
\label{lem:ssp} Let $\fA$ be a $\tau$-structure, $\fB = \fA\restr
B$ for some $B \subseteq A$, $\overline{B}:=A\setminus B$. Then there is a
$\tau$-structure $\fA'$ with universe $A' = B' \,\dot{\cup}\, \overline{B}$
for some set $B'$ of size bounded polynomially in $|\BBB[\fA]|$
such that
\begin{enumerate}[(i)]
\item
$\fA'\restr \overline{B} = \fA \restr \overline{B}$.
\item 
$\AAA[B'] = \AAA[B]$, whence $\AAA[\fA'] = \AAA[\fA]$;
\item 
$\BBB[B'] = \BBB[B]$ and $\BBB[B', \overline{B}] = \BBB[B, \overline{B}]$, whence
$\BBB[\fA'] = \BBB[\fA]$;
\item 
for each $b' \in B'$ there is some $b \in B$ with $\BBB[b',A']
\supseteq \BBB[b,A]$;
\item 
for each $a \in \overline{B}$: $\BBB[a,B'] \supseteq \BBB[a,B]$.
\end{enumerate}
\end{lemma}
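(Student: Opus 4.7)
The plan is to adapt the classical \emph{king-and-court} construction used to establish the \FOt{} small-model property \cite{GKV97}, modifying it so that the interface with $\overline{B}$ is preserved. First, for each 1-type $\alpha \in \AAA[B]$, fix a representative $r_\alpha \in B$ realizing $\alpha$, and for each $\beta \in \BBB[B]$, fix a pair $(p_\beta, q_\beta) \in B^2$ realizing $\beta$. Then build $B'$ by including a bounded number of ``kings'' of each 1-type $\alpha$ (to play the role of $r_\alpha$) together with, for each $\beta \in \BBB[B]$, a ``pair-witness'' consisting of two elements whose mutual 2-type will be set to $\beta$. The number of kings per 1-type, and the total size of $B'$, will be polynomial in $|\BBB[\fA]|$.

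The 1-type of each $b' \in B'$ is prescribed by its role and lies in $\AAA[B]$, which immediately secures (ii). Boundary 2-types are assigned in two phases. By default, for a king $b'$ of 1-type $\alpha$ and $a \in \overline{B}$, set $tp^{\fA'}[b', a] := tp^\fA[r_\alpha, a]$; any king that retains this default then satisfies $\BBB[b', \overline{B}] = \BBB[r_\alpha, \overline{B}]$, establishing the boundary half of (iv). To secure (v), one iterates through each $a \in \overline{B}$ and each 2-type $\beta \in \BBB[a, B]$ not yet realized in $\fA'$, reserves a fresh king $b'$ of 1-type $\beta\obciety y$, and overrides $tp^{\fA'}[b', a] := \beta$. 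Internal 2-types within $B'$ are then assigned so as to realize every $\beta \in \BBB[B]$ (via the pair-witnesses) and so that, for each king $b'$ of 1-type $\alpha$, $b'$ realizes with some element of $B'$ every 2-type that $r_\alpha$ realizes inside $B$ but not with any element of $\overline{B}$. This secures the internal half of (iv) with $b := r_\alpha$.

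Properties (i) and (iii) then follow routinely: $\overline{B}$ is untouched by construction, and the 2-types realized in $\fA'$ come exclusively from $\BBB[B]$ (internally) and from $\BBB[B, \overline{B}]$ (across), matching those of $\fA$. The main obstacle is the delicate interaction between the two phases of boundary assignment and property (iv): each override at a pair $(b', a)$ changes one default connection of $b'$, so the overridden king may itself fail to realize the lost 2-type $tp^\fA[r_\alpha, a]$ on the $\overline{B}$-side. To compensate, one must either exhibit another element of $\overline{B}$ witnessing this 2-type with $r_\alpha$ (so that a distinct default king still covers it), or reserve at least one ``untouched'' king of each 1-type to preserve the full boundary profile. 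Bounding the required number of kings polynomially in $|\BBB[\fA]|$ while tracking these dependencies is the core technical content of the proof.
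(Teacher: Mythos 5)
First, a point of reference: the paper does not prove this statement at all --- it is imported verbatim as Proposition~4 of \cite{KO12}, so there is no in-paper argument to compare yours against. The intended proof is the ``court'' construction of Kiero\'nski and Otto (a refinement of the construction in \cite{GKV97}), and your sketch is headed in that general direction.

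Nevertheless there is a genuine gap, and you name it yourself in your final paragraph without resolving it: the interaction between the overrides needed for (v) and the domination condition (iv). Condition (iv) is universally quantified over $b'\in B'$, so \emph{every} king that has received overrides must still dominate some $b\in B$, and neither of your proposed repairs achieves this. Exhibiting another $a''\in\overline{B}$ with $tp^{\fA}[r_\alpha,a'']=tp^{\fA}[r_\alpha,a]$ is not always possible (the lost $2$-type may be realized by $r_\alpha$ at the single position $a$ only), and keeping one untouched king per $1$-type secures (iv) only for that untouched king, not for the overridden ones. Worse, a single king must in general receive overrides from infinitely many distinct $a\in\overline{B}$ (there are only polynomially many kings but $\overline{B}$ may be infinite, so ``reserve a fresh king'' can only mean fresh relative to the current $a$), and in the worst case \emph{all} of its default boundary $2$-types get overwritten, leaving a profile with no relation to $\BBB[r_\alpha,A]$. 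The standard fix is structurally different: each court element is tied from the outset to a single $2$-type $\beta$ and a role model $b_\beta\in B$ realizing $\beta$, it only ever receives $\beta$ itself as an override (so overrides never take it outside $\BBB[b_\beta,A]$), and a constant number of copies with a careful cyclic assignment guarantees that every $2$-type of $b_\beta$ survives at some non-overridden position or is supplied internally. Two secondary omissions: the internal assignment must give \emph{every} pair in $B'$ a $2$-type drawn from $\BBB[B]$, which in particular forces uniquely realized $1$-types to keep a unique representative in $B'$ (otherwise (iii) fails), and the pair-witnesses $(p_\beta,q_\beta)$ must themselves satisfy (iv). As it stands, the proposal is an honest outline of the right strategy with the decisive combinatorial step missing.
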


The above lemma applies to arbitrary structures.  We strengthen the lemma to structures with a distinguished transitive relation and show how to replace a single clique in models of normal-form \fodt-sentences by an equivalent small one. 

\begin{lemma}\label{lem:sc}
Let $\fA$ be a $\sigma$-structure, $B \subseteq A$ be a clique in
$\fA$, $\overline{B}:=A\setminus B$.
Then there is a $\sigma$-structure $\fA'$ with universe
$A'= B' \,\dot{\cup}\, \overline{B}$ for some clique
$B'$ 
with $|B'|$ bounded exponentially in $|\sigma|$, such that
(i)-(iv) are as in Lemma \ref{lem:ssp}, and (v) is strengthened
to:
\begin{enumerate}[(i)]
\item[(v')] 
for each $a \in \overline{B}$, $\BBB[a,B'] \supseteq \BBB[a,B]$, and\\
if $\BBB[a,B]\subseteq \BBB^d[\fA]$, then $\BBB[a,B'] \subseteq
\BBB^d[\fA']$, for every $d\in \{^\rightarrow, ^\leftarrow,^-\}$.
\end{enumerate}
\end{lemma}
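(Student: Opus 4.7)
The plan is to refine the construction underlying Lemma~\ref{lem:ssp}, exploiting two additional features of the present situation: $B$ is a clique, and $T$ must remain transitive in the replacement structure $\fA'$. The key preliminary observation is that since $B$ is a \emph{maximal} clique in $\fA$ and $T^\fA$ is transitive, for every $a \in \overline{B}$ there is a unique direction $d_a \in \{^\rightarrow, ^\leftarrow, ^-\}$ with $\BBB[a, B] \subseteq \BBB^{d_a}[\fA]$: indeed, if $Tab_1$ held but $\neg Tab_2$ for some $b_1, b_2 \in B$, then since $Tb_1 b_2$ (by $T$-connectedness of $B$), transitivity would force $Tab_2$, a contradiction; the analogous argument handles the direction $Tba$, and $\BBBlr$ is excluded by the maximality of $B$. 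Hence the hypothesis of clause~(v') is automatically satisfied, and it suffices to preserve this direction when passing to $\fA'$.

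I will build $\fA'$ via a copy-based refinement of the construction of Lemma~\ref{lem:ssp}: each $b' \in B'$ is taken as a copy of some $b \in B$, sharing the $1$-type of $b$ and, crucially, inheriting all of its $2$-types with $\overline{B}$ from $b$, i.e.~$tp^{\fA'}[a, b'] := tp^{\fA}[a, b]$ for each $a \in \overline{B}$. The internal $2$-types within $B'$ are selected among those already realized in $B$, with enough copies of each representative to realize all of $\BBB[B]$. This choice gives the inclusion $\BBB[a, B'] \subseteq \BBB[a, B]$ by construction, and, combined with clause~(v) of Lemma~\ref{lem:ssp}, yields the equality $\BBB[a, B'] = \BBB[a, B]$ for every $a \in \overline{B}$. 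Clause~(v') is then immediate from the preliminary observation, while clauses~(i)--(iv) are direct consequences of the same copy-based construction, exactly as in Lemma~\ref{lem:ssp}.

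It remains to check that $B'$ is a maximal clique in $\fA'$ and that $T^{\fA'}$ is transitive. Internal $T$-connectivity of $B'$ follows from $\BBB[B'] = \BBB[B] \subseteq \BBBlr[\fA]$, and maximality follows from the equality $\BBB[a, B'] = \BBB[a, B]$ together with maximality of $B$ in $\fA$. For transitivity, a case analysis on triples suffices: triples inside $\overline{B}$ are unchanged from $\fA$; triples inside $B'$ are trivial since $B'$ is $T$-connected both ways; mixed triples are handled using the uniform direction $d_a$ together with the fact that each external $2$-type in $\fA'$ is copied verbatim from $\fA$, so any would-be transitivity violation in $\fA'$ projects to one in $\fA$. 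The main obstacle will be establishing the equality $\BBB[a, B'] = \BBB[a, B]$: clause~(v) of Lemma~\ref{lem:ssp} in isolation only guarantees the inclusion $\supseteq$, so a literal black-box appeal is insufficient---one must re-examine the construction to ensure that no extra $2$-types are introduced which could disturb the uniform direction and thereby break transitivity.
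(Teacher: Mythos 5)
Your preliminary observation is correct, and it is indeed the reason the lemma makes sense: maximality of the clique $B$ together with transitivity of $T$ forces every $a\in\overline B$ into a unique direction class $d_a\in\{^\rightarrow,^\leftarrow,^-\}$ relative to $B$. The gap is in your construction of $B'$. You require each $b'\in B'$ to inherit its external $2$-types \emph{verbatim from a single original} $b\in B$ (i.e.\ $tp^{\fA'}[a,b']:=tp^{\fA}[a,b]$ for all $a\in\overline B$) and then invoke clause~(v) of Lemma~\ref{lem:ssp} to get $\BBB[a,B']\supseteq\BBB[a,B]$. These two requirements are incompatible once $|B|$ exceeds the bound on $|B'|$. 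Concretely: let $B=\{b_1,b_2,\dots\}$ be a large clique, $\overline B=\{a_1,a_2,\dots\}$, and let $tp^{\fA}[a_i,b_j]$ be $\beta_1$ if $i=j$ and $\beta_0$ otherwise, where $\beta_0\neq\beta_1$ are two $2$-types in $\BBBr$ with the same $1$-type restrictions (possible as soon as $\sigma$ contains a binary predicate other than $T$); this is transitively consistent. Then $\BBB[a_i,B]=\{\beta_0,\beta_1\}$ for every $i$, but if $B'$ consists of verbatim copies of $b_{j_1},\dots,b_{j_N}$, then every $a_i$ with $i\notin\{j_1,\dots,j_N\}$ has $\BBB[a_i,B']=\{\beta_0\}$, so clause~(v) fails --- and with it the witness preservation that the proof of Lemma~\ref{lma:theorem_si} relies on. Hence the equality $\BBB[a,B']=\BBB[a,B]$ you aim for cannot be obtained by verbatim copying: the construction behind Lemma~\ref{lem:ssp} necessarily assigns the $2$-type of each pair $(a,b')$ individually rather than inheriting a whole column from one original, which is exactly why its statement only promises the inclusion $\supseteq$. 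Your transitivity argument for mixed triples, which leans on ``each external $2$-type is copied verbatim,'' would then also have to be redone (the uniform direction $d_a$ does still rescue it).

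The paper gets (v') without opening the black box and without any per-element equality. It expands $\fA$ by fresh unary predicates $\Ur,\Ul,\Un$ (and $\Ulr$ on $B$) recording the direction class of each element of $\overline B$ relative to $B$ --- well defined precisely by your preliminary observation --- applies Lemma~\ref{lem:ssp} to the expanded structure, and then drops the new predicates. Clause~(iii), $\BBB[B',\overline B]=\BBB[B,\overline B]$, then forces every $2$-type realized between $a$ and $B'$ to coincide with one realized in $\fA$ between $B$ and some $c\in\overline B$ carrying the same unary marks as $a$, hence lying in the same class $\BBB^{d_a}$; this yields (v'), and the same uniformity gives maximality of $B'$ and transitivity. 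If you want to keep a hands-on route, replace ``verbatim copy of one original'' by a per-pair assignment $tp^{\fA'}[a,b']\in\BBB[a,B]$ arranged so that each $a$ still covers all of $\BBB[a,B]$; but at that point you are reproving Lemma~\ref{lem:ssp}, including its delicate internal $2$-type assignment, rather than refining it.
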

\begin{proof}
If $|B|=1$, then we simply put $B'=B$ and we are done. Otherwise,
let $\Ul$, $\Ur$, $\Ulr$ and $\Un$ be fresh unary predicates.  Let
$\bar{\fA}$ be the expansion of $\fA$ obtained by setting:
\begin{itemize}
	\item[--] $\Ulr$ true for all elements of $B$, 
	\item[--] $\Ur$ true for all elements $a\in
\overline{B}$ such that $a\mna B$,  
	\item[--] $\Ul$ true for all
elements $a\in \overline{B}$ with $B\mna a$
and 
	\item[--] $\Un$ true
for all elements $a\in \overline{B}$ with $a\sima B$.
\end{itemize}
Let the result of the application of Lemma \ref{lem:ssp} to
$\bar{\fA}$ and the substructure induced by $B$ be a structure
$\bar{\fA}'$, in which $B'$ is the replacement of $B$. By $\fA'$
we denote the restriction of $\bar{\fA}'$ to the original
signature, i.e.~the structure obtained from $\bar{\fA}'$ by
dropping the interpretations of $\Ul, \Ur, \Ulr$ and $\Un$. Then $\fA'$
is a structure with universe $\overline{B} \cup B'$ and $|B'|$ is
exponentially bounded in the signature.
\end{proof}

\begin{proof}[Proof of Lemma~\ref{lma:theorem_si}] We first argue
that the structure obtained as an application of
Lemma~\ref{lem:sc} satisfies the same normal form sentences over
$\sigma$ as the original structure. Let $\Psi$ be a sentence in
normal form over $\sigma$, $\fA \models \Psi$, $B \subseteq
A$ be a clique, 
$\overline{B}=A\setminus B$, and $\fA'$ with universe $A'= B' \,\dot{\cup}\,
\overline{B}$ be a result of application of Lemma~\ref{lem:sc} to $\fA$.

Observe that the formula $\forall x\forall y \psi_0$ is satisfied in
$\fA'$ thanks to property (iii) of Lemma~\ref{lem:sc}. Now, for
any $c \in \overline{B}$, properties  (i) and (v) guarantee that $c$ has all
required witnesses. For any $b \in B'$, the same thing is
guaranteed by property (iv).
Properties (iii) and (v') guarantee that the obtained substructure
is a maximal $T$-connected set, so indeed it is a clique in the
new model.

Let $\fA$ be a countable $\sigma$-structure. Let $I_1$, $I_2,
\dots$ be a (possibly infinite) sequence of all cliques in 
$\fA$. Let $\fA_0=\fA$ and $\fA_{j+1}$ be the structure $\fA_j$
modified by replacing clique $I_{j+1}$ by its small replacement
$I'_{j+1}$ as described above. We define the limit structure
$\fA_\infty$ with the universe $\bigcup_{k=1}^\infty I'_k$ such that
for all $k, l$ the connections between $I'_k$ and $I'_l$ are
defined in the same way as in $\fA_{max(k, l)}$. It is easy to see
that $\fA_\infty$ satisfies $\Psi$ and all cliques in $\fA_\infty$
are bounded exponentially in $|\sigma|$.
\end{proof}

\subsection{Fragments of \fodt}
In this paper we consider two restrictions of \fodt{} depending on how the existential quantifiers are used; no restrictions are imposed on using universal quantifiers.

The fragment {\em with transitive witnesses}, \fodeg{}, consists of the formulas of \fodt{} where, when written in negation normal form, existential quantifiers are 'guarded' by transitive atoms, i.e.~they are applied to formulas with two free variables only of the form $\xi(x,y)\wedge \psi$, where  $\xi(x,y)$ is one of the conjunctions: $Txy\wedge Tyx$,  $Txy\wedge \neg Tyx$, or $\neg Txy\wedge Tyx$, and $\psi\in \fodeg$. 

Similarly, the fragment {\em with free witnesses}, \fodeng{}, consists of these formulas where, when written in negation normal form, 
existential quantifiers are applied to formulas with two free variables only of the form $\neg Txy\wedge \neg Tyx\wedge \psi$ with $\psi\in \fodeng$. We will return to this fragment in Section \ref{sec:nontr-wit}.

It is clear that in the normal form formula obtained by Lemma~\ref{lemma_normalform} for an \fodeg-formula all existential conjuncts have the form $\forall x\exists y\, \psi_i^{d_i}(x,y)$ with $d_i\in\{^\rightarrow,
^\leftarrow, ^\leftrightarrow\}$. Similarly, after transforming an \fodeng-formula into a normal form all existential conjuncts have the form $\forall x\exists y\, \psi_i^{-}(x,y)$.

\section{Narrow models of sentences of \fodeg{}}\label{sec:narrow}

In this section we assume $\Psi$ is an \fodeg-sentence in normal form, 
$\Psi = \forall x \forall y\, \psi_0 \wedge   \bigwedge_{i=1}^m \gamma_i
\wedge \bigwedge_{i=1}^{\overline{m}} \delta_i$,
where each $\gamma_i=\forall x\exists y\, \psi_i^{d_i}(x,y)$ with $d_i\in\{^\rightarrow,
^\leftarrow\}$, and each $\delta_i=\forall x\exists y\,
\psi_i^{^\leftrightarrow}(x,y)$. We also assume that the signature $\sigma$ consists of all relation symbols that appear in $\Psi$. 
Since \fodt{} satisfies the small clique property (Lemma \ref{lem:sc}) we additionally assume that whenever  $\fA\models\Psi$ then the size of each clique in $\fA$ is bounded exponentially in $|\sigma|$. 

Crucial to our decidability proof for \fodeg is the following property: any 
satisfiable sentence has a {\em narrow} model,
i.e.~either a finite model of bounded cardinality or an infinite model  whose universe can be
partitioned into segments (i.e.~sets of cliques) $S_0,
S_1,\ldots$, each of doubly exponential size, such that every
element from $\bigcup_{k=0}^{j-1} S_k$ requiring a witness  outside
its clique  has the witness either in $S_0$ or in $S_j$
(cf.~Definition~\ref{def-narrow}).

To prove existence of narrow models, we first introduce new technical notions and make some useful
observations. Then we show that from any narrow
model we can build a {\em regular model}
where the connection types between segments can be appropriately simplified. This finally leads to the algorithm deciding satisfiability for \fodeg{}. 

In the proof we employ the following property concerning extensions of partial orders that can be proved in a standard fashion. If $B$ and $C$ are distinct cliques and $b, c\in A$ are distinct elements, then we write 
\begin{itemize}
\item $b\sima c$ iff neither $b\mna c$ nor $c\mna b$,
\item $b\sima C$ iff for all $c\in C$,  $ b \sima c$,
\item $B\sima C$ iff  neither $B\mna C$ nor $C\mna B$.
\end{itemize}
In such a case we  say that elements $b$ and $c$ are {\em incomparable} (similarly for $B$ and $C$). When it is not ambiguous we simply omit the subscript $\fA$.

\begin{proposition}\label{prop:partial-extend}
	Let $R$ be a (strict) partial order on a set $A$, $a,b\in A$ and $a\sim b$. Let $C=\{c\in A: c R a \text{ or } c=a\}$ and $D=\{d\in A: b R d \text{ or } d=b\}$. 
	Then $R'=R \cup (C\times D)$ is a partial order on $A$, and it is a minimal partial order extending $R$ to contain $(a,b)$. 
\end{proposition}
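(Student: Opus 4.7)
The plan is to verify separately that $R'$ is a strict partial order, that it contains $(a,b)$, and that it is minimal with these properties among orders extending $R$. Containment of $(a,b)$ is immediate since $a\in C$ and $b\in D$ by the equality clauses in their definitions.

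The crucial observation, used both in the irreflexivity argument and in the hardest case of transitivity, is that $C\cap D=\emptyset$. Indeed, suppose some $y$ lay in both: then $(yRa\vee y=a)\wedge(bRy\vee y=b)$, and three of the four combinations immediately force $bRa$ (by transitivity of $R$ when both disjuncts are strict, and by direct substitution when exactly one is an equality), contradicting $a\sim b$. The remaining case $y=a=b$ is ruled out by $a\ne b$, which is implicit in $a\sim b$. Irreflexivity of $R'$ follows at once: a pair $(x,x)\in R'$ either violates irreflexivity of $R$, or lies in $C\times D$ and thereby places $x$ in $C\cap D$.

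For transitivity of $R'$, I would do a four-way case split on whether $(x,y)$ and $(y,z)$ come from $R$ or from $C\times D$. The key structural facts are that $C$ is closed downward under $R$ (if $y\in C$ and $xRy$ then $x\in C$, using transitivity of $R$) and, symmetrically, that $D$ is closed upward under $R$. These handle the two mixed cases: composing an $R$-pair with a $C\times D$-pair yields another pair in $C\times D$. The case of composing two $C\times D$-pairs would require the intermediate element in $C\cap D$ and is therefore vacuous, while the case of composing two $R$-pairs is handled by transitivity of $R$ directly.

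For minimality, let $R''$ be any partial order on $A$ containing $R\cup\{(a,b)\}$, and pick $(x,y)\in C\times D$. From $x\in C$ we have $xR''a$ or $x=a$; composed with $aR''b$ (which is in $R''$ by assumption) this yields $xR''b$ in either subcase. Similarly, $y\in D$ gives $bR''y$ or $y=b$, and a final application of transitivity of $R''$ gives $xR''y$. Hence $C\times D\subseteq R''$, and together with $R\subseteq R''$ this yields $R'\subseteq R''$. The only real difficulty is organizational: the individual steps are routine once the emptiness of $C\cap D$ is isolated.
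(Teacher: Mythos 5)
Your proof is correct. The paper in fact omits a proof of this proposition entirely (it merely remarks that it "can be proved in a standard fashion"), and your argument — isolating $C\cap D=\emptyset$ as the key fact, using downward closure of $C$ and upward closure of $D$ for the mixed transitivity cases, and deriving minimality by two applications of transitivity of $R''$ — is precisely the standard argument the authors had in mind, carried out completely and without gaps.
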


\subsection{Splices}

In this section we analyse properties of models of $\Psi$ on the
level of cliques rather than individual elements.  The key technical argument of this section  is Corollary~\ref{wniosek-dla-wielu} saying, roughly speaking, that if
$\fA\models\Psi$ and elements of a finite subset $F$ of the
universe $A$ have their $\gamma_i$-witnesses in several
,,similar'' cliques (similar cliques realize the same {\em
splice}, see Definition~\ref{def-splice} below), then it is possible to
extend $\fA$ by  one such clique, where all the elements of $F$
have their $\gamma_i$-witnesses.

First, we need to introduce some new notions and notation. 

Let $\fA$ be a $\sigma$-structure. For
$a\in A$ denote by $Cl^{\fA}(a)$ the unique clique $C\subseteq A$
with $a\in C$. For $F\subseteq A,$ denote
$Cl^{\fA}(F)=\{Cl^{\fA}(a):\, a\in F\}$ and finally,
$Cl^{\fA}=Cl^{\fA}(A)$.

\begin{definition}\label{def-splice} Let $\fA$ be a $\sigma$-structure and $B\in Cl^{\fA}$.  
	An {\em $\fA$-splice} is a
	triple $\spl^{\mathfrak A}(B)=\langle tp^\fA[B], \In^{\fA}(B),
	\Out^{\fA}(B)\rangle$, where 
\begin{itemize}
	\item $tp^\fA[B]$ is the isomorphism type of the substructure induced by $B$,
	\item
	$\In^{\fA}(B)\stackrel{def}{=}\{tp^{\fA}[a]: a\in A\mbox{ and }
	a\mna B\},$
	\item
	$\Out^{\fA}(B)\stackrel{def}{=}\{tp^{\fA}[a]: a\in A\mbox{ and }
	B\mna a \}.$
\end{itemize}

	 Denote 
	$\SP^{\mathfrak A}$ as the set of all $\fA$-splices,
	$$\SP^{\fA}=\{\spl^{\fA}(B): B\in Cl^{\fA}\}.$$
\end{definition}
Note that the number of types of cliques is exponential w.r.t. the bound on the size of each clique in $\fA$, and the number of subsets of $\AAA$ is doubly exponential in $|\sigma|$. Hence, in any structure with the small clique property 
given by Lemma \ref{lma:theorem_si}, the number $s=|Sp^{\fA}|$ is bounded 
doubly exponentially in $|\sigma|$.  Define
\begin{equation}
{\mathbb M}= (m+1)\cdot s^2\cdot h,\label{def:M}
\end{equation}
where $m$ is the number of $\gamma_i$s in $\Psi$ and $h$ is the bound on the size of each
clique in $\fA$ given by Lemma~\ref{lma:theorem_si}.

In any $\sigma$-structure we distinguish the set ${\mathbb K}(\fA)\subseteq Cl^\fA$ of cliques with unique splices:
$${\mathbb K}(\fA)=\{B\in Cl^{\fA}\!: \mbox{ there is no }C\in  Cl^{\fA}\mbox{ with } \spl^\fA(B)\!=\!\spl^\fA(C) \mbox{ and } B\!\neq\! C\}$$
\noindent and
the corresponding subset ${K}(\fA)\subseteq A$ consisting of 
elements of the distinguished cliques\footnote{Readers familiar with the classical papers on \fod{} might think that ${K}(\fA)$ is a  {\em royal} part of the model (cf. for example \cite{Mor75} or \cite{GKV97}).}:
$${K}(\fA)=\bigcup_{B\in{\mathbb K}(\fA)}B.$$

For every conjunct $\gamma_i$ of $\Psi$ and for every $a\in A$ we define
${W}_i^{\fA}(a)$ as the set of all proper $\gamma_i$-witnesses for
$ a$ in a structure $\mathfrak A:$
$${W}_i^{\fA}(a) \stackrel{def}{=} \{b\in A: \fA
\models \psi_i^{d_i}[a,b]\mbox{ and } a\neq b\}.$$

As announced above, our goal in this section is to show existence of narrow models as defined below.

\begin{definition}\label{def-narrow}
	A model $\fA$ of $\Psi$ is  {\em narrow} if $A= K(\fA)$ or there is an infinite
	partition $P_A=\{S_0, S_1,\ldots\}$ of the universe $A$ such that $K(\fA) \subseteq S_{0}$ and for every $j\geq 0$:
	\begin{enumerate}
		\item $|S_j|\leq {\mathbb M}$, 
		\item  for every $a\in \bigcup_{k=0}^j S_k$
		and for every $\gamma_i\in \Psi,$
		
		\hspace{0,5cm}if ${W}_i^{\fA}(a)\cap S_0=\emptyset$, then
		${W}_i^{\fA}(a)\cap S_{j+1}\neq \emptyset.$
	\end{enumerate}
\end{definition}

\begin{lemma}\label{lem-narrow}
	Every 
	satisfiable \fodeg-sentence $\Psi$ has a narrow model.
\end{lemma}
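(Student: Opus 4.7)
The plan is to build the narrow model by iterated extension from an arbitrary model $\fA \models \Psi$, which by Lemma~\ref{lma:theorem_si} may be assumed to have all cliques of size at most $h$. The base case $A = K(\fA)$ gives a finite narrow model immediately (since $|{\mathbb K}(\fA)| \leq s$ forces $|K(\fA)| \leq s \cdot h \leq \mathbb{M}$), so I focus on the case $A \neq K(\fA)$ where an infinite narrow $\fA^*$ must be constructed. I would first define $S_0$ as $K(\fA)$ augmented by one representative clique from $\fA$ for every splice in $\SP^\fA$ not already realized inside $K(\fA)$. This ensures $|S_0| \leq s \cdot h \leq \mathbb{M}$ and, crucially, that every splice that ever appears in $\fA^*$ is already realized in $S_0$, which will yield $K(\fA^*) \subseteq S_0$ automatically.

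The inductive step constructs $S_{j+1}$ from $S_0, \ldots, S_j$ (all disjoint copies of substructures of $\fA$ with atomic facts inherited from $\fA$) under the invariant that every element of $\bigcup_{k < j} S_k$ already has all its $\gamma_i$-witnesses in $\bigcup_{k \leq j} S_k$. I collect the outstanding demands — the pairs $(a,i)$ with $a \in \bigcup_{k \leq j} S_k$, $1 \leq i \leq m$, and $W_i^\fA(a) \cap S_0 = \emptyset$ — and classify each demand by the triple consisting of the splice of $a$'s clique in $\fA$, the splice of a suitable witness-clique of $a$ in $\fA$, and the conjunct index $i$. At most $(m+1) \cdot s^2$ classes arise. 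By Corollary~\ref{wniosek-dla-wielu}, one fresh clique of the appropriate target splice, of size at most $h$, suffices to provide a $\gamma_i$-witness for every demander in a single class simultaneously. Taking $S_{j+1}$ to be the disjoint union of these cliques yields $|S_{j+1}| \leq (m+1) \cdot s^2 \cdot h = \mathbb{M}$.

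For verification I would invoke Proposition~\ref{claim-iff}. The universal clause $\psi_0$ holds because every 2-type realized in $\fA^* = \bigcup_{j \geq 0} S_j$ is already realized in $\fA$ (cliques are copied verbatim and cross-clique 2-types are sampled from $\fA$). The $\gamma_i$-witnesses are supplied by construction, either from $S_0$ (when $W_i^\fA(a) \cap S_0 \neq \emptyset$) or from the immediately following segment (when it is empty). The $\delta_i$-witnesses, being inside-clique, are inherited from $\fA$. Transitivity of $T$ is preserved at each extension via Proposition~\ref{prop:partial-extend} applied to the partial order of cliques (Lemma~\ref{l:partialorder}), using the $\In^\fA$ and $\Out^\fA$ data of the chosen splices to place each new clique consistently in the existing order. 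The two numerical conditions of Definition~\ref{def-narrow} are satisfied by construction, and $K(\fA^*) \subseteq S_0$ follows from the fact that every splice appearing in $Cl^{\fA^*}$ has a representative in $S_0$.

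The main obstacle will be the gluing step — showing that a single new clique $C$ of the chosen target splice can be integrated into $\fA_j$ in such a way that simultaneously (i) its actual $\In$ and $\Out$ in $\fA^*$ agree with $\In^\fA$ and $\Out^\fA$ of its splice, (ii) no cross-segment 2-type violates $\psi_0$, and (iii) transitivity of $T$ is preserved against the already-committed clique partial order. This triple consistency is precisely what the splice machinery of Definition~\ref{def-splice} and Corollary~\ref{wniosek-dla-wielu} are designed to deliver; the entire narrow-model construction rests on that amalgamation lemma doing the promised work uniformly for an entire class of demanders at once.
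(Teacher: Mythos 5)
Your overall architecture matches the paper's: a first segment containing $K(\fA)$, then segments built by iterating the amalgamation machinery of Corollary~\ref{wniosek-dla-wielu} on the accumulated demands, followed by passing to the union of the segments. But there is a genuine gap at the very first step: you start from ``an arbitrary model'' with small cliques, whereas Corollary~\ref{wniosek-dla-wielu} (and the Witness Compression Lemma~\ref{lem:wit-compr} built on it) requires the ambient structure to be \emph{countable and witness-saturated}. This hypothesis is load-bearing, not cosmetic: the duplicability arguments behind the corollary repeatedly redirect $2$-types from an existing clique onto the fresh clique $D$, and the only reason elements that thereby ``lose'' a witness still have one is that in a witness-saturated model every witness set not contained in $K(\fA)$ is infinite. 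The same hypothesis is what makes the discarded part $A\setminus S_0$ redundant (Proposition~\ref{prop-redundant-first}), which you need in order to pass from the extended ambient model to $\bigcup_j S_j$. The paper therefore first proves a separate Saturated Model Lemma (Lemma~\ref{lemma-saturated}, itself resting on the clique-duplication of Claim~\ref{claim-single-dupl}) and only then runs the narrow-model construction; your proof needs this step or a replacement for it.

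A second, smaller problem is your choice of $S_0$. You add one representative clique per splice to force $K(\fA^*)\subseteq S_0$, but you do not put the witnesses of the elements of $K(\fA)$ into $S_0$. An element $a\in K(\fA)$ all of whose $\gamma_i$-witnesses lie outside $S_0$ then generates a demand at every later stage, and Corollary~\ref{wniosek-dla-wielu} cannot serve it: its hypotheses place the demanders' cliques in $Cl^\fA\setminus{\mathbb K}(\fA)$ (inherited from the assumption $C\notin{\mathbb K}(\fA)$ in Claim~\ref{claim-dla-dwoch}), which fails for $Cl^\fA(a)$. The paper instead takes $S_0$ to be $K(\fA)$ together with the cliques of chosen witnesses $\bar{\gamma_i}(a)$ for all $a\in K(\fA)$ and all $i$, so that every element of $K(\fA)$ already has all its required witnesses inside $S_0$ and never appears as a demander later. (A minor point: the number of demand classes per stage is $m\cdot s^2$, not $(m+1)\cdot s^2$; the extra summand in ${\mathbb M}$ is slack absorbed by $S_0$.)
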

The proof of the above lemma is deferred to Subsection~\ref{sec:proof-of-narrow-lemma}. 
Below we first introduce the notion of a {\em witness-saturated} model, in which every element requiring a witness outside ${K}(\fA)$ has infinitely many such witnesses, and we show existence of such models. In Subsection~\ref{sec:compression} we present the main technical tool that we later apply to prove Lemma~\ref{lem-narrow}.

\subsection{Saturated models}\label{sec:saturated}

In the definition below we introduce the notion of a witness-saturated model of a sentence $\Psi$. Any witness saturated
model that consists not just of $K(\fA)$ must be infinite (while $\Psi$ may have
other finite models). More technically, all the sets of witnesses outside of $K(\fA)$ are infinite.

\begin{definition}
	Assume $\fA\models\Psi.$ We say that $\fA$ is {\em
		witness-saturated}, if $\fA$ has the small clique property and for
	every $a\in A$, for every $\gamma_i\in \Psi$ $(1\leq i\leq m)$
	$${W}_i^\fA(a) \subseteq {K}(\fA) \mbox{\quad or \quad}
	{W}_i^\fA(a) \mbox{ is infinite}.$$
\end{definition}
Observe that it is possible that a witness-saturated model $\fA$ is finite -- then $A=K(\fA).$ 

The main and rather obvious  property of a  witness-saturated model is that a finite subset of cliques of such a model is almost always {\em redundant}. 

\begin{definition}\label{def-refundant}
	Let $S\subseteq A$ be a finite subset of $A$ such that $S\cap K(\fA)=\emptyset$. We say that $S$ is a {\em segment} in $\fA$ if for every $a\in
	S$: $Cl^\fA(a)\subseteq S.$
	A segment $S\varsubsetneq A$ is {\em redundant in} $\fA$, if for
	every $a\in A\setminus S$ and for every conjunct $\gamma_i$ of
	$\Psi$ we have:
	
	\hspace{0,5cm}  ${W}_i^{\fA}(a)\cap S\neq \emptyset$   implies there exists
	$c\in A\setminus S$ such that $c\in{W}_i^{\fA}(a).$
\end{definition}

\begin{proposition}\label{prop-redundant-first}
	If $\fA \models \Psi$, $\fA$ is witness-saturated  and  $S\varsubsetneq A $ is a  redundant segment
	in $\fA$, then $\fA\obciety (A\setminus S) \models
	\Psi.$
\end{proposition}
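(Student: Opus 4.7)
The plan is to verify the four conditions (a)--(d) of Proposition~\ref{claim-iff} for the substructure $\fA' := \fA\obciety (A\setminus S)$. Conditions (c) and (d) are immediate by passage to a substructure: atomic types of pairs and singletons in $\fA'$ coincide with those in $\fA$, so the universal conjunct $\forall x\forall y\,\psi_0$ is preserved, and $T^{\fA'}$, being just the restriction of $T^\fA$, stays transitive.

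The next observation I would isolate is that segments are unions of cliques in the following strong sense: for every $a\in A\setminus S$ one has $Cl^\fA(a)\subseteq A\setminus S$. Indeed, if some $b\in Cl^\fA(a)\cap S$ existed, then by the defining property of a segment we would get $Cl^\fA(b)\subseteq S$, but $Cl^\fA(b)=Cl^\fA(a)\ni a$, contradicting $a\in A\setminus S$. This handles condition (b) at once: the conjuncts $\delta_i$ have the form $\forall x\exists y\, \psi_i^{\leftrightarrow}(x,y)$, so any $\delta_i$-witness $b$ for $a$ in $\fA$ must lie in $Cl^\fA(a)$, and hence in $A\setminus S$; since atomic types on pairs inside $A\setminus S$ are unchanged, $b$ remains a $\delta_i$-witness for $a$ in $\fA'$.

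The only genuine use of the hypotheses is for condition (a). Fix $a\in A\setminus S$ and a conjunct $\gamma_i\in\Psi$. Since $\fA\models\Psi$, the set ${W}_i^\fA(a)$ is non-empty. I would argue by contradiction: assume ${W}_i^\fA(a)\cap (A\setminus S)=\emptyset$. Then ${W}_i^\fA(a)\subseteq S$, so in particular ${W}_i^\fA(a)\cap S\neq\emptyset$; but redundancy of $S$ then supplies some $c\in A\setminus S$ with $c\in {W}_i^\fA(a)$, contradicting the assumption. Hence some $b\in {W}_i^\fA(a)\cap (A\setminus S)$ exists, and, again because atomic types are preserved on pairs in $A\setminus S$, the element $b$ is still a $\gamma_i$-witness for $a$ in $\fA'$.

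There is no real obstacle here: the proof is a routine application of Proposition~\ref{claim-iff}. The only point that must be stated explicitly is the ``segments are unions of cliques'' observation, which is what makes the $\delta_i$-witnesses (required inside cliques) transfer automatically, leaving redundancy to take care solely of the cross-clique $\gamma_i$-witnesses. I note in passing that witness-saturation of $\fA$ is not actually invoked in the argument; it is part of the broader setting in which redundant segments will be produced in the subsequent sections, but the reduction itself goes through from redundancy alone.
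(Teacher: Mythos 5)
Your proof is correct and takes essentially the same route as the paper's, which simply invokes Proposition~\ref{claim-iff} and notes that every subgraph of a transitive graph is transitive; you have merely spelled out the details the paper leaves implicit. Your two side observations --- that closure of segments under cliques is exactly what preserves the $\delta_i$-witnesses, and that witness-saturation is never actually invoked (redundancy alone suffices) --- are both accurate.
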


\begin{proof}
	Every subgraph of a transitive graph is also transitive.
	Conditions (\ref{A})--(\ref{C}) of Proposition \ref{claim-iff}
	obviously hold for $\fA\obciety (A\setminus S)$.
\end{proof}

\begin{lemma}[{\rm Saturated model}]
	\label{lemma-saturated} Every satisfiable normal form \fodeg-sentence $\Psi$ has a
	countable witness-saturated model. 
\end{lemma}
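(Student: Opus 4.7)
The plan is to begin with any countable model $\fA_0 \models \Psi$ having the small clique property---obtained by applying L\"owenheim--Skolem to a model of $\Psi$ and then invoking Lemma~\ref{lma:theorem_si}---and to construct an increasing chain $\fA_0 \subseteq \fA_1 \subseteq \cdots$ of countable models of $\Psi$ with the small clique property whose union $\fA^{*}$ is witness-saturated. The only new feature of $\fA^{*}$ is that every clique $B \notin \mathbb{K}$ is duplicated infinitely often, so that whenever some $a$ has a $\gamma_i$-witness inside such a $B$, it acquires infinitely many $\gamma_i$-witnesses outside $K(\fA^{*})$.

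The central technical step is a \emph{duplication lemma}: given $\fA \models \Psi$ and $B \in Cl^{\fA}$ with $B \notin \mathbb{K}(\fA)$, fix a distinct clique $C$ with $\spl^{\fA}(C) = \spl^{\fA}(B)$ and an isomorphism $g \colon B \to C$ (available since $tp^{\fA}[B] = tp^{\fA}[C]$), and adjoin a fresh copy $B''$ of $B$ via an isomorphism $f \colon B \to B''$, setting $tp^{\fA'}(a, f(b_0)) := tp^{\fA}(a, b_0)$ for $a \in A \setminus B$ and $tp^{\fA'}(b_1, f(b_0)) := tp^{\fA}(b_1, g(b_0))$ for $b_1 \in B$; thus $B''$ mimics $B$'s external connections verbatim and mimics $C$'s connection to $B$ internally. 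Verifying that $\fA' \models \Psi$ reduces to: (i) $\psi_0$ on every new pair is inherited from a pair in $\fA$; (ii) transitivity of $T^{\fA'}$ is established by case analysis on triples crossing $B''$, using the elementary observation (a consequence of Lemma~\ref{l:partialorder}) that $T$ between any two distinct cliques is uniformly true or uniformly false on their product; and (iii) each new element of $B''$ retains all its witnesses---via $f$ for the $\delta_i$-witnesses inside $B''$, and via the copied external $2$-types for $\gamma_i$-witnesses, which necessarily lie outside $B$. A direct computation gives $\spl^{\fA'}(B'') = \spl^{\fA}(B)$, so that $B, C, B''$ all remain outside $\mathbb{K}(\fA')$; since no other clique's splice changes, $\mathbb{K}(\fA') = \mathbb{K}(\fA)$ and $K(\fA') = K(\fA)$. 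In particular, if $a \in A \setminus B$ had a $\gamma_i$-witness $b \in B$, then $f(b) \in B'' \setminus K(\fA')$ is a fresh such witness outside $K$.

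The chain is then built by a standard dovetailing: enumerate all triples $(a, i, n)$ over $\omega \times \{1, \ldots, m\} \times \omega$, and at stage $s$, if the $s$-th triple $(a, i, n)$ satisfies $a \in A_s$, $W_i^{\fA_s}(a) \not\subseteq K(\fA_s)$ and $|W_i^{\fA_s}(a)| < n$, pick $b \in W_i^{\fA_s}(a) \setminus K(\fA_s)$ and apply the duplication lemma with $B := Cl^{\fA_s}(b)$; otherwise let $\fA_{s+1} := \fA_s$. The union $\fA^{*} = \bigcup_s \fA_s$ is countable, satisfies $\Psi$ (each $\forall\exists$ conjunct is preserved since witnesses from earlier stages survive in the union, the universal conjunct by inheritance of $2$-types, and transitivity of $T$ passes to unions), retains the small clique property, and is witness-saturated. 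The main obstacle is the transitivity check inside the duplication lemma: the $B$-$B''$ connection is defined indirectly via $g$, so transitivity triangles through $B''$ mix elements of $B$, $C$, and the rest of $A$; the lever that makes every such triangle reduce to one already holding in $\fA$ is precisely the all-or-nothing character of $T$ between distinct cliques.
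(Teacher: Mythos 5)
Your proposal is correct and follows essentially the same route as the paper: your duplication lemma is precisely the paper's construction of $\exb$ (Definition~\ref{def:exb} and Claim~\ref{claim-single-dupl}), with your $C$ and $B''$ playing the roles of $B_1$ and $D$, and with the same use of splice equality to justify consistency of $1$-types, transitivity via the uniformity of $T$ between distinct cliques, and preservation of witnesses and splices. The only (cosmetic) difference is that you organize the closure as an explicit dovetailing over triples $(a,i,n)$, whereas the paper first saturates each pair $(a,\gamma_i)$ by an $\omega$-iteration of the $+B$ operation and then iterates over all such pairs.
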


Before giving the proof of Lemma \ref{lemma-saturated} we  introduce some more notation. 
\begin{definition}\label{def-conn-transfered}              Let $\fA$ be a
	$\sigma$-structure, $B,C\subseteq A$  such that
	$B\cap C=\emptyset$. A {\em connection type} between $B$ and $C$ in $\fA$ is the structure
	$\langle B,C\rangle_\fA \stackrel{def}{=}\fA\obciety (B\cup
	C).$
	
 Let $\fA'$ be a $\sigma$-structure, $B',C'\subseteq A'$ and $f_B: B'\mapsto B$, $f_C: C'\mapsto C$ be isomorphisms between corresponding substructures. 
	We say that the connection type $\langle B',C'\rangle_{\fA'}$  in $\fA'$ {\em is transferred from} the connection type   $\langle B,C\rangle_\fA$ in $\fA$ w.r.t.~$f_B$ and $f_C$, denoted 
\begin{center}
	$\langle B',C'\rangle_{\fA'} \equiv_{f_B,f_C} \langle B,C\rangle_{\fA} \mbox{ if } $\\
	$\mbox{for every } b'\in B', c'\in C':\,\,tp^{\fA'}[b',c'] = tp^\fA[f_B(b'),f_C(c')].$
\end{center}
We denote by $\langle B',C'\rangle_{\fA'} :=_{f_B,f_C} \langle B,C\rangle_{\fA}$ an operation that {\em transfers} the corresponding connection type from $\fA$ to $\fA'$ by setting the 2-types $tp^{\fA'}[b',c']=tp^\fA[f_B(b'),f_C(c')]$ for every $b'\in B'$ and $c'\in C'$. 
\end{definition}
	
The operation $\langle B',C'\rangle_{\fA'} :=_{f_B,f_C} \langle B,C\rangle_{\fA}$ does not change anything else in any of the two structures; it might be used when $\fA'$ is not fully defined.  Whenever any of the isomorphisms is the identity function, we denote it by~$\id$. 

Lemma \ref{lemma-saturated} is a consequence of an iterative application of the following Claim \ref{claim-single-dupl}. It states that every  clique $B$ whose splice in a given model is non-unique can be properly
duplicated.  The copy, $D$, of $B$ is added in such a way that
it also provides, for all conjuncts of the form $\gamma_i$,  all
$\gamma_i$-witnesses for elements outside both of the cliques $B$ and $D$, that have been provided by $B$. Assume $B_1$ is another clique with the same splice as $B$. The  construction of  $\exb$ (cf. Definition \ref{def:exb}) transfers the connection type between $B_1$ and $B$ to the connection type  between $D$ and $B$; and the connection type between  $D$ and  the rest of $\exb$ is transferred from the connection type between $B$ and the rest of $\fA$.

\begin{definition}[]\label{def:exb}
	Assume $\fA\models \Psi,$ $B, B_1\in Cl^\fA, B_1\neq B$ and $\spl^{\fA}(B_1)=\spl^{\fA}(B)$.  Define $\exb$ as an
	extension of $\fA$ in the following way.

 Let $\fD$ be a fresh copy of $\fA\obciety B,$ $D\cap
	A=\emptyset.$ Let $f:D\mapsto B$ and $f_1:D\mapsto B_1$ be  appropriate isomorphisms of cliques.  The universe of $\exb$ is   $A_{+B} = A\dot{\cup} D$ and:

		\begin{enumerate}[(i)]
		\item $\langle D,B\rangle_{\exb} :=_{f_1,\id} \langle B_1,B\rangle_{\fA}$,
		\item $\langle D,A\setminus B\rangle_{\exb} :=_{f,\id} \langle B,A\setminus B\rangle_{\fA}$.
	\end{enumerate}
\end{definition}
The above construction will be used several times in the sequel and below we summarize the crucial properties of the structure $\exb$. 

\begin{claim}[{\rm Simple duplicability}]
\label{claim-single-dupl} Let $\fA$, $B$, $B_1$ and $\exb$ be as given in Definition \ref{def:exb}. Then
\begin{enumerate}
\item
$\exb\models \Psi,$
\item for every conjunct $\gamma_i$ of $\Psi,$ for every $a\in A\setminus B$  we have: \\ \hspace*{1cm} if ${W}_i^{\fA}(a)\cap{B}\neq \emptyset \mbox{ \quad then \quad}
{W}_i^{\exb}(a)\cap{D}\neq \emptyset$,
\item $\spl^{\exb}(D)=\spl^{\exb}(B)=\spl^\fA(B)$,
\item  $\BBB^{\exb}[D,B]= \BBB^{\exb}[B_1,B] = \BBB^{\fA}[B_1,B]$, 
\item  $\BBB^{\exb}[D,X]= \BBB^{\exb}[B,X] = \BBB^{\fA}[B,X]$ for each $X\in Cl^{\exb}, X\neq B, X\neq D.$
\end{enumerate}
\end{claim}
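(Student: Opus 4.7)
\emph{Proof plan.} The construction of $\exb$ specifies a complete $\sigma$-structure: the substructure on $A$ is $\fA$, the substructure on $D$ is an isomorphic copy of $\fA\obciety B$ via $f^{-1}$, and the cross-connections from $D$ to $B$ and from $D$ to $A\setminus B$ are fixed by items (i) and (ii) of Definition~\ref{def:exb}. I plan to verify parts (1)--(5) in the order: first transitivity of $T^\exb$, then $\exb\models\Psi$ via Proposition~\ref{claim-iff}, and finally read off (2)--(5) directly from the construction.

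The main obstacle is transitivity of $T^\exb$, because items (i) and (ii) transfer $D$'s connections via two different isomorphisms $f_1$ and $f$, so a priori chains passing through $D$ may behave incoherently. The key observation I will exploit is the \emph{uniformity} of $T$-connections between a clique and an external element: for any clique $C$ in a transitive structure and any $z\notin C$, the truth values of $T(c,z)$ and $T(z,c)$ are uniform across $c\in C$ (otherwise transitivity through the clique would force $z\in C$). Applied to the cliques $B$ and $B_1$ in $\fA$, I can then do a case split on where each of $x,y,z$ lies: if all three lie in $A$ or all three in $D$ the conclusion is immediate, and in mixed cases (say $x\in D$, $y\in B$, $z\in A\setminus B$, and its symmetric variants) I translate via $f_1$ and $f$ into $\fA$, where uniformity together with the chain relations between $B$, $B_1$ and the involved elements (forced by the hypotheses $Txy$, $Tyz$) reduces transitivity of $T^\exb$ to transitivity of $T^\fA$.

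Once transitivity is established, (1) follows from Proposition~\ref{claim-iff}: the universal $\psi_0$-conjunct is preserved because every new 2-type $tp^\exb[d,a]$ coincides with some $tp^\fA[f_1(d),a]$ or $tp^\fA[f(d),a]$ already realized in $\fA$; every $a\in A$ retains its old witnesses (2-types on $A\times A$ are unchanged); and every $d\in D$ receives $\delta_i$-witnesses inside $D$ (inherited from the $\delta_i$-witnesses of $f(d)$ in $B$ under $f^{-1}$, since $^\leftrightarrow$-witnesses must lie in the same clique) and $\gamma_i$-witnesses in $A\setminus B$ (inherited from $\gamma_i$-witnesses of $f(d)$, which lie in $A\setminus B$ because $d_i\in\{^\rightarrow,^\leftarrow\}$ rules out same-clique witnesses).

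Parts (2)--(5) are then essentially direct. For (2), if $\fA\models\psi_i^{d_i}[a,b]$ with $a\in A\setminus B$ and $b\in B$, then by (ii) we have $tp^\exb[a,f^{-1}(b)]=tp^\fA[a,b]$, so $f^{-1}(b)\in D$ witnesses $\gamma_i$ for $a$ in $\exb$. Parts (4) and (5) directly restate (i) and (ii). For (3), $tp^\exb[D]=tp^\fA[B]$ because $\fD$ is an isomorphic copy of $\fA\obciety B$; the identities $\In^\exb(D)=\In^\fA(B)$ and $\Out^\exb(D)=\Out^\fA(B)$ follow from a short case split on whether a dominating or dominated element lies in $A\setminus B$ (handled by (ii), which transfers the relevant 1-types unchanged) or in $B$ (handled by (i), where $\spl^\fA(B_1)=\spl^\fA(B)$ is invoked to show that any 1-type contributed in this way already belongs to $\In^\fA(B)$ or $\Out^\fA(B)$).
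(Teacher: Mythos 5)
Your proposal is correct and follows essentially the same route as the paper: you verify transitivity of $T^{\exb}$ via the uniformity of $T$-connections between a clique and an external element (which is exactly what the paper packages into Lemma~\ref{l:partialorder} and its case analysis on the order-relationship of $B_1$ and $B$, plus a third clique $X$), then check the conditions of Proposition~\ref{claim-iff} for part (1) and read parts (2)--(5) off the construction, invoking $\spl^{\fA}(B_1)=\spl^{\fA}(B)$ for part (3) just as the paper does. The only detail you leave implicit is the middle equality $\spl^{\exb}(B)=\spl^{\fA}(B)$ in (3), which follows by the same one-line observation (any $d\in D$ contributing to $\In^{\exb}(B)$ or $\Out^{\exb}(B)$ carries the 1-type of $f_1(d)\in B_1$, already counted in $\fA$).
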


\begin{proof} Assume  $\fD$, $f$ and $f_1$ are as in Definition \ref{def:exb} (see Figure \ref{case2}, where $B_1\mn_{\fA}B$). First note that the clique $B$ does not change its splice in the structure $\exb$, namely $\spl^\exb(B)=\spl^\fA(B)$, as when defining 2-types between elements from the new clique $D$ and elements from $B$ (line (i) of Definition~\ref{def:exb}) no new 2-types are used.  

Condition (4) follows directly from line (i) of Definition~\ref{def:exb} 
and condition (5) follows from line (ii). 
It should be clear that when $B \sima B_1$ then $D \simb B$ and $D \simb B_1$, and then condition (3) holds. In the remaining cases, $\spl^\fA(B)=\spl^\fA(B_1)$ implies that $ \{tp^\fA[b]:b\in B\}\subseteq \In^\fA(B)$ and $\{tp^\fA[b]:b\in B\}\subseteq \Out^\fA(B)$, which furthermore implies
$$\In^\fA(B)\subseteq \In^\exb(D)\subseteq \In^\fA(B)\cup \{tp^\fA[b]:b\in B\}=\In^\fA(B)$$ 
and, similarly,
$$\Out^\fA(B)\subseteq \Out^\exb(D)\subseteq \Out^\fA(B)\cup \{tp^\fA[b]:b\in B\}=\Out^\fA(B).$$
  So, we obtain $\In^\exb(D)=\In^\fA(B)$,  $\Out^\exb(D)=\Out^\fA(B)$, and in fact $\spl^\exb(D)=\spl^\fA(B)$. This taking into account our first observation implies condition~(3) for all cases. 

Condition {(2)} also follows from Definition \ref{def:exb}: if $a\in A\setminus B$ has its witness in $B$, say $b\in {W}_i^{\fA}(a)\cap{B}$, then $f^{-1}(b)\in D$ is a witness of $a$ since $tp^{\exb}[f^{-1}(b),a]\stackrel{def}{=}tp^{\fA}[f(f^{-1}(b), a]=tp^{\fA}[b, a]$ as defined in line (ii).

It remains to prove (1). To see that  $\exb\models \Psi$ we show that conditions
(\ref{A})--(\ref{D}) of Proposition \ref{claim-iff} hold for $\exb$.

First, since $\exb$ is an extension of $\fA$, conditions (\ref{A})
and (\ref{B}) of Proposition \ref{claim-iff} hold for every $a\in
A$. Condition (\ref{B})
holds for $d\in D$ since cliques $D$ and $B$ are isomorphic. Finally, condition (\ref{A}) holds for $d\in D$, since it was true for  $f(d)\in B$ in $\fA$.

Secondly, by construction, $\BBB[\exb]=\BBB[\fA]$, hence condition
(\ref{C}) also holds.

So, it remains to show condition (\ref{D}), i.e.~that $T$ is
transitive in $\exb$. By construction, $T$ is transitive in $\exb\obciety A$ and in
$\exb\obciety{(A_{+B}\setminus B)}$.
We have three cases: $B_1\simw_{\fA} B,$ $B_1\mn_{\fA} B$ or $B_1\wi_{\fA} B$. 

\medskip\noindent{\sc Case 1.} $B_1\simw_{\fA} B.$

Then, by (4), $D\simw_{\exb}B$, and as $T$ is transitive in $\exb\obciety A$ and in $\exb\obciety
(A_{+B}\setminus B)$,  $T$ is also transitive in $\exb$.

\medskip\noindent{\sc Case 2.} $B_1\mn_{\fA}B$ (see Figure \ref{case2}).

\begin{figure}[htb]
	\hspace{3cm}
	\begin{tikzpicture}[scale=0.8]
	
	\draw (0,2) ellipse (1.5 and .5);
	\draw (8,2) ellipse (1.5 and .5);
	\draw[style=thick] (3,5) ellipse (1.5 and .5);
	
	\draw (4,3.2) ellipse (1.5 and .5);
	\coordinate [label=center:{\footnotesize$X$}] (A) at (5.8, 3.4);
	\draw[<-][style=thick] (1.5,2.3) -- (2.4, 3);
	\draw[-][style=dotted, very thick] (5.6,3) -- (6.8, 2.5); 
	\draw[->][style=double, thick] (5.3,3.6) -- (3.7, 4.3); 

	\coordinate [label=center:{\footnotesize$B$}] (A) at (-1.7, 2.1);
	\coordinate [label=center:{\footnotesize$B_1$}] (A) at (9.8, 2.1);
	\coordinate [label=center:{\footnotesize$D$}] (A) at (1.3, 5.3);
	
	\filldraw[fill=black] (3, 5) circle (0.04);
	
	\filldraw[fill=black] (1.6, 3.3) circle (0.04);
	\coordinate [label=right:{\footnotesize$a$}] (A) at (1.6, 3.3);
	\draw[-][style=dashed] (1.3, 3.1) -- (-0.4,2.1); 
	\filldraw[fill=black] (-0.5, 2) circle (0.04);
	\coordinate [label=left:{\footnotesize$b$}] (A) at (0, 2);
	\draw[-][style=dashed] (1.8, 3.5) -- (2.9, 4.8);
	\coordinate [label=left:{\footnotesize$f^{-1}(b)$}] (A) at (3, 5);
	
	\draw[<-][style=double,thick] (-1.3,2.6) -- (1.2, 4.5);
	\draw[->][style=double,thick] (9,2.6) -- (5, 4.5); 
	
	\draw[->][style=thick] (5.5,2) -- (2.2, 2); 
	
	\end{tikzpicture}
	\caption{\textsf{\footnotesize Case 2 of Claim \ref{claim-single-dupl}. Arrows depict ordering on cliques: single arrows depict the situation in $\fA$ ($B_1\mna B$); double arrows depict the ordering obtained after transferring connection types from $\fA$ to $\exb$ ($D\mnb B$ and  $B_1\mnb D$). The element $a\in A_{+B}$ has its $\gamma_i$-witnesses, $f^{-1}(b)$,  in $D$ as before in $B$ (dashed lines). A dotted line indicates that the respective connection type is arbitrary.  In case when  $X\mnb D$ we had also that  $X\mna B$ and then $\mnb$ is a partial order. }}\label{case2}
\end{figure}

By (4), $D\mnb B$. To show that $T$ is
transitive in $\exb$ we show that $\mnb$ is a partial order on the set of cliques in $\exb$ (cf.~Lemma~\ref{l:partialorder}). It suffices to consider the following three
subcases with $X \in Cl^{\exb}, X\neq B, X\neq D.$
\begin{itemize}
	\item $X\mnb D$ (and $D\mnb B$). \\
		Then, by (5), $X\mnb B$ as required. 
	\item  $D\mnb X$ and $X\mnb B$. \\
	This is impossible, since by (5) $D\mnb X$ implies $B \mnb X$. 
	\item ($D\mnb B$ and) $B\mnb X$.\\
	 Again by (5), $D\mnb X$, as desired. 
\end{itemize}

So, $T$ is transitive in $\exb$.

\medskip\noindent{\sc Case 3.} $B\mna B_1$.
This case is symmetric to the previous one.
\end{proof}

\begin{proof}[Proof of Lemma \ref{lemma-saturated}]
	Assume $\fA\models \Psi$, $\gamma_i\in \Psi$, $a\in A$ and 
	${W}_i(a) \nsubseteq {K}(\fA)$. Then there exists $B\in Cl^\fA$,
	$B\not\in {\mathbb K}(\fA)$ such that ${W}_i(a) \cap B \neq 		\emptyset$. 
	Let $B_1\in Cl^\fA, B_1\neq B$ and $\spl^{\fA}(B_1)=\spl^{\fA}(B)$
	(the set $B_1$ exists since $B\not\in {\mathbb K}(\fA)$). 
	Define $\fA_{(0)}=\fA$ and $\fA_{(j+1)}=\fA_{(j)}^{+B}$. Finally let
	$$\fA_{(\infty,\gamma_i,a)}= \bigcup_{j=0}^{\infty}\fA_{(j)}^{+B}.$$
	By Claim \ref{claim-single-dupl} obviously $\fA_{(\infty,\gamma_i,a)}\models \Psi$. Iterate the above procedure for every $a\in A$ and $\gamma_i\in \Psi$.
\end{proof}

\subsection{Duplicability}\label{sec:compression}
The key
technical lemma of the paper is Corollary  \ref{wniosek-dla-wielu} below. It says that when several ele\-ments $a_1,
a_2,\ldots,a_p$ of a model $\fA$ have $\gamma_i$-witnesses in
several distinguished cliques that realize the same splice, one
can extend $\fA$ by a single clique $D$ (realizing the
same splice) in which all of $a_1, a_2,\ldots,a_p$ have their
$\gamma_i$-witnesses. In the proof we essentially  employ the property of simple duplicability given above in Claim \ref{claim-single-dupl}.

We start with Claim \ref{claim-dla-dwoch} below, where we have the following situation. Cliques $C$ and $C_1$ realize the same splice, similarly cliques $B$ and $B_1$ realize the same splice, $B\neq B_1$. There are elements of the clique $C$ (and $C_1$) which have their $\gamma_i$-witnesses in the clique $B$ (and $B_1$, respectively). But it is possible that there are elements of $C_1$ which have no  $\gamma_i$-witnesses in $B$ (or elements of $C$ that have no $\gamma_i$-witnessses in $B_1$). We show that one can  extend the model $\fA$ by a clique $D$, a copy of $B$, as in Claim \ref{claim-single-dupl}, but additionally in such a way, that not only  elements of $A\setminus B$ (in particular elements of $C$), have their $\gamma_i$-witnesses in $D$  (as before in $B$), but also elements of $C_1$ (which had their $\gamma_i$-witnesses in $B_1\setminus B$ before) will have their $\gamma_i$-witnesses in $D$.  The cost we pay is in condition {\em (ii)}: {\em for every $a\in E\cup C$} (for any  finite subset $E$ of the the universe), while in Claim \ref{claim-single-dupl} we had: {\em for every $a\in A\setminus B$}, but this compromise is sufficient for our purposes and makes proofs easier. 

Below, for distinct cliques $C$ and $C_1$ we employ the following abbreviation:

\hspace{0,5cm} $ \bullet\,\,C \lsima C_1$ iff $C\mna C_1$ or $C\sima C_1$.

\noindent Let us emphasize that the requirement 	$C \lsima C_1$ in the statement of Claim \ref{claim-dla-dwoch} is important, since the role of $C$ and $C_1$ is not fully symmetric (cf.~conditions (ii) and (iii)).
\begin{claim}\label{claim-dla-dwoch}
	Assume $\fA\models\Psi$ is countable witness-saturated and $\gamma_i\in
	\Psi$. Let $B, B_1,C,C_1\in Cl^\fA$, $B_1\neq B$, $\spl(B_1)=\spl(B)$, $\spl(C_1) = \spl(C)$  and    
	$C \lsima C_1$.
	Additionally, assume  $C\not\in {\mathbb K}(\fA)$,  $W_i(C)\cap{B}\neq\emptyset$ and $W_i(C_1)\cap{B_1}\neq\emptyset$.  Let $E$ be a finite subset of $A$. 
Then, there exists an  extension $\fA_1$ of $\fA$ by a clique $D$ such that
\begin{enumerate}[(i)]
		\item $\fA_1\models \Psi$ and $\fA_1$ is  witness-saturated,
		\item for every $a\in E\cup C$:  \quad if \quad ${W}_i^{\fA}(a)\cap B\neq \emptyset \mbox{ \quad then \quad}
		{W}_i^{\fA_1}(a)\cap D\neq \emptyset,$
		\item
		for every $a\in C_1$:  \quad if \quad ${W}_i^{\fA}(a)\cap B_1\neq \emptyset \mbox{ \quad then \quad}
		{W}_i^{\fA_1}(a)\cap D \neq \emptyset,$
		\item $\spl^{\fA_1}(D)=\spl^{\fA_1}(B)= \spl^{\fA}(B)$.
\end{enumerate}
\end{claim}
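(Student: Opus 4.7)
The plan is to construct $\fA_1$ as an extension of $\fA$ by a single fresh clique $D$ isomorphic to $B$, following the template of Claim~\ref{claim-single-dupl} but with more care in the choice of the 2-types between $D$ and the rest of the universe. First I would fix two isomorphisms $f\colon D\to B$ and $g\colon D\to B_1$; the latter exists because $\spl^\fA(B)=\spl^\fA(B_1)$ forces $tp^\fA[B]=tp^\fA[B_1]$. The 2-types between $D$ and $B$, and between $D$ and $B_1$, are defined exactly as in Definition~\ref{def:exb}, namely $\langle D,B\rangle_{\fA_1}:=_{g,\id}\langle B_1,B\rangle_\fA$ and $\langle D,B_1\rangle_{\fA_1}:=_{f,\id}\langle B,B_1\rangle_\fA$. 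To secure conditions (ii) and (iii) I would further set $\langle D,C\rangle_{\fA_1}:=_{f,\id}\langle B,C\rangle_\fA$, $\langle D,C_1\rangle_{\fA_1}:=_{g,\id}\langle B_1,C_1\rangle_\fA$, and, for every clique $Cl^\fA(a)$ with $a\in E\setminus(B\cup B_1\cup C\cup C_1)$, $\langle D,Cl^\fA(a)\rangle_{\fA_1}:=_{f,\id}\langle B,Cl^\fA(a)\rangle_\fA$.

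For every remaining clique $X$ I must still choose a \emph{source} $s(X)\in\{B,B_1\}$ and put $\langle D,X\rangle_{\fA_1}:=_{h,\id}\langle s(X),X\rangle_\fA$ with $h\in\{f,g\}$ accordingly. The choice is constrained by two requirements: (a) the resulting interpretation of $T$ on $A\cup D$ must be transitive, which by Lemma~\ref{l:partialorder} means that $D$ must take a consistent position in the partial order of cliques, and (b) we must have $\spl^{\fA_1}(D)=\spl^\fA(B)$, which requires that the set of 1-types realized below (resp.\ above) $D$ coincides with $In^\fA(B)$ (resp.\ $Out^\fA(B)$). My plan here is to apply Proposition~\ref{prop:partial-extend} iteratively to the $\fA$-order on cliques, with $D$ initially placed incomparable to everything, and successively adding the already-forced comparisons induced by the special choices above (namely the relations between $D$ and $B$, $B_1$, $C$, $C_1$, and the $E$-cliques). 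The minimal partial-order extension produced determines, for every remaining clique $X$, whether $X\mn D$, $D\mn X$ or $X\sim D$, and then $s(X)$ is set to whichever of $B,B_1$ has the matching relation to $X$ in $\fA$; the equality $\spl^\fA(B)=\spl^\fA(B_1)$ guarantees that both $In^\fA(B)$ and $Out^\fA(B)$ are fully realized below and above~$D$.

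The main obstacle, as in the proof of Claim~\ref{claim-single-dupl}, is the transitivity check. Here it amounts to showing that the iterated applications of Proposition~\ref{prop:partial-extend} do not produce, between $D$ and any clique $X$, a comparison that is incompatible with the template transferred to $X$. The asymmetric hypothesis $C\lsima C_1$ (rather than the symmetric $C$, $C_1$ comparable in either direction) is what rules out the single crossing configuration that would otherwise force some $X$ to be simultaneously below $D$ via $s(X)=B$ and above $D$ via $s(X)=B_1$: in the surviving configurations, the splice equalities $\spl(B)=\spl(B_1)$ and $\spl(C)=\spl(C_1)$ reduce the verification to the same order-theoretic argument used in the three cases of Claim~\ref{claim-single-dupl}.

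Once $\fA_1$ is defined, the four conditions follow quickly. Condition (ii) holds because $s(C)=B$ and $s(Cl^\fA(a))=B$ for the relevant $a\in E$; condition (iii) because $s(C_1)=B_1$; and condition (iv) because the transfers preserve the 1-types realized above and below $D$. Condition (i) is verified via Proposition~\ref{claim-iff}: the universal conjunct $\forall x\forall y\,\psi_0$ holds because every 2-type in $\BBB^{\fA_1}$ already appears in $\BBB^\fA$, transitivity of $T^{\fA_1}$ holds by construction, every $a\in A$ keeps all of its $\fA$-witnesses, and every $d\in D$ inherits its $\gamma_i$- and $\delta_i$-witnesses through $f$ from its preimage in $B$. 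Witness-saturation is preserved because $D$ is a single finite clique: witness sets of elements of $A$ that were already infinite in $\fA$ remain infinite in $\fA_1$, and elements of $D$ have the same witness distribution as the corresponding elements of the witness-saturated clique $B$.
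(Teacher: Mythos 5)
Your overall strategy---add a single fresh copy $D$ of $B$ and, clique by clique, transfer the connection type $\langle D,X\rangle$ from either $\langle B,X\rangle$ or $\langle B_1,X\rangle$---is in the right spirit, but it has a genuine gap at exactly the point where the proof has to work hardest: transitivity. Your construction leaves every 2-type between pairs of \emph{old} cliques unchanged, yet it inserts $D$ strictly between old cliques whose mutual order in $\fA$ is incompatible with that insertion. Concretely, take $d_i={}^\rightarrow$ and the configuration of Subcase 4.3 ($C\sima C_1$, $C\mna B$, $C_1\mna B_1$, $C_1\sima B$, $C\sima B_1$, hence $B\sima B_1$), and let some $e\in E$ lie in a clique $X$ with $B\mna X$ and $X\sima C_1$ (nothing in the hypotheses forbids this). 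Your rules give $C_1<D$ (transferred from $\langle B_1,C_1\rangle$) and $D<X$ (transferred from $\langle B,X\rangle$), so transitivity of $T^{\fA_1}$ demands $C_1<X$; but the pair $(C_1,X)$ keeps its $\fA$-types, which lie in $\BBBn$. Appealing to Proposition~\ref{prop:partial-extend} does not repair this: the minimal partial-order extension does contain the new pair $(C_1,X)$, but you only read off from that extension the position of $D$ and never realize the induced comparisons between old cliques in the structure itself. This is precisely what Step 4.3.b of the paper's proof does (and what makes Subcases 4.3 and 5.3 delicate): for every clique $V$ newly above $D$ and every $V_1$ newly below $C_1$ it \emph{rewrites} the 2-types in $\langle V_1,V\rangle$, $\langle V_1,D\rangle$ and $\langle C_1,V\rangle$ by $\BBBr$-types, and it must argue --- using $\In^{}(C_1)=\In^{}(C)$ to find a substitute element $v_1'$ of the same 1-type already below $C$ --- that suitable realized 2-types exist and that no witnesses are destroyed. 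None of this is in your proposal; on the contrary, your last paragraph asserts that all old 2-types and witnesses are untouched, which is incompatible with restoring transitivity.

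A secondary weak point: for a ``remaining'' clique $X$ you set $s(X)$ to whichever of $B,B_1$ has the matching relation to $X$, but such a source need not exist (e.g.\ $B\sima B_1$ while $X\mna B$ and $X\mna B_1$, yet the minimal extension leaves $X$ incomparable to $D$). This also undermines your justification of $\spl^{\fA_1}(D)=\spl^{\fA}(B)$, since a 1-type in $\In^{\fA}(B)$ realized only by such an $X$ may fail to appear below $D$. The paper sidesteps both problems by always starting from $\exb$ or $\exba$ (Definition~\ref{def:exb}), where $D$ is positioned exactly as $B$ (resp.\ $B_1$) relative to everything else, and then performing a small number of targeted, case-dependent repairs whose side effects on the order and on witnesses are checked one by one.
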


\begin{proof}
Let $\fA$,  $C$, $C_1$, $B$, $B_1$, $E$ be as given above, and $\gamma_i=\forall x\exists y\, \psi_i^{d_i}(x,y)$ with
$d_i\in\{^\rightarrow, ^\leftarrow\}$ (recall, incomparable witnesses in \fodeg are not allowed). Since $\spl(B)=\spl(B_1)$ and $B\neq B_1,$ we have $B
\not\in {\mathbb K}(\fA)$  and $B_1 \not\in {\mathbb K}(\fA)$.  Note that the form of $\gamma_i$ implies that it is never possible that $C\sima B$ or $C_1\sima B_1$.

Before we proceed with the proof let us discuss the goal in more detail. 
Claim~\ref{claim-single-dupl} allows one to extend a given model for $\Psi$ by a copy of a clique that has a non-unique splice, and this is what we also want to do in this claim; however, we need to work more.  

Consider $\exb$ -- the  extension of $\fA$ as given in Definition \ref{def:exb}. Then, by Claim~\ref{claim-single-dupl}, the following holds 
 \begin{enumerate}
 	\item
 	$\exb\models \Psi,$
 	\item for every $a\in A\setminus B$
 	we have: \\ \hspace*{1cm} if ${W}_i^{\fA}(a)\cap B\neq \emptyset \mbox{ \quad then \quad}
 	{W}_i^{\exb}(a)\cap D\neq \emptyset.$
 	\item $\spl^{\exb}(D)=\spl^{\exb}(B)=\spl^\fA(B)$.
 \end{enumerate}
If $\fA$ is witness-saturated then also $\exb$ is witness-saturated. Hence, condition (1) implies (i) of our claim, condition (2) implies (ii) for any $E\subseteq A$, and condition (3) implies (iv) of our claim. 
However, condition {\em (iii)} is not ensured: 
by construction, the connection type $\langle D,C_1\rangle$ in $\exb$ is transferred from the connection type  $\langle B,C_1\rangle$ in $\fA$, and
it is not guaranteed that elements of $C_1$ had their $\gamma_i$-witnesses in $B$; it is possible even that $C_1\simb D$.

A similar problem appears when we extend $\fA$ by adding a simple duplicate of $B_1$. Namely, let $\exba$ be the (witness-saturated) extension of $\fA$ given by Definition \ref{def:exb}, where  $\langle D,B_1\rangle_{\exba} :=_{f,\id} \langle B,B_1\rangle_{\fA}$ and
 $\langle D,A\setminus B_1\rangle_{\exba} :=_{f_1,\id} \langle B_1,A\setminus B_1\rangle_{\fA}$. 
Again, condition {\em (ii)} of our claim is not ensured in  $\exba$,
as  $\cont{D,C}{\exba}\equiv_{f_1,\id}\cont{B_1,C}{\fA}$, 
and it is not guaranteed that elements of $C$ 
had their $\gamma_i$-witnesses in $B_1$. 

\medskip To prove our claim we need a compromise between the two approaches described above.  In particular, we restrict $E$ to finite subsets of $A$ as this is sufficient for our future purposes and yields a simpler proof. 
In the proof we consider several formal cases depending of the order-relationships between the cliques $C$, $C_1$, $B$ and $B_1$, as systematically listed below.
Since elements of $C$ and $C_1$  have their $\gamma_i$-witnesses in, respectively, $B$ and $B_1$, in each case the relationship between $C$ and $B$ implies the same relationship between $C_1$ and $B_1$.

\noindent Case 1. $C=C_1$ and

\hspace{1cm}1.1. $C\mn B$ or

\hspace{1cm}1.2. $B\mn C$.

\noindent Case  2. $C\mn C_1$ and $C\mn B$ and

\hspace{1cm}2.1. $C_1\mn B$ or

\hspace{1cm}2.2. $B\mn C_1$ or

\hspace{1cm}2.3. $C_1\simw B$.

\noindent  Case  3. $C\mn C_1$ and $B\mn C$.

\noindent Case  4. $C\simw C_1$ and $C\mn B$.

\hspace{1cm}4.1.  $C_1\mn B$ or

\hspace{1cm}4.2.  $C\mn B_1$ or

\hspace{1cm}4.3.  $C_1\simw B$ and  $C\simw B_1$.

\noindent  Case 5. $C\simw C_1$ and $B\mn C$ 

\hspace{1cm}5.1.  $B\mn C_1$ or

\hspace{1cm}5.2.  $B_1\mn C$ or

\hspace{1cm}5.3.  $C_1\simw B$ and  $C\simw B_1$. 

No other cases are possible, for example in Case 4 we do not  have a subcase with $B \mn C_1$, as this by transitivity would imply $C_1\mn C$. 

In most cases we start with $\exb$ and modify the structure to obtain $\fA_1$ in which elements of $C_1$ will have their $\gamma_i$-witnesses also in $D$. In some cases it is more convenient to start from $\exba$ and then modify the structure to obtain $\fA_1$ in which elements of $C\cup E$ will have their $\gamma_i$-witnesses also in $D$. 
Next steps  depend on the particular case and are labelled by the case (subcase) number. 
Note that condition (5) of Claim~\ref{claim-single-dupl} ensures that in $\exb$ the relationship between $C$ and the cliques  $B$ and $D$  is also determined by $\gamma_i$. Namely we have:

(*) if $d_i=^\rightarrow$, then $C \mn B$, $C_1 \mn B_1$ and $C \mn D$; and if $d_i=^\leftarrow$, then 
$B \mn C$, $B_1 \mn C_1$ and $D \mn C$.

Let us start with the easiest Subcase 1.1. 

\medskip
\noindent{\sc Subcase 1.1.} $C=C_1$ and $C\mnb B$ (see Figure \ref{piccasecase11}). \\
To satisfy condition {\em
(iii)}, we modify the connection type $\cont{C,D}{\exb}$ as follows (recall $f: D\mapsto B$ and $f_1: D\mapsto B_1$ are isomorphisms of cliques given by Definition \ref{def:exb}). 

\begin{figure}[htb]
	\begin{tikzpicture}[scale=0.8]
	
	\draw (-2,3.5) ellipse (1.2 and .5);
	\coordinate [label=center:{\footnotesize$B$}] (A) at (-3.2, 4.1);

	\draw (8.4,3.5) ellipse (1.5 and .5);
	\coordinate [label=center:\footnotesize{$B_1$}] (A) at (9.9, 4.1);
	\filldraw[fill=black] (8.5, 3.5) circle (0.04);
	\coordinate [label=left:{\footnotesize$f_1(d)$}] (A) at (9.8, 3.5);

	\draw (3,5) ellipse (1.5 and .5);
	\coordinate [label=center:{\footnotesize$C$}] (A) at (1.5, 5.6);
	\draw[->][style=thick] (0, 5) -- (-2.6, 4.2);	
	\draw[->][style=thick] (6, 5) -- (8.7, 4.2);
	\filldraw[fill=black] (4, 5) circle (0.04);
	\coordinate [label=left:{\footnotesize$c$}] (A) at (4, 5);
	\draw[->][style=dashed] (4, 4.8) -- (3.1, 2.2);
	\draw[->][style=dashed] (4.2, 5) -- (8.3, 3.6);
	\draw[->][style=double,thick] (1.7, 4.3) -- (1.1, 2.8);

	\draw (2.2,2)[style=thick] ellipse (1.5 and .5);
	\coordinate [label=center:{\footnotesize$D$}] (A) at (0.4, 2.4);
	\filldraw[fill=black] (3, 2) circle (0.04);
	\coordinate [label=left:{\footnotesize$d$}] (A) at (3, 2);
	
	\end{tikzpicture}
	\caption{\footnotesize\textsf{Subcase 1.1 in the proof of Claim \ref{claim-dla-dwoch}: $C=C_1$, $C\mn B$ and $D$ is a fresh copy of the clique $B$ given by Claim \ref{claim-single-dupl}.  Single arrows depict types from the model $\fA$:  $C\mnb B,$ and $C\mnb B_1$. 	Double arrow depicts the ordering obtained after transferring connection types from $\fA$ to $\exb$: $C\mnb D$. Dashed
			arrows are to show that corresponding pairs of  elements realize the same 2-type: $tp^{\exb}[c,d]$ is replaced by $tp^{\exb}[c,f_1(d)]$ in Step 1.1 to obtain $\fA_1$.}}\label{piccasecase11} 
\end{figure}

\medskip\noindent {\em Step 1.1. } 
For every $c\in C$, for every $d\in
D$, if $f_1(d)\in {W}_i^{\fA}(c)$, then 

\hspace{2cm}replace $tp^{\exb}[c,d]$
by $tp^{\exb}[c,f_1(d)]$. 

\medskip
In Step 1.1, every element $c\in C$ that has a $\gamma_i$-witness in $B_1$ (in the model $\exb$) is attributed a $\gamma_i$-witness in $D$. 
Note that $tp^{\exb}[f_1(d)]=tp^{\exb}[d]=tp^{\fA}[f(d)]$, $c\mnb d,$
and $c\mnb f_1(d),$  so no incompatibility occurs. Moreover, since only types of  $\BBBr$ are changed and they are replaced by types of $\BBBr$, after the modification $T$ remains transitive in $\fA_1$. 

Additionally, since no type $tp^{\exb}[c,a]$ with $c\in C$, $a\in A_{+B}\setminus D$ is changed, no witness for $a$ is stolen from $C$. The same is true for elements of $D$: after performing Step 1.1 every element $d\in D$ has a
$\gamma_j$-witness, for $j=1,2,\ldots, m$. This is because $\exb$
is witness-saturated, so when $tp^{\exb}[c,d]$ was replaced by
$tp^{\exb}[c,f_1(d)]$ and an element $c\in C$ was a
$\gamma_j$-witness for $d$  in the model $\exb$, then  $d$ had
another $\gamma_j$-witness in $A_{+B}$ (in fact infinitely many such witnesses as cliques are finite).

Observe that Subcase 1.2 is symmetric to 1.1. The only difference is in the order-relationship between the cliques $C$ and $D$ (certainly also between $C_1$ and $B_1$). \hfill {\em End of Case 1}  

\medskip

Before proving Subcase 1.1 we have listed systematically twelve (sub)cases that should be considered. The remaining ten (sub)cases can be organised into three groups denoted {\bf A}, {\bf B} and {\bf C}. The proofs for all cases from the same group proceed in the same way.  Each group is described by a condition defining a certain order-relationship between the cliques under consideration. 

\medskip
\noindent {\bf A.} {\em Cliques $C$ and $C_1$ are in the same order-relationship with $B$}.\\ This group consists  of four subcases (cf.~Figure \ref{fig:caseA}). The construction in each case starts from $\exb$ and then the connection type $\cont{C_1,D}{\exb}$ is modified in order to provide $\gamma_i$-witnesses for elements from $C_1$ in the newly added clique $D$. 
Below we describe the construction in detail for Subcase 2.1. The argument for remaining cases from this group is similar.

\begin{figure}[htb]
	\begin{tikzpicture}[scale=0.45]
	
	\coordinate [label=left:{\footnotesize$B$}] (A) at (-7.1, 3.5);
	\filldraw[fill=black] (-7, 3.5) circle (0.04);
	
	\filldraw[fill=black] (-4, 3.5) circle (0.04);
	\coordinate [label=right:{\footnotesize$B_1$}] (A) at (-4.1, 3.5);
	
	\filldraw[fill=black] (-7, 5.5) circle (0.04);
	\coordinate [label=left:{\footnotesize$C$}] (A) at (-7.1, 5.6);
	
	\filldraw[fill=black] (-4, 5.5) circle (0.04);
	\coordinate [label=right:{\footnotesize$C_1$}] (A) at (-4.1, 5.6);
	
	\draw[->][](-6.8,5.5) -- (-4.2, 5.5);
	\draw[->][](-7,5.4) -- (-7, 3.7);
	\draw[<-][](-6.8,3.6) -- (-4.2, 5.3);
	\draw[->][](-4, 5.4) -- (-4, 3.7);

	
	\coordinate [label=left:{\footnotesize$B$}] (A) at (0.1, 3.5);
	\filldraw[fill=black] (0, 3.5) circle (0.04);

	\filldraw[fill=black] (3, 3.5) circle (0.04);
	\coordinate [label=right:{\footnotesize$B_1$}] (A) at (2.9, 3.5);
	
	\filldraw[fill=black] (0, 5.5) circle (0.04);
	\coordinate [label=left:{\footnotesize$C$}] (A) at (0.1, 5.6);
	
	\filldraw[fill=black] (3, 5.5) circle (0.04);
	\coordinate [label=right:{\footnotesize$C_1$}] (A) at (2.9, 5.6);
	
	\draw[->][](0.2,5.5) -- (2.8, 5.5);
	\draw[<-][](0,5.4) -- (0, 3.7);
	\draw[<-][](3, 5.4) -- (3, 3.7);
	
	\coordinate [label=left:{\footnotesize$B$}] (A) at (7.1, 3.5);
	\filldraw[fill=black] (7, 3.5) circle (0.04);
	
	\filldraw[fill=black] (10, 3.5) circle (0.04);
	\coordinate [label=right:{\footnotesize$B_1$}] (A) at (9.9, 3.5);
	
	\filldraw[fill=black] (7, 5.5) circle (0.04);
	\coordinate [label=left:{\footnotesize$C$}] (A) at (7.1, 5.6);
	
	\filldraw[fill=black] (10, 5.5) circle (0.04);
	\coordinate [label=right:{\footnotesize$C_1$}] (A) at (9.9, 5.6);
	
	\draw[-][snake=zigzag,segment amplitude=1pt, very thin](7.2,5.5) -- (9.8, 5.5);
	\draw[->](7,5.4) -- (7, 3.7);
	\draw[->][](10, 5.4) -- (10, 3.7);
	\draw[<-][](7.2,3.6) -- (9.8, 5.3);
	
	\coordinate [label=left:{\footnotesize$B$}] (A) at (14.1, 3.5);
	\filldraw[fill=black] (14, 3.5) circle (0.04);
	
	\filldraw[fill=black] (17, 3.5) circle (0.04);
	\coordinate [label=right:{\footnotesize$B_1$}] (A) at (16.9, 3.5);
	
	\filldraw[fill=black] (14, 5.5) circle (0.04);
	\coordinate [label=left:{\footnotesize$C$}] (A) at (14.1, 5.6);
	
	\filldraw[fill=black] (17, 5.5) circle (0.04);
	\coordinate [label=right:{\footnotesize$C_1$}] (A) at (16.9, 5.6);
	
	\draw[->][snake=zigzag,segment amplitude=1pt, very thin](14.2,5.5) -- (16.8, 5.5);
	\draw[<-][](14,5.4) -- (14, 3.7);
	\draw[<-][](17, 5.4) -- (17, 3.7);
	\draw[->][](14.2,3.6) -- (16.8, 5.3);

	\coordinate [label=left:{\footnotesize subcase 2.1.}] (A) at (-3.1, 2.5);
	\coordinate [label=left:{\footnotesize  case 3.}] (A) at (3.1, 2.5);
	\coordinate [label=left:{\footnotesize subcase 4.1.}] (A) at (11, 2.5);
	\coordinate [label=left:{\footnotesize subcase 5.1.}] (A) at (18, 2.5);
	
	\end{tikzpicture}
	\caption{\footnotesize\textsf{Cases from group A in the proof of Claim \ref{claim-dla-dwoch}. Cliques $C$ and $C_1$ are in the same order-relationship with $B$.  Arrows depict the order-relationship between cliques, snake lines connect incomparable cliques.}}\label{fig:caseA}
\end{figure}

\medskip\noindent{\sc Subcase 2.1.} $C\mnb C_1$, $C\mnb B$, $C_1\mnb B$  (see Figure \ref{piccase21} ).
\\In this case by (*) we also have $C_1\mnb B_1$ and so, by transitivity, $C \mnb B_1$ and, by (*) we have $C_1\mnb D$. 
To obtain the model $\fA_1$ we modify the connection type $\cont{C_1,D}{\exb}$ as follows.

\begin{figure}[htb]
	\begin{tikzpicture}[scale=0.8]
	
	\draw (-2.2,3.5) ellipse (1.5 and .5);
	\coordinate [label=center:{\footnotesize$B$}] (A) at (-4, 4);
	
	\draw (8.3,3.5) ellipse (1.5 and .5);
	\coordinate [label=center:{\footnotesize$B_1$}] (A) at (9.8, 4);
	\filldraw[fill=black] (8.5, 3.5) circle (0.04);
	\coordinate [label=left:{\footnotesize$f_1(d)$}] (A) at (9.7, 3.5);
	\draw (0,5) ellipse (1.5 and .5);
	\coordinate [label=center:{\footnotesize$C$}] (A) at (-1.7, 5.6);
	\draw[->][style=thick] (-1.8, 5) -- (-3.6, 4.1);
	\draw[->][style=thick] (1, 5.7) -- (4, 5.7);
	\draw (4.8,5) ellipse (1.5 and .5);
	\coordinate [label=center:{\footnotesize$C_1$}] (A) at (6.8, 5.6);
	\draw[->][style=thick] (6.8, 5) -- (9.3, 4.1);
	\draw[->][style=double, thick] (4.6, 4.3) -- (4.1, 2.8);
	\draw[->][style=double, thick] (-0.9, 4.3) .. controls(0,3.3) .. (0.8, 2.8);
	\coordinate [label=left:{\footnotesize$c_1$}] (A) at (4, 5);
	\draw[->][style=dashed] (4, 4.8) -- (3.1, 2.2);
	\draw[->][style=dashed] (4.3, 5) -- (8.3, 3.6);
	\draw[->][style=thick] (2.8, 4.8) -- (0.1, 3.8);
	
	\filldraw[fill=black] (4, 5) circle (0.04);
	\draw [style=thick](2.3,2) ellipse (1.5 and .5);
	\coordinate [label=center:{\footnotesize$D$}] (A) at (0.2, 2.3);
	\filldraw[fill=black] (3, 2) circle (0.04);
	\coordinate [label=left:{\footnotesize$d$}] (A) at (3, 2);
	
	\end{tikzpicture}
	\caption{\footnotesize\textsf{Subcase 2.1 in the proof of Claim \ref{claim-dla-dwoch}: $C\mnb C_1$,  $C\mnb B$, $C_1\mnb B_1$, $C_1\mnb B$ (depicted by single arrows) and $D$ is a fresh copy of the clique $B$ given by Claim \ref{claim-single-dupl}. 
			Double arrow depicts the ordering obtained after transferring connection types from $\fA$ to $\exb$: $C_1\mnb D$. Dashed	arrows indicate that corresponding pairs of  elements realize the same 2-type: $tp^{\exb}[c,d]$ is replaced by $tp^{\exb}[c,f_1(d)]$ in Step 2.1 to obtain $\fA_1$.}}\label{piccase21} 
\end{figure}

\medskip\noindent {\em Step 2.1. }
\medskip\noindent For every $c_1\in C_1$, for every $d\in
D$, if $f_1(d)\in {W}_i^{\fA}(c_1)$, then 

\hspace{2cm}replace $tp^{\exb}[c_1,d]$
by $tp^{\exb}[c_1,f_1(d)]$. 

\medskip
In Step 2.1, every element $c_1\in C_1$ that has a $\gamma_i$-witness in $B_1$ (in the model $\exb$) is attributed a $\gamma_i$-witness in $D$.
Note that $tp^{\exb}[f_1(d)]=tp^{\exb}[d]$, $c_1\mnb d,$
and $c_1\mnb f_1(d),$  so no incompatibility occurs. Moreover, since only types of  $\BBBr$ are changed and they are replaced by types of $\BBBr$, after the modification $T$ remains transitive in $\fA_1$. 

Additionally, since no type $tp^{\exb}[c_1,a]$ with $c_1\in C_1$, $a\in A_{+B}\setminus D$ is changed, no witness for $a$ is stolen from $C_1$. The same is true for elements of $D$: after performing Step 2.1 every element $d\in D$ has a
$\gamma_j$-witness, for $j=1,2,\ldots, m$, as before in $\exb$. This is because $\exb$
is witness-saturated, so when $tp^{\exb}[c,d]$ was replaced by
$tp^{\exb}[c,f_1(d)]$ and an element $c\in C$ was a
$\gamma_j$-witness for $d$  in $\exb$, then another  $\gamma_j$-witness can be found in $A_{+B}$ (in fact infinitely many witnesses as cliques are of finite size).

\medskip	
\noindent {\bf B}. {\em Cliques $C$ and $C_1$ are in the same order-relationship with $B_1$.}\\
In this group we consider only cases that do not belong to group A, i.e.~we have also four subcases (cf.~Figure \ref{fig:caseB}).
The construction in each case from this group starts from $\exba$ that is then modified in order to provide $\gamma_i$-witnesses for all elements from $C\cup E$ in the newly added clique $D$. 
We describe the construction in detail for Subcase 2.2. The argument for remaining cases from this group is similar.

\begin{figure}[htb]
	\begin{tikzpicture}[scale=0.45]
	
	\coordinate [label=left:{\footnotesize$B$}] (A) at (-7.1, 3.5);
	\filldraw[fill=black] (-7, 3.5) circle (0.04);
	
	\filldraw[fill=black] (-4, 3.5) circle (0.04);
	\coordinate [label=right:{\footnotesize$B_1$}] (A) at (-4.1, 3.5);
	
	\filldraw[fill=black] (-7, 5.5) circle (0.04);
	\coordinate [label=left:{\footnotesize$C$}] (A) at (-7.1, 5.6);
	
	\filldraw[fill=black] (-4, 5.5) circle (0.04);
	\coordinate [label=right:{\footnotesize$C_1$}] (A) at (-4.1, 5.6);
	
	\draw[->][](-6.8,5.5) -- (-4.2, 5.5);
	\draw[->][](-7,5.4) -- (-7, 3.7);
	\draw[->][](-4, 5.4) -- (-4, 3.7);
	\draw[->][](-6.8,3.6) -- (-4.2, 5.3);
	
	\coordinate [label=left:{\footnotesize$B$}] (A) at (0.1, 3.5);
	\filldraw[fill=black] (0, 3.5) circle (0.04);
	
	\filldraw[fill=black] (3, 3.5) circle (0.04);
	\coordinate [label=right:{\footnotesize$B_1$}] (A) at (2.9, 3.5);
	
	\filldraw[fill=black] (0, 5.5) circle (0.04);
	\coordinate [label=left:{\footnotesize$C$}] (A) at (0.1, 5.6);
	
	\filldraw[fill=black] (3, 5.5) circle (0.04);
	\coordinate [label=right:{\footnotesize$C_1$}] (A) at (2.9, 5.6);
	
	\draw[->][](0.2,5.5) -- (2.8, 5.5);
	\draw[->][](0,5.4) -- (0, 3.7);
	\draw[->][](3, 5.4) -- (3, 3.7);
	\draw[-][snake=zigzag,segment amplitude=1pt, very thin](0.2,3.6) -- (2.8, 5.3);
	
	\coordinate [label=left:{\footnotesize$B$}] (A) at (7.1, 3.5);
	\filldraw[fill=black] (7, 3.5) circle (0.04);
	
	\filldraw[fill=black] (10, 3.5) circle (0.04);
	\coordinate [label=right:{\footnotesize$B_1$}] (A) at (9.9, 3.5);
	
	\filldraw[fill=black] (7, 5.5) circle (0.04);
	\coordinate [label=left:{\footnotesize$C$}] (A) at (7.1, 5.6);
	
	\filldraw[fill=black] (10, 5.5) circle (0.04);
	\coordinate [label=right:{\footnotesize$C_1$}] (A) at (9.9, 5.6);
	
	\draw[-][snake=zigzag,segment amplitude=1pt, very thin](7.2,5.5) -- (9.8, 5.5);
	\draw[->][](7.2,5.4) -- (9.8, 3.7);
	\draw[->][](7,5.4) -- (7, 3.7);
	\draw[->][](10, 5.4) -- (10, 3.7);
	
	\coordinate [label=left:{\footnotesize$B$}] (A) at (14.1, 3.5);
	\filldraw[fill=black] (14, 3.5) circle (0.04);
	
	\filldraw[fill=black] (17, 3.5) circle (0.04);
	\coordinate [label=right:{\footnotesize$B_1$}] (A) at (16.9, 3.5);
	
	\filldraw[fill=black] (14, 5.5) circle (0.04);
	\coordinate [label=left:{\footnotesize$C$}] (A) at (14.1, 5.6);
	
	\filldraw[fill=black] (17, 5.5) circle (0.04);
	\coordinate [label=right:{\footnotesize$C_1$}] (A) at (16.9, 5.6);
	
	\draw[-][snake=zigzag,segment amplitude=1pt, very thin](14.2,5.5) -- (16.8, 5.5);
	\draw[<-][](14.2,5.4) -- (16.8, 3.7);
	\draw[<-][](14,5.4) -- (14, 3.7);
	\draw[<-][](17, 5.4) -- (17, 3.7);
	
	\coordinate [label=left:{\footnotesize subcase 2.2.}] (A) at (-3.1, 2.5);
	\coordinate [label=left:{\footnotesize subcase 2.3.}] (A) at (3.9, 2.5);
	\coordinate [label=left:{\footnotesize subcase 4.2.}] (A) at (11, 2.5);
	\coordinate [label=left:{\footnotesize subcase 5.2.}] (A) at (18, 2.5);

	\end{tikzpicture}
	\caption{\footnotesize\textsf{Cases from group B in the proof of Claim \ref{claim-dla-dwoch}. Cliques $C$ and $C_1$ are in the same order-relationship with $B_1$.}}\label{fig:caseB}
\end{figure}

\medskip\noindent{\sc Subcase 2.2.}  $C\mn_\fA C_1$, $C\mn_\fA B$ and $B\mn_{\fA} C_1$  (see Figure \ref{sub22}).\\
Unlike other cases, we start from the structure $\exba$ defined by  adding $D$ as a copy of $B_1$, so  (recall)  $\langle D,B_1\rangle_{\exba} \equiv_{f,\id} \langle B,B_1\rangle_{\fA}$ and
$\langle D,A\setminus B_1\rangle_{\exba} \equiv_{f_1,\id} \langle B_1,A\setminus B_1\rangle_{\fA}$. 

In this case (similarly to subcase 2.1.) we have
\begin{itemize}
	\item $B\mn_{\fA} B_1$,
	\item $C\mn_{\fA} B_1$,
	\item $C\mn_{\exba} D$, $B\mn_{\exba} D$ and $C_1\mn_{\exba} D$,
	\item $D\mn_{\exba} B_1$.
	\item $\cont{C,D}{\exba}\equiv_{\id,f_1} \cont{C,B_1}{\fA}$. 
\end{itemize}
We modify $\exba$ to obtain a structure $\fA_1$ in which elements of $C\cup E$ will have their $\gamma_i$-witnesses also in $D$. This is done in two steps. First we modify the connection type $\cont{C,D}{\exba}$ to ensure required witnesses for elements in $C$ as follows. 
\begin{figure}[htb]
	\begin{tikzpicture}[scale=0.8]
	
	\draw (-1.4, 3.3) ellipse (1.5 and .5);
	\coordinate [label=left:{\footnotesize$B$}] (A) at (-2.7, 3.5);
	\filldraw[fill=black] (-1.4, 3.5) circle (0.03);
	\coordinate [label=right:{\footnotesize$f(d)$}] (A) at (-1.4, 3.3);
	\draw[->][style=dashed](-2, 5.9) .. controls (-1,5) and (-2.5,4.2) .. (-1.5,3.5);
	\filldraw[fill=black] (-2, 6) circle (0.03);
	\coordinate [label=left:{\footnotesize$E\ni e$}] (A) at (-2, 6);
	\draw[->][style=double](-2,6.1) .. controls (-1,6) and (0,8) .. (1.4, 7.4);

	\filldraw[fill=black] (4, 3.5) circle (0.04);
	\coordinate [label=right:{\footnotesize$B_1$}] (A) at (4, 3.5);
	
	\filldraw[fill=black] (0, 5.5) circle (0.04);
	\coordinate [label=left:{\footnotesize$C$}] (A) at (0, 5.6);
	
	\filldraw[fill=black] (4, 5.5) circle (0.04);
	\coordinate [label=right:{\footnotesize$C_1$}] (A) at (4, 5.6);
	
	\draw (1.5,7.5) ellipse (1.5 and .5);
	\coordinate [label=right:{\footnotesize$D$}] (A) at (2.8, 7.8);
	\filldraw[fill=black] (1.5, 7.4) circle (0.03);
	\coordinate [label=right:{\footnotesize$d$}] (A) at (1.5, 7.5);
	
	\draw[->][style=thick](0.2,5.5) -- (3.8, 5.5);
	\draw[->][style=thick](0.2,5.4) -- (3.8, 3.6);
	\draw[->][style=thick][style=thick]
	(0.2,3.5) -- (3.9, 3.5);
	\draw[->][style=thick](0,5.4) -- (0, 3.7);
	\draw[->][style=thick](4, 5.4) -- (4, 3.7);
	\draw[->][style=double, thick](0.1,5.6) -- (1.7, 6.8);
	\draw[->][style=thick]
	(3.9, 5.6) -- (2.1, 6.9);
	
	\draw[->][style=thick](0.1,3.7) -- (1.9, 6.8);
	\draw[->][style=thick](0.2,3.6) -- (3.8, 5.3);
	\draw[->][style=thick](2,6.8) -- (3.9, 3.7);
	
	\end{tikzpicture}
	\caption{\footnotesize\textsf{Subcase 2.2 in the proof of Claim \ref{claim-dla-dwoch}:  $C\mna C_1$, $C\mna B$, $B\mna C_1$ and $D$ is a fresh copy of the clique $B_1$ given by Claim \ref{claim-single-dupl}. Then $C\mnb D$, $B\mnb D$, $C_1\mnb D$, $D\mnb B_1$ and $\cont{C,D}{\exba}\equiv_{\id,f_1} \cont{C,B_1}{\fA}$. Single arrows depict types in $\fA_1$  that remain as in $\exba$. The connection type $\cont{C,D}{\exba}$ will be transferred from $\cont{C,B}{\exba}$ in Step 2.2.a (this is depicted by a double arrow).  If $f(d)$ is a $\gamma_i$-witness of $e\in E$ then in  $\fA_1$ the  type  $tp^{\exba}[e,d]$ is replaced by $tp^{\exba}[e,f(d)]$ in Step~2.2.b (for every $e\in E$,  depicted by dashed and double arrows, respectively). }}\label{sub22}
\end{figure}

\medskip\noindent
{\em Step 2.2.a. } 
\medskip\noindent Define 
$\cont{C,D}{\fA_1}:=_{\id,f} \cont{C,B}{\exb}$. 

\medskip Note that since $D$ is a copy of $B_1$ which is isomorphic to $B$, after  the modification of $tp^{\fA_1}[C,D]$ no incompatibility occurs. Additionally, we have ensured that every element of $C$ that has a $\gamma_i$-witness in $B$ is attributed a $\gamma_i$-witness in $D$,
thus ensuring condition {\em(ii)} for $C$ of our Claim:

for every $a\in C:  \quad$ if \quad ${W}_i^{\fA}(a)\cap B\neq \emptyset \mbox{ \quad then \quad}
{W}_i^{\fA_1}(a)\cap D\neq \emptyset.$

\noindent Also, since only a 
finite number of 2-types are changed and $\exba$ is witness-saturated, every element of $C\cup D$ has its witness somewhere outside $C\cup D$, as it had before. 
\medskip

It remains to guarantee that every  element  $e\in E$ has its  $\gamma_i$-witness in $D$ in case when it had a $\gamma_i$-witness in $B$. This is done by a modification of the connection type  
$\cont{E,B}{\exba}$ as follows.

\medskip\noindent
{\em Step 2.2.b.} 
\medskip\noindent For every $e\in E$, for every $d\in
D$, if $f(d)\in {W}_i^{\fA}(e)$, then 

\hspace{2cm}replace $tp^{\exba}[e,d]$
by $tp^{\exba}[e,f(d)]$.

\medskip After performing Step 2.2.b we have ensured that 
condition {\em(ii)} of our Claim holds for $E$.
Let us note that if $e\in E$ and  ${W}_i^{\fA}(e)\cap B\neq \emptyset$ then  $e\mn_{\exba} B_1$ and so by construction  we have also $e\mn_{\exba} D$ (cf.~condition (5) of Claim~\ref{claim-single-dupl}). So, in the above step no incompatibility occurs.
Moreover, observe that only a finite number of 2-types are changed in Step 2.2.b, so even if an element $d\in D$ lost a witness in $E$, $d$ still has another witness outside $E$ (since $\exba$ was witness-saturated). Hence, the structure obtained satisfies all conditions of our claim.

\medskip	
\noindent {\bf C}. {\em Remaining cases.}\\
Here we have two subcases: 4.3 and 5.3 (cf.~Figure \ref{fig:caseC}) that again can be handled in the same way. The construction proceeds by modifying the structure $\exb$. In these two cases the modification attributing required witnesses for elements from $C_1$  replaces some 2-types from $\BBBn$ by 2-types from $\BBBr$. Hence an additional step is needed to ensure that the interpretation of $T$ is transitive in the newly constructed structure. 
We give the details for Subcase 4.3, the argument for Subcase~5.3 is symmetric.

\begin{figure}[htb]
	\begin{tikzpicture}[scale=0.45]	
	\coordinate [label=left:{\footnotesize$B$}] (A) at (0.1, 3.5);
	\filldraw[fill=black] (0, 3.5) circle (0.04);
	
	\filldraw[fill=black] (3, 3.5) circle (0.04);
	\coordinate [label=right:{\footnotesize$B_1$}] (A) at (2.9, 3.5);
	
	\filldraw[fill=black] (0, 5.5) circle (0.04);
	\coordinate [label=left:{\footnotesize$C$}] (A) at (0.1, 5.6);
	
	\filldraw[fill=black] (3, 5.5) circle (0.04);
	\coordinate [label=right:{\footnotesize$C_1$}] (A) at (2.9, 5.6);
	
	\draw[-][snake=zigzag,segment amplitude=1pt, very thin](0.2,5.5) -- (2.8, 5.5);
	\draw[-][snake=zigzag,segment amplitude=1pt, very thin](0.2,5.4) -- (2.8, 3.7);
	\draw[->][](0,5.4) -- (0, 3.7);
	\draw[->][](3, 5.4) -- (3, 3.7);
	\draw[-][snake=zigzag,segment amplitude=1pt, very thin](0.2,3.6) -- (2.8, 5.3);
	
	\coordinate [label=left:{\footnotesize$B$}] (A) at (7.1, 3.5);
	\filldraw[fill=black] (7, 3.5) circle (0.04);
	
	\filldraw[fill=black] (10, 3.5) circle (0.04);
	\coordinate [label=right:{\footnotesize$B_1$}] (A) at (9.9, 3.5);
	
	\filldraw[fill=black] (7, 5.5) circle (0.04);
	\coordinate [label=left:{\footnotesize$C$}] (A) at (7.1, 5.6);
	
	\filldraw[fill=black] (10, 5.5) circle (0.04);
	\coordinate [label=right:{\footnotesize$C_1$}] (A) at (9.9, 5.6);
	
	\draw[-][snake=zigzag,segment amplitude=1pt, very thin](7.2,5.5) -- (9.8, 5.5);
	\draw[-][snake=zigzag,segment amplitude=1pt, very thin](7.2,5.4) -- (9.8, 3.7);
	\draw[<-][](7,5.4) -- (7, 3.7);
	\draw[<-][](10, 5.4) -- (10, 3.7);
	\draw[-][snake=zigzag,segment amplitude=1pt, very thin](7.2,3.6) -- (9.8, 5.3);
	
	\coordinate [label=left:{\footnotesize  subcase 4.3.}] (A) at (3.8, 2.5);
	\coordinate [label=left:{\footnotesize subcase 5.3.}] (A) at (10.9, 2.5);
	
	\end{tikzpicture}
	\caption{\footnotesize\textsf{Case C in the proof of Claim \ref{claim-dla-dwoch} consists of two subcases: 4.3 and 5.3.}}\label{fig:caseC}
\end{figure}

\medskip\noindent{\sc Subcase 4.3.} $C\simb C_1$, $C\mnb B$, $C_1\simb B$ 
and  $C\simb B_1$  (cf.~Figure~\ref{piccase42a}).

\noindent In this case (recalling our notation $X\lesssim Y$ iff $X<Y$ or $X \sim Y$) we have
\begin{enumerate}[a.]
	\item\label{waapp} $B\simb B_1$, (otherwise, if $B_1\mnb B$, then by transitivity $C_1\mnb B$; a contradiction with $C_1\simb B$, the same if $B\mnb B_1$),
	\item\label{wbapp} $C_1\simb D$ (by (5) in Claim~\ref{claim-single-dupl}),
	\item\label{wcapp} $C\mnb D$, by (*), \nb{L wyciete 'and'}
	\item\label{wcaapp} $B\simb D$, by construction. 
	
	\medskip
	\hspace{-8mm}
	\noindent Moreover, if  $V$ and $V_1$ are cliques in $\exb$ such that   $D \mnb V$ and
	$V_1\mnb C_1$ then we have also: 
	
	\medskip
	\item\label{weapp} $C\mnb V$ (since by c. $C\mnb D$),
	\item\label{wfapp} $V_1\lsimb B$  (otherwise, if $B \mnb V_1$ then by transitivity we have $B \mnb C_1$, a contradiction with $C_1\simb B$),
	\item\label{wdapp} $C_1 \lsimb V$ (otherwise  $D\mnb V \mnb C_1$, a contradiction with \ref{wbapp}.),
	\item\label{wgapp} $V_1\lsimb D$  (by construction of $\exb$ and by \ref{wfapp}.), 
	\item\label{whapp} $V_1\lsimb V$  (otherwise, if $V\mnb V_1$, then by transitivity we get $D\mnb V_1$; a contradiction with \ref{wgapp}.).
\end{enumerate}

\begin{figure}[thb]
	\begin{tikzpicture}[scale=0.8]
	
	\filldraw[fill=black] (0, 3.5) circle (0.06);
	\coordinate [label=left:{\footnotesize$B$}] (A) at (0, 3.1);
	
	\filldraw[fill=black] (4, 3.5) circle (0.06);
	\coordinate [label=right:{\footnotesize$B_1$}] (A) at (4, 3.1);
	
	\filldraw[fill=black] (0, 5.5) circle (0.06);
	\coordinate [label=left:{\footnotesize$C$}] (A) at (0, 5.1);
	
	\filldraw[fill=black] (4, 5.5) circle (0.06);
	\coordinate [label=right:{\footnotesize$C_1$}] (A) at (4, 5.1);
	
	\filldraw[fill=black] (2, 7) circle (0.09);
	\coordinate [label=right:{\footnotesize$D$}] (A) at (2, 7.4);
	
	\draw[-][snake=zigzag,segment amplitude=1pt, very thin](0.3,3.7) -- (3.8, 5.3);
	\draw[-][snake=zigzag,segment amplitude=1pt, very thin](0.1,5.5) -- (3.8, 5.5);
	\draw[-][snake=zigzag,segment amplitude=0.7pt, very thin](0.2,5.4) -- (3.8, 3.7);
	\draw[-][snake=zigzag,segment amplitude=0.7pt, very thin](0.3,3.5) -- (3.8, 3.5);
	\draw[-][snake=zigzag,segment amplitude=0.7pt, very thin](0.15,3.7) --  (2, 6.8);
	\draw[->][thick](0,5.4) -- (0, 3.7);
	\draw[->][thick](4, 5.4) -- (4, 3.7);
	\draw[->][thick](0.1,5.6) -- (1.8, 6.9);
	\draw[-][snake=zigzag,segment amplitude=0.7pt, very thin](3.9,5.7) -- (2.2, 6.8);

	\filldraw[fill=black] (-3, 5.5) circle (0.06);
	\coordinate [label=left:{\footnotesize$V$}] (A) at (-3, 5.1);
	\draw[<-][thick](-2.8, 5.9).. controls(-2, 7) .. (1.7, 7);
	\draw[<-][dotted, thick](-2.8, 5.7).. controls(-0.5, 6.5) .. (3.85, 5.6);
	\draw[<-][thick](-2.8, 5.5) -- (-0.1, 5.5);
	\draw[<-][thick](-2.8, 5.3) -- (-0.1, 3.6);
	
	\filldraw[fill=black] (7, 5.5) circle (0.06);
	\coordinate [label=right:{\footnotesize$V_1$}] (A) at (7, 5.1);
	\draw[->][thick](6.9, 5.4) -- (4.2, 3.6);
	\draw[->][thick](6.9, 5.5) -- (4.2, 5.5);
	\draw[->][dotted, thick](6.9, 5.7) .. controls (5,8) and (-2.5,9) .. (-3,5.8);
	\draw[->][dotted, thick](6.6, 5.7) .. controls (5.1,6.6)  .. (2.2,7);
	\draw[->][dotted, thick](6.6, 5.4) .. controls (3.4,4.1)  .. (0.3, 3.6);
	\end{tikzpicture}
	\caption{\footnotesize\textsf{Subcase 4.3 in the proof of Claim \ref{claim-dla-dwoch}: $C\simb C_1$, $C\mnb B$, $C_1\simb B$, $C\simb B_1$ and  $D$ is a fresh copy of the clique $B$ given by Claim \ref{claim-single-dupl}, so $C_1\simb D$. The cliques $V$ and $V_1$ satisfy   $D \mnb V$ and $V_1\mnb C_1$. Single arrows depict types from $\BBBr$ that will  remain in $\fA_1$ as in
			$\exb$. Dotted arrows mean that the according types are in $\BBBr\cup\BBBn$ (i.e. $V_1\lsimb B$, $C\lsimb V$, and so on). Snake lines connect incomparable cliques. The connection type	$\cont{C_1,D}{\fA_1}$ will be transferred from	 $\cont{C_1,B_1}{\exb}$	  in Step  4.3.a.}}\label{piccase42a}
\end{figure}

To ensure that elements of $C_1$ have their $\gamma_i$-witnesses in $D$ as they had them in $B_1$ 
we start by modifying the connection type $\cont{C_1,D}{\exb}$ by transferring the connection type from $\cont{C_1,B_1}{\exb}$ as follows.

\medskip\noindent{\em Step 4.3.a.} 
\medskip\noindent  
Define $\cont{C_1,D}{\fA_1} :=_{\id,f_1}\cont{C_1,B_1}{\exb}$ (see Figure \ref{piccase42a}). 

\medskip Note that since $D$ is a copy of $B$ which is isomorphic to $B_1$, the above modification of $\cont{C_1,D}{\exb}$ is well defined. Additionally we have ensured, that every element of $C_1$  has its $\gamma_i$-witness in $D$ as before in $B_1$ (thus ensuring condition {\em(iii)} of Claim \ref{claim-dla-dwoch}).  Also, since only $\BBBn$-types are changed, no witness is stolen.  
However, since $C_1\simw_\exb D$ and after Step 4.3.a. $C_1\mnaa D$, it is not guaranteed that $T$ is transitive in $\fA_1$. 
To make $T$ transitive in $\fA_1$ we extend (in a minimal way) the partial order on the set of cliques in $\exb$ so that it contains the pair $(C_1,D)$, as described in Proposition~\ref{prop:partial-extend}.  
More precisely, the goal  of the next step is to ensure that 

\begin{verse}
	for every clique $V$ such that $D\mnb V$ or $V=D$, and  
	
	for every clique  $V_1$ such that $V_1\mnb C_1$ or $V_1= C_1$
	
	\noindent we will have 
	$V_1\mnaa V$ (see Figure \ref{piccase43a}).
\end{verse}

\begin{figure}[htb]
	\begin{tikzpicture}[scale=0.8]
	
	\filldraw[fill=black] (0, 3.5) circle (0.06);
	\coordinate [label=left:{\footnotesize$B$}] (A) at (0, 3.1);
	
	\filldraw[fill=black] (4, 3.5) circle (0.06);
	\coordinate [label=right:{\footnotesize$B_1$}] (A) at (4, 3.1);
	
	\filldraw[fill=black] (0, 5.5) circle (0.06);
	\coordinate [label=left:{\footnotesize$C$}] (A) at (0, 5.1);
	
	\filldraw[fill=black] (4, 5.5) circle (0.06);
	\coordinate [label=right:{\footnotesize$C_1$}] (A) at (4, 5.1);
	
	\filldraw[fill=black] (2, 7) circle (0.09);
	\coordinate [label=right:{\footnotesize$D$}] (A) at (2, 7.4);
	
	\draw[-][snake=zigzag,segment amplitude=0.7pt, very thin](0.2,3.6) -- (3.8, 5.4);
	\draw[-][snake=zigzag,segment amplitude=0.7pt, very thin](0.2,5.4) -- (3.8, 3.7);
	\draw[-][snake=zigzag,segment amplitude=0.7pt, very thin](0.3,3.5) -- (3.8, 3.5);
	\draw[-][snake=zigzag,segment amplitude=0.7pt, very thin](0.1,5.5) -- (3.8, 5.5);
	\draw[->][thick](0,5.4) -- (0, 3.7);
	\draw[->][thick](4, 5.4) -- (4, 3.7);
	\draw[->][thick](0.1,5.6) -- (1.8, 6.9);
	\draw[->][double, thick](3.9,5.7) -- (2.1, 6.8);

	\filldraw[fill=black] (-3, 5.5) circle (0.06);
	\coordinate [label=left:{\footnotesize$V$}] (A) at (-3, 5.1);
	\draw[<-][thick](-2.7, 6).. controls(-2, 7) .. (1.7, 7);
	\draw[<-][double, thick](-2.8, 5.7).. controls(-0.5, 6.5) .. (3.85, 5.6);
	\draw[<-][thick](-2.8, 5.5) -- (-0.1, 5.5);
	\draw[<-][thick](-2.8, 5.3) -- (-0.1, 3.6);
	
	\filldraw[fill=black] (1.6, 1.5) circle (0.04);
	\coordinate [label=right:{\footnotesize$v_1'$}] (A) at (1.5, 1.3);
	\draw[->](1.7, 1.9) -- (1.9, 6.7);
	\draw[->](1.4, 1.8) -- (0.1, 3.3);
	
	\draw[->](1.4, 1.7) .. controls (-1,2) and (-1.5,4) .. (-2.9, 5.1);
	\draw[->](1.55, 1.8) .. controls (0.5,5) and (0.5,5) .. (0.1, 5.3);
	
	\filldraw[fill=black] (7, 5.5) circle (0.06);
	\coordinate [label=right:{\footnotesize$V_1$}] (A) at (7, 5.1);
	\draw[->][thick](6.9, 5.4) -- (4.2, 3.6);
	\draw[->][thick](6.9, 5.5) -- (4.2, 5.5);
	\draw[->][double, thick](6.9, 5.7) .. controls (5,8) and (-2.5,9) .. (-3,5.8);
	\draw[->][double, thick](6.6, 5.7) .. controls (5.1,6.6)  .. (2.2,7);

	\end{tikzpicture}
	\caption{\footnotesize\textsf{Subcase 4.3 in the proof of Claim \ref{claim-dla-dwoch}: the intended model $\fA_1$ after Step 4.3.b. $\fA_1$ is obtained modifying connection types $\cont{C_1,V}{\exb}$, $\cont{V_1,D}{\exb}$ and $\cont{V_1,V}{\exb}$ (this is depicted by double arrows). Single arrows depict types from $\BBBr$ that remain in $\fA_1$ as in $\exb$. }}\label{piccase43a}
\end{figure}

\medskip\noindent{\em Step 4.3.b.}
For every clique $V$  such that $D \mnb V$ (see Figure \ref{piccase43a})
\begin{enumerate}
	\item if $\neg(C_1\mnb V)$, then $\cont{C_1,V}{\fA_1}:=_{f_C,\id}\cont{C,V}{\exb}$, where $f_C$ is any isomorphism from $C_1$ to $C$. 
	\item For every clique  $V_1$ such that $V_1\mnb C_1$
	\begin{enumerate}
		\item
		if $\neg(V_1 \mnb D)$ then 
		for every $v_1\in V_1$\\ find $v_1'\in A_{+B}$ such that $v_1' \mnb C$ and $tp^{\exb}[v_1']=tp^{\exb}[v_1]$, \\
		for every $d\in D$ define $tp^{\fA_1}[v_1,d]$ as $tp^{\exb}[v_1',d]\in\BBBr$. 
		\item 
		if $\neg(V_1 \mnb V)$ then for every $v_1\in V_1$,\\ 
		find $v_1'\in A_{+B}$ such that $v_1'\mnb C$ and $tp^{\exb}[v_1']=tp^{\exb}[v_1]$, \\
		for every $v\in V$ 
		define $tp^{\fA_1}[v_1,v]$ as $tp^{\exb}[v_1',v]\in\BBBr$. 

	\end{enumerate} 
	
\end{enumerate} 
\medskip
Observe that if $\neg(C_1 \mnb V)$ then by \ref{wdapp}. $V \simb C_1$ and by \ref{weapp}., $C \mnb V$.  This implies that after performing line (1) in Step~4.3.b.~$C_1\mnaa V$ as desired.

Now, we argue that the required elements $v_1'$ in lines (a) and (b) can always be found. 
Recall that it is assumed  $\spl^{\exb}(C_1)=\spl^{\exb}(C)$ that in particular implies $\In^{\exb}(C_1)=\In^{\exb}(C)$. Now, $v_1\in V_1$ implies $tp^\exb(v_1)\in \In^\exb(C_1)= \In^\exb(C)$. Hence, the required $v_1'$ in line (a) can be found in some clique $V_1'\mnb C$. Then by \ref{wcapp}.~we have also  $V_1'\mnb D$, so $v_1'\mnb D$ (also  $v_1'\mnb V$) and each of the types $tp^\exb[v_1',d]$ belongs to $\BBBr$, as indicated. Similar argument applies to line (b).

Since $T$ was transitive in $\exb$, then after performing the above step, the relation $\mnaa$ is a strict partial order on the set of cliques of $\fA_1$, and by Proposition~\ref{l:partialorder}, $T$ is transitive in $\fA_1$.

Finally, observe that in $\exb$ only types from $\BBBn$ were modified   (cf. \ref{wdapp}.--\ref{whapp}.), 
so all  elements of $\fA_1$ have their witnesses as before in $\exb$ (recall, incomparable witnesses in $\fodeg$ are not allowed). Moreover, whenever we
replaced a 2-type $\beta\in \BBBn$, realized by some pair $(a,b)$, by a 2-type $\beta'\in\BBBr$, then $\beta$ and $\beta'$ agreed with the 1-types contained, namely $\beta \obciety x =\beta' \obciety x$ and $\beta \obciety y = \beta' \obciety y$, so the construction preserved cliques and witnesses within cliques. 
All modified 2-types were replaced by 2-types realized in $\exb$, so we have ensured that the structure $\fA_1$ fulfills conditions {\em(a)-(d)} of Proposition  \ref{claim-iff}, i.e.~$\fA_1\models \Psi$. 
Finally, $\spl^{\fA_1}(D)=\spl^{\fA_1}(B)= \spl^{\exb}(B)$, as in line (a) every 1-type $tp^\exb[v_1]\in \In^\exb(C_1)\subseteq \In^\exb(B_1)=\In^\exb(B)$, so $\In^{\fA_1}(D)=\In^{\exb}(B)$.

\end{proof}

Now we are ready to show the announced Corollary \ref{wniosek-dla-wielu}. It says that when several ele\-ments $a_1,
a_2,\ldots,a_p$ of a model $\fA$ have $\gamma_i$-witnesses in
several distinguished cliques that realize the same splice, one
can extend $\fA$ by a single clique $D$ (realizing the
same splice) in which $a_1, a_2,\ldots,a_p$ have their
$\gamma_i$-witnesses.  In the proof we iteratively apply Claim \ref{claim-dla-dwoch} for the cliques $C=Cl^{\fA}(a_1)$ and $C_1=Cl^{\fA}(a_i)$, where $i=2,\ldots,p$.

\begin{corollary}\label{wniosek-dla-wielu}
Assume $\fA$ is countable witness-saturated, $\gamma_i\in \Psi$, 
$\mathcal C=\{V_1,\ldots,V_p\}, \mathcal B=\{U_1,\ldots,U_p\}\subseteq Cl^\fA\setminus {\mathbb K}(\fA)$ and for every $k$ ($1\leq k \leq p$):

\hspace{1cm}$\spl(U_1)=\spl(U_k)$, $\spl(V_1)=\spl(V_k)$  and 

\hspace{1cm}there is $a\in V_k$ such that $W_i(a)\cap U_k\neq \emptyset$. 

\noindent Then, there
is an  extension $\fA'$ of $\fA$ by at most one clique $D\subseteq A'$
such that
\begin{enumerate}[(i)]
\item $\fA'\models \Psi$ and $\fA'$ is  witness-saturated,
\item for every $k$ ($1 \leq k \leq p$), for every $a\in V_k$, if $W_i^\fA(a)\cap U_k\neq \emptyset$, then ${W}_i^{\fA'}(a)\cap D\neq\emptyset,$ 
\item $\spl^{\fA'}(D)= \spl^\fA(U_1). $
\end{enumerate}

\end{corollary}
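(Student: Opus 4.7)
The approach is to apply Claim \ref{claim-dla-dwoch} iteratively along the indices $k = 2, \ldots, p$, as suggested by the discussion preceding the statement, and then collapse the intermediate cliques produced by the iteration into a single final clique $D$ via the redundancy mechanism of Proposition \ref{prop-redundant-first}. The base case $p = 1$ is immediate from Claim \ref{claim-single-dupl}: since $U_1 \notin \mathbb{K}(\fA)$, some $U_1' \neq U_1$ with $\spl(U_1') = \spl(U_1)$ exists, and simple duplicability of $U_1$ via $U_1'$ yields the desired extension. For $p \geq 2$, I reorder the indices WLOG so that $V_1$ is $\mna$-minimal among $V_1, \ldots, V_p$, which (since all $V_k$'s share a common splice, so any pair is $\mna$-comparable or $\sima$-incomparable) guarantees $V_1 \lsima V_k$ for every $k \geq 2$, validating the ordering prerequisite of Claim \ref{claim-dla-dwoch} at each step.

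Set $\fA_1 := \fA$ and $D_1 := U_1$, and for $k = 2, \ldots, p$ apply Claim \ref{claim-dla-dwoch} to $\fA_{k-1}$ with parameters
\[
C := V_1, \quad C_1 := V_k, \quad B := D_{k-1}, \quad B_1 := U_k, \quad E := V_2 \cup \cdots \cup V_{k-1},
\]
producing $\fA_k$ that extends $\fA_{k-1}$ by a fresh clique $D_k$ with $\spl^{\fA_k}(D_k) = \spl^\fA(U_1)$ by clause (iv). The side conditions hold: all of $D_{k-1}, U_k, U_1$ share the common splice $\spl^\fA(U_1)$; some element of $V_1$ has a $\gamma_i$-witness in $D_{k-1}$ by induction on $k$ using clause (ii) of the preceding step; and some element of $V_k$ has a $\gamma_i$-witness in $U_k$ by hypothesis. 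Iterating clauses (ii) and (iii) along the chain then guarantees that in $\fA_p$, every $a \in V_j$ ($1 \leq j \leq p$) with $W_i^\fA(a) \cap U_j \neq \emptyset$ receives a $\gamma_i$-witness in the final clique $D_p$.

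To reduce the extension to a single new clique, let $S := D_2 \cup \cdots \cup D_{p-1}$. Because every $D_k$ shares the splice $\spl^\fA(U_1)$ with $U_1$, none lies in $\mathbb{K}(\fA_p)$, so $S$ is a segment disjoint from $K(\fA_p)$. Witness-saturation of $\fA_p$ (preserved by every invocation of Claim \ref{claim-dla-dwoch}) combined with finiteness of $S$ makes $S$ redundant in $\fA_p$: any element outside $S$ with a witness in $S$ either has its entire witness set within $K(\fA_p) \subseteq A \cup D_p$, or has infinitely many witnesses and so one of them must lie outside the finite set $S$. By Proposition \ref{prop-redundant-first}, $\fA' := \fA_p \restr (A \cup D_p) \models \Psi$; finiteness of $S$ preserves witness-saturation, condition (ii) is retained because the $D_p$-witnesses of each $V_j$ survive the restriction, and $\spl^{\fA'}(D_p) = \spl^\fA(U_1)$ holds because the connections between $D_p$ and cliques of $A$ were transferred through the chain ultimately from the original connections of $U_1$ with $A$ and involve no element of $S$.

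\textbf{The main obstacle} is tracking how the connection types between $D_p$ and the original cliques of $A$ propagate along the chain $U_1 = D_1 \to D_2 \to \cdots \to D_p$, and verifying that the prerequisites of Claim \ref{claim-dla-dwoch} (notably $V_1 \lsima V_k$ and the presence of starting witnesses in $V_1$ and $V_k$) persist through iterations in which Subcase 4.3 of that claim may modify connections within $A$. The resolution is that the within-$A$ modifications affect only connections involving the cliques that played the role of $C_1$ in strictly earlier iterations, so the pair $(V_k, U_k)$ and the $\mna$-minimality of $V_1$ remain untouched at the moment they are needed.
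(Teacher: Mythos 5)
Your proposal follows essentially the same route as the paper's own proof: iterate Claim~\ref{claim-dla-dwoch} with $C=V_1$ chosen $\mna$-minimal in $\mathcal C$, $B=D_{k-1}$, $B_1=U_k$, $C_1=V_k$ and $E$ the union of the earlier $V_j$'s, maintaining the same invariant, and then discard the intermediate cliques as a redundant segment via Proposition~\ref{prop-redundant-first}. The only detail worth adding is the degenerate case $D_{k-1}=U_k$ (e.g.\ when several $U_j$ coincide with $U_1$), where Claim~\ref{claim-dla-dwoch} is not applicable because it assumes $B\neq B_1$; as in the paper, that step should simply be skipped since the required witnesses already lie in $D_{k-1}$.
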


\begin{proof} 
Iteratively applying Claim~\ref{claim-dla-dwoch} we construct a sequence of cliques,  $D^{(1)},D^{(2)},\ldots,D^{(p)}$ and a sequence of models $\fA^{(1)},\fA^{(2)},\ldots,\fA^{(p)}$ such that $\fA^{(k)}={\fA_1^{(k-1)}}$ (cf. Claim 
\ref{claim-dla-dwoch}) and all the elements of the cliques $V_1, \ldots, V_{k}$ have their
$\gamma_i$-witnesses in the clique $D^{(k)}$, $k=1,\ldots p$. The statement of the corollary is obtained by setting $\fA' := \fA^{(p)}$ and $D := D^{(p)}$.

The interesting case is when $p>2$. W.l.o.g.~assume that the clique $V_1$ is a minimal element in $\mathcal C$ under the clique-order on $\fA$ and set  $C:=V_1$. The desired model $\fA'$ will be constructed in $p$ steps. 

\medskip\noindent
{\em Step 1.} Define $D^{(1)}=U_1$ and $\fA_{}^{(1)}=\fA$.

\medskip
Trivially, for every $a\in V_1$, if $W_i^\fA(a)\cap U_1\neq \emptyset$ then ${W}_i^{\fA^{(1)}}(a)\cap D^{(1)}\neq\emptyset$ and $\spl^{\fA^{(1)}}(D^{(1)})= \spl^\fA(U_1)$.

In the next steps, using Claim 
\ref{claim-dla-dwoch},  we will construct  a sequence of models  $\fA^{(k)}={\fA_1^{(k-1)}}$  such that all the elements of $\bigcup_{j=1}^k V_j$ will have their
$\gamma_i$-witnesses in  $D^{(k)}$.
The following  invariant  is maintained in the process: 
\begin{itemize}
	\item[-] $\fA^{(k)}\models \Psi$ and $\fA^{(k)}$ is  witness-saturated,
	\item[-] ${W}_i^{\fA^{(k)}}(a)\cap D^{(k)}\neq\emptyset,$ for $a\in \bigcup_{j=1}^{k}V_j$ such that $ W_i^\fA(a)\cap U_j\neq \emptyset$, 
	\item[-] $\spl^{\fA^{(k)}}(D^{(k)})= \spl^\fA(U_1). $
\end{itemize}

\medskip\noindent
{\em Step k+1} ($1\leq k \leq p-1$). 
Set $B:= D^{(k)}$, $B_1:= U_{k+1}$ and $E=\bigcup_{1\leq j\leq k} C_j$.
\begin{enumerate}
	\item If $B=B_1$, then all desired $\gamma_i$-witnesses are already provided by $B$. Define $D^{(k+1)}=B$ and $\fA^{(k+1)}=\fA^{(k)}$. No clique is added.
	\item Assume 
	 $B\neq B_1$. Then 
\begin{enumerate}
	\item set $C_1:= V_{k+1}$ (observe $C_1 \nmna C$, as $C$ was  minimal  in $\mathcal C$),
	\item apply Claim \ref{claim-dla-dwoch} for $\fA^{(k)}$, $B$, $B_1$, $C$, $C_1$ and $E$: \\
	set  $\fA^{(k+1)}:={\fA^{(k)}_1}$
	and $D^{(k+1)}:=D$. 
\end{enumerate}
\end{enumerate} 

\medskip\noindent
 Obviously, the invariant is maintained after performing Step~$k+1$. Observe that it is possible that during the  construction no clique is added ($D^{(1)}=D^{(2)}=\ldots=D^{(p)}=U_1$) and then $\fA'=\fA$.

\medskip Finally, let $D=D^{(p)}$ and  $\fA'=\fA^{(p)}\obciety(A\cup D^{(p)})$. It is obvious that $D$ provides all $\gamma_i$-witnesses, as required in condition {\em(ii)} and additionally, the segment $\bigcup^{p-1}_{k=2}D^{(k)}$ is redundant in $\fA^{(p)}$.
By Proposition \ref{prop-redundant-first} and Claim \ref{claim-dla-dwoch} conditions {\em(i)} and {\em(iii)} of our claim  also hold.
\end{proof}

\subsection{Construction of a narrow model}\label{sec:proof-of-narrow-lemma}

We first prove a generalization of Corollary
\ref{wniosek-dla-wielu}. It says, roughly speaking, that if
$\fA\models\Psi$ and $F$ is a finite subset of $A$, then it is
possible to extend $\fA$ by a bounded-size segment such that
$\gamma_i$-witnesses for all elements of $F$ can be found in this segment, for every $i$ ($1\leq i \leq m$). The new segment is of  size bounded by $\mathbb M$ defined by the formula \eqref{def:M} and, in particular, independent of the size of $F$.

\begin{lemma}[{\rm Witness compression}]
\label{lem:wit-compr} Assume $\fA$ is a countable
witness-saturated model of $\Psi$ and $F\subseteq
A\setminus{K}(\fA)$ is finite. There is a witness-saturated
extension $\fA^*$ of $\fA$ such that the universe $A'=A \dot{\cup}S$, $S$ is a segment in $\fA'$ and:
\begin{enumerate}
\item $\fA'\models \Psi$, 
\item $|S|\leq {\mathbb M},$
\item for every conjunct $\gamma_i$ of $\Psi,$  for
every $a\in F$, if ${W}_i^{\fA}(a)\setminus {K}(\fA)\neq
\emptyset$, then  ${W}_i^{\fA^*}(a)\cap S\neq \emptyset$.
\end{enumerate}
\end{lemma}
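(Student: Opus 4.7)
The strategy is to satisfy the outstanding witness requirements of the elements of $F$ by grouping them according to the splices of the requesting and witnessing cliques, and then invoking Corollary~\ref{wniosek-dla-wielu} once per group, each application contributing at most one new clique.

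First, for every $a\in F$ and every conjunct $\gamma_i\in\Psi$ with $W_i^\fA(a)\setminus K(\fA)\neq\emptyset$, I would fix an individual witness $b_a^i\in W_i^\fA(a)\setminus K(\fA)$ and record the pair of cliques $(V_a,U_a^i)$, where $V_a=Cl^\fA(a)$ and $U_a^i=Cl^\fA(b_a^i)$. Because $a\notin K(\fA)$ and $b_a^i\notin K(\fA)$, both cliques lie in $Cl^\fA\setminus\mathbb{K}(\fA)$, which is exactly what Corollary~\ref{wniosek-dla-wielu} demands of the families $\mathcal{C}$ and $\mathcal{B}$.

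Next, I would partition these records according to the triple $(i,\spl^\fA(V_a),\spl^\fA(U_a^i))$; since there are at most $m$ conjuncts and at most $s$ splices, the number of non-empty groups is bounded by $m\cdot s^2$. The groups are processed one at a time, maintaining a current countable witness-saturated model $\fA^{(k)}$ (starting with $\fA^{(0)}=\fA$). For the $k$-th group, associated with some triple $(i,\sigma_V,\sigma_U)$, I would collect the indexed family $\{(V_a,U_a^i)\}$ of pairs it contains and apply Corollary~\ref{wniosek-dla-wielu} inside $\fA^{(k)}$. This contributes at most one new clique $D_k$ of size bounded by $h$ with $\spl^{\fA^{(k+1)}}(D_k)=\sigma_U$, which serves as a $\gamma_i$-witness provider for every $a\in F$ in the group; the extension $\fA^{(k+1)}$ remains a witness-saturated model of $\Psi$ by clause (i) of the corollary.

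Finally, setting $\fA^*$ to be the model obtained after all groups are processed and $S=\bigcup_k D_k$, I get $|S|\leq m\cdot s^2\cdot h\leq\mathbb{M}$. By construction $S$ is a disjoint union of whole cliques separate from $A$, so it is a segment in the sense of Definition~\ref{def-refundant}; and each $D_k$ shares its splice with some pre-existing non-unique-splice clique, so $D_k\notin\mathbb{K}(\fA^*)$ and hence $S\cap K(\fA^*)=\emptyset$, as required.

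The point that requires care, and that I expect to be the main technical obstacle, is to ensure that the grouping computed once and for all from the original model $\fA$ is still a legitimate set of hypotheses for Corollary~\ref{wniosek-dla-wielu} when applied to each intermediate structure $\fA^{(k)}$. This reduces to showing that every duplication step preserves the splices of the previously present cliques and the 2-type profiles between them; this is precisely guaranteed by clauses (3)--(5) of Claim~\ref{claim-single-dupl}, since the newly added clique is ``seen'' by every pre-existing clique $X$ exactly as the clique it duplicates, so neither $\In^{\fA^{(k+1)}}(X)$ nor $\Out^{\fA^{(k+1)}}(X)$ changes, and the splice-based grouping computed in $\fA$ remains valid throughout the construction.
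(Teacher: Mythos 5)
Your proposal is correct and follows essentially the same route as the paper: fix one Skolem witness per pair $(a,\gamma_i)$, group by the conjunct together with the splices of the requesting and witnessing cliques (the paper's sets $F_i^{Y,X}$, $\mathcal C_i^{Y,X}$, $\mathcal B_i^{Y,X}$), apply Corollary~\ref{wniosek-dla-wielu} once per group, and bound $|S|$ by $m\cdot s^2\cdot h\leq\mathbb M$. The only difference is cosmetic — you iterate over non-empty groups and make explicit the (correct) observation that the splice-based grouping remains valid across the intermediate extensions, a point the paper passes over silently.
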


\begin{proof}
Let us analyse the situation. Assume $F$ is as above, $a\in F$ and $b\in {W}_i^{\fA}(a)$. Elements $a$ and $b$ realize some splices, say $\spl(Cl^\fA(a))= Y$ and $\spl(Cl^\fA(b))= X$. The idea is to distinguish a set $F_i^{Y,X}\subseteq F$ of elements that realize the splice $Y$ and  have their $\gamma_i$-witnesses in cliques realizing the splice $X$. Then we can apply Corollary \ref{wniosek-dla-wielu} to get a single clique, say $D_i^{Y,X}$, in which all the elements of $F_i^{Y,X}$ have their $\gamma_i$-witnesses. The above procedure we repeat for every $i$ ($1\leq i\leq m)$ and for every $Y,X\in Sp^\fA$. At the end we get the segment $S$ as a union of (pairwise disjoint) cliques $D_i^{Y,X}$.

First, for every $i$ ($1\leq i\leq m)$ and for every $a \in F$ 
denote by $\bar{\gamma_i}(a)$ an arbitrarily chosen
element $b\in {W}_i^{\fA}(a)\setminus K(\fA)$ (the value of a Skolem function
for the existential quantifier in $\gamma_i$). For every $Y,X\in Sp^\fA$ set:
\begin{itemize}
	\item $F_i^{Y,X}:=\{a\in F: \spl(Cl^\fA(a))= Y \mbox{ and } \spl(Cl^\fA(\bar{\gamma_i}(a))= X\}$,
	\item $\mathcal C_i^{Y,X}:=\{Cl^\fA(a)): a\in F_i^{Y,X} \}$,
	\item $\mathcal B_i^{Y,X}:=\{Cl^\fA(\bar{\gamma_i}(a)): a\in F_i^{Y,X} \}$.
\end{itemize}  
To construct the required model $\fA^*$ and the segment $S$ we proceed as follows.
 
\medskip\noindent
For every $i$ ($1\leq i\leq m)$ for every  $Y,X\in Sp^\fA$:

apply Corollary~\ref{wniosek-dla-wielu} for the sets ${\mathcal C}=\mathcal C_i^{Y,X}$ and $\mathcal {B}=\mathcal B_i^{Y,X}$: 

replace $\fA$ by the structure $\fA'$ given by Corollary~\ref{wniosek-dla-wielu},

set $D_i^{Y,X}:=D$, if a clique  $D$ was added to $\fA$, otherwise set $D_i^{Y,X}:=\emptyset$.

\medskip\noindent
Denote the resulting structure by $\fA^*$. Condition $(i)$ of Corollary~\ref{wniosek-dla-wielu} implies that
$\fA^*\models \Psi$.  Let $S$ be the segment
consisting of elements of  the newly added cliques:
$$S\stackrel{def}{=} \bigcup_ {1\leq i\leq m}\,\,\,\bigcup_{{Y,X}\in
Sp^\fA}D_i^{Y,X}.$$
Obviously, $|S|\leq~m\cdot~ s^2\cdot~h~<~{\mathbb M}$, as 
the size of every clique is bounded by $h$ and the number of pairs of splices $Y$ and $X$ is bounded by $s^2$.

To show that condition (3) of our lemma holds, assume
$\gamma_i\in\Psi$, $a\in F$ and ${W}_i^{\fA}(a)\setminus
{K}(\fA)\neq \emptyset.$ Then there exists $b\in W_i^\fA(a)$ such
that $b\not\in K(\fA)$. So $a\in F_i^{Y,X}$, where $Y=\spl(Cl^\fA(a))$ and
$X=\spl(Cl^\fA(b))$.
Now, by condition~(ii) of Lemma \ref{wniosek-dla-wielu}, we obtain
${W}_i^{\fA^*}(a)\cap D_i^{Y,X}\neq\emptyset,$
and so,
${W}_i^{\fA^*}(a)\cap S\neq \emptyset$.
\end{proof}

Now we are ready to prove existence of narrow models. For convenience we recall the corresponding definition and the statement of Lemma~\ref{lem-narrow}.

\medskip\noindent{\bf Definition \ref{def-narrow}.}
{\em A model $\fA$ of $\Psi$ is  {\em narrow} if $A= K(\fA)$ or there is an infinite
	partition $P_A=\{S_0, S_1,\ldots\}$ of the universe $A$ such that $K(\fA) \subseteq S_{0}$ and for every $j\geq 0$:
	\begin{enumerate}
		\item $|S_j|\leq {\mathbb M}$,
		\item  for every $a\in \bigcup_{k=0}^j S_k$
		and for every $\gamma_i\in \Psi,$
		
		\hspace{0,5cm}if ${W}_i^{\fA}(a)\cap S_0=\emptyset$, then
		${W}_i^{\fA}(a)\cap S_{j+1}\neq \emptyset.$
	\end{enumerate}
}

\medskip\noindent{\bf Lemma \ref{lem-narrow}.} {\em\ 	Every 
	satisfiable \fodeg-sentence $\Psi$ has a narrow model.
}

\begin{proof}[Proof of Lemma \ref{lem-narrow}]
Assume $\fA$ is a witness-saturated model of $\Psi$
that exists by Lemma \ref{lemma-saturated}. In general our goal is to construct an infinite sequence of segments $K(\fA)\subset S_0, S_1,\ldots$ and a corresponding set of structures $\fA_0=\fA, \fA_1, \fA_2, \ldots$. Every $\fA_{j+1}$ is obtained using Lemma \ref{lem:wit-compr} for $F=\bigcup_{k=0}^{j} S_k$.
In particular setting $\fA'=(\bigcup_{k=0}^\infty\fA_k)\obciety\bigcup_{k=0}^\infty S_k$ proves the corollary.

If $A=K(\fA)$ then we are done. Otherwise, for $\gamma_i\in\Psi$
and $a\in {K}(\fA)$ denote by $\bar{\gamma_i}(a)$ an arbitrarily chosen
element $b\in {W}_i^{\fA}(a)$ (the value of a Skolem function
for the existential quantifier in $\gamma_i$). Define
$\fA_0=\fA$ and
$$S_0={K}(\fA)\cup
\bigcup_{1\leq i\leq m}\,\,\,\bigcup_{a\in{K}(\fA)}\,Cl^{\fA}(\bar{\gamma_i}(a)).$$
Note that in this case $K(\fA)\subsetneq  S_0$ and $|S_0|\leq~s\cdot~h~+~m\cdot~s\cdot~h~<~{\mathbb M}$.\\
 Iterating we define a sequence of structures $\fA_1, \fA_2, \ldots$ such that for each $j\geq 0$ we have
$\fA_{j+1}=\fA_j^*,$ where $\fA_j^*$ is the extension of $\fA_j$ by a segment $S_{j+1}$
given by Lemma \ref{lem:wit-compr} for $F=\bigcup_{k=0}^j S_k$. 
This means that each
$S_{j+1}$ extends $\fA_j$ to $\fA_{j+1}$,  $\fA_{j+1}\models \Psi$ and for every $\gamma_i\in\Psi,$ for
every $a\in \bigcup_{k=0}^j S_k$, if ${W}_i^{\fA}(a)\setminus {K}(\fA_j)\neq
\emptyset$, then ${W}_i^{\fA_{j+1}}(a)\cap S_{j+1}\neq \emptyset$.
Now, define $$\fA'=(\bigcup_{k=0}^\infty
\fA_k)\obciety\bigcup_{k=0}^\infty S_k.$$ By Proposition \ref{prop-redundant-first} and Lemma \ref{lem:wit-compr}, it is easy to
see that $\fA'$ is a narrow model of $\Psi$ with partition
$P_A=\{S_0, S_1,\ldots\}.$
\end{proof}

\section{Decidability of \fodeg{}}\label{sec:main}
As before we assume $\Psi$ is a normal-form \fodeg-formula, models of $\Psi$ have the exponential clique property and $\mathbb M$ defined by equation \eqref{def:M} is the bound on the size of segments in narrow models. 

\subsection{Regular models}

In this section we analyze properties of models of $\Psi$ on the
level of segments which consist of several cliques, and
constitute a partition $S_0,S_1,\ldots$ of the universe of a
model. Every segment $S_j$ has doubly exponential size
and is meant to contain all $\gamma_i$-witnesses for elements
from earlier segments $S_0,S_1,\ldots,S_{j-1}.$ On this level of abstraction
cliques and splices of a model become much less important.

\begin{definition}\label{color} 
	Assume  $\fA$ is a narrow model of $\Psi$ with a partition $P_A=\{S_0, S_1,\ldots\}$. We say that a connection type $\langle S_{j'},S_{k'}\rangle_{\fA}$ {\em is equivalent to}     $\langle S_j,S_k\rangle_\fA$, denoted 
	$$\langle S_{j'},S_{k'}\rangle \approx_{\fA}\langle S_j,S_k\rangle$$ iff $j'<k'$, $j<k$ and there exist isomorphisms $f_{S_j}$ and $f_{S_k}$ such that (cf.~Definition~\ref{def-conn-transfered})
	$$\langle S_{j'},S_{k'}\rangle_{\fA} \equiv_{f_{S_j},f_{S_k}} \langle S_j,S_k\rangle_{\fA}.$$
\end{definition}
If it is clear from the context we skip the subscript and simply write $\langle S_{j'},S_{k'}\rangle \approx\langle S_j,S_k\rangle$ instead of  $\approx_{\fA}$. It is obvious that  $\approx$ is an equivalence relation on the set $\{\langle S_j,S_k\rangle_{\fA}: 0\leq j<k\}$.

\begin{definition}\label{def:canon}
Assume  $\fA$ is a narrow model of $\Psi$ with a partition $P_A=\{S_0, S_1,\ldots\}$.  We say that
$\fA$ is  {\em regular} if $\fA$ is finite or (cf.~Figure \ref{fig-cann})  for every $k\in\N$, $0< k <\infty$:
$$\langle S_{k+1},S_{k+2}\rangle \approx \langle S_k,S_{k+1} \rangle\mbox{ and }\langle S_0,S_{k+1}\rangle \approx \langle S_{0},S_{k}\rangle.$$
\end{definition}

Note that in a regular model $\fA$  we have also $\langle S_{j},S_{k}\rangle \approx_{\fA}\langle S_{j'},S_{k'}\rangle$, for every $j,k,j',k'\geq 1$ with $j<k$ and $j'<k'$ (see Figure \ref{fig-cann}).

\begin{figure}[h]
	\begin{center}
		\xymatrix{
			*+[F=]{\,\,\,S_0\,\,\,}
			\ar@{-}[r]^{}
			\ar@{-}@/^1pc/[rr]^{}
			\ar@{-}@/^1,5pc/[rrr]^{}
			\ar@{-}@/^2,4pc/[rrrrr]^{}
			\ar@{-}@/^3,1pc/[rrrrrr]^{}
			& *+[F-]{\,\,S_1\,\,}
			\ar@{--}[r]^{}
			\ar@{--}@/_1pc/[rrr]^{}
			\ar@{--}@/_1,6pc/[rrrr]^{}
			\ar@{--}@/_2,4pc/[rrrrr]^{}
			& *+[F-]{S_{2}}
			\ar@{--}@/^0,5pc/[rr]^{}
			& *{\ldots}
			& *+[F-]{\,\,S_k\,\,}
			\ar@{--}[r]^{}
			\ar@{--}@/^1pc/[rr]^{}
			& *+[F-]{S_{k+1}}
			& *++{\ldots}\\
		}

 \caption{\footnotesize\textsf{Pattern of connection types in a regular model. Solid lines depict equivalent connection types: 
 		$\langle S_{0},S_{1}\rangle \approx \langle S_0,S_{2} \rangle$ $\approx$ $\langle S_{0},S_{3}\rangle \ldots$.
  		Dashed lines depict that $\langle S_{1},S_{2}\rangle\! \approx\! \langle S_1,S_{3} \rangle\! \approx \ldots \approx \langle S_{2},S_{3}\rangle\! \approx \! \langle S_2,S_{4} \rangle\ldots$.}}\label{fig-cann}
\end{center}
\end{figure}

\begin{lemma} Every  satisfiable \fodeg-sentence $\Psi$ has a regular
model. \label{twcanona}
\end{lemma}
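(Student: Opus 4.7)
The plan is to start from a narrow model $\fA$ of $\Psi$, as granted by Lemma~\ref{lem-narrow}, and then extract a regular substructure via a combination of pigeonhole arguments and one application of the infinite Ramsey theorem. If $A=K(\fA)$, then $\fA$ is finite and is regular by Definition~\ref{def:canon}, so we may assume $\fA$ has an infinite partition $P_A=\{S_0,S_1,\ldots\}$ witnessing narrowness.

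Since each $\fA\obciety S_j$ has size at most $\mathbb M$, only finitely many $\sigma$-isomorphism types of induced substructures of this size are possible. First I would apply pigeonhole to obtain an infinite $I_1\subseteq\N_{\geq 1}$ such that the segments $\{\fA\obciety S_j:j\in I_1\}$ are pairwise isomorphic; fix, for each $j\in I_1$, an isomorphism $h_j\colon S_{j_0}\to S_j$ where $j_0=\min I_1$. Next, using one more pigeonhole step, I would pass to an infinite $I_2\subseteq I_1$ on which the connection type $\langle S_0,S_j\rangle_{\fA}$, transferred via $(\id_{S_0},h_j)$ to a structure on $S_0\cup S_{j_0}$, is constant. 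Finally, for unordered pairs $\{i,j\}$ with $i<j$ in $I_2$, colour $\{i,j\}$ by the connection type $\langle S_i,S_j\rangle_{\fA}$ transferred via $(h_i,h_j)$ to a structure on two distinguished copies of $S_{j_0}$. The number of colours is finite, so the infinite Ramsey theorem yields an infinite monochromatic $I_3\subseteq I_2$.

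Enumerating $I_3=\{j_1<j_2<\ldots\}$, I would then take the substructure $\fB:=\fA\obciety\bigl(S_0\cup\bigcup_{l\geq 1}S_{j_l}\bigr)$ with the new partition $S'_0:=S_0$ and $S'_l:=S_{j_l}$ for $l\geq 1$. To see $\fB\models\Psi$, note that the universal conjunct $\forall x\forall y\,\psi_0$ and the transitivity of $T$ are preserved under substructures (Proposition~\ref{claim-iff}). For each existential conjunct $\gamma_i$ and each $a\in S'_l$, the narrow property of $\fA$ applied with $j=j_l$ (or $j=0$ when $l=0$) forces a $\gamma_i$-witness either in $S_0=S'_0$ or in $S_{j_l+1}$; but the same property applied with every $j\geq j_l$ in fact places such a witness in \emph{each} later original segment when it is not in $S_0$, so in particular in $S_{j_{l+1}}=S'_{l+1}$. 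Hence witnesses survive, and $\fB$ is itself narrow with partition $\{S'_0,S'_1,\ldots\}$ (inclusion $K(\fB)\subseteq S'_0$ follows because the uniform $h$-isomorphisms identify every clique in $S'_l$ ($l\geq 1$) with a splice-equal twin in $S'_{l+1}$, so no such clique has a unique splice in $\fB$). Regularity is then immediate from the construction: the $h_{j_k}$'s witness both $\langle S'_0,S'_{k+1}\rangle\approx\langle S'_0,S'_k\rangle$ and $\langle S'_k,S'_{k+1}\rangle\approx\langle S'_{k+1},S'_{k+2}\rangle$ in the sense of Definition~\ref{color}.

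I expect the main obstacle to be the bookkeeping in the final verification: one must check both that the monochromatic structure produced by Ramsey really coincides, under the chosen $h_j$'s, with the equivalence $\approx_\fB$ demanded in Definition~\ref{color} (rather than with some weaker pointwise agreement), and that the restriction $\fB$ inherits the narrow partition with $K(\fB)\subseteq S'_0$. Both are routine once the isomorphisms $h_j$ are threaded through consistently, but they require verifying that splice membership and the partial order $<$ are determined by data that is genuinely preserved when passing from $\fA$ to $\fB$ through the subsequence indexed by $I_3$.
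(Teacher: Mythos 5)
Your proposal is correct and follows essentially the same route as the paper: start from a narrow model, colour pairs of segments by the $\approx$-class of their connection type, apply the infinite Ramsey theorem plus pigeonhole for the $\langle S_0,S_j\rangle$ connections, restrict to the selected segments, and use narrowness to see that witnesses survive (the paper packages this last step as ``redundancy'' of the discarded segments via Proposition~\ref{prop-redundant-first}). The only differences are cosmetic -- you run the pigeonhole steps before Ramsey rather than after, and you spell out the $K(\fB)\subseteq S'_0$ check that the paper leaves implicit.
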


\begin{proof}
	Let  $\fA$ be a narrow model of $\Psi$  given by Lemma~\ref{lem-narrow}. If $A=K(\fA)$ then $\fA$ is finite and there is nothing to prove. So, assume
	$X=\{S_0, S_1,\ldots\}$ is an infinite partition of $A$ given by Definition~\ref{def-narrow}.
	
	Observe that for every $k>0$, $S_k$ is redundant in $\fA$.  For,
	assume (cf. Definition \ref{def-refundant}) $b\in S_k,\, a\in
	A\setminus S_k$ and $b\in{W}_i^{\fA}(a).$ Assume $a\in S_l$ and
	take $j\in \N^+$ such that $j>\max\{k,l\}.$ By Definition
	\ref{def-narrow},  if ${W}_i^{\fA}(a)\cap
	S_0=\emptyset$, then ${W}_i^{\fA}(a)\cap S_{j+1}\neq \emptyset.$
	So, there is $c\in S_0\cup S_{j+1}$ such that
	$c\in{W}_i^{\fA}(a).$
	
	Similarly, for every infinite  $Z\subset
	\N^+$, the segment $\bigcup_{j\in\N^+\setminus Z} S_j$ is redundant in $ \fA$, and by Proposition \ref{prop-redundant-first}, $\fA \obciety \bigcup_{j\in Z\cup \{0\}} S_j \models \Psi$.

	The required regular model $\fA'$ of $\Psi$ is built as follows.
	Let  $[X]^2$ be the set of 2-element subsets of $X$.
	Now, define a colouring assigning to $[X]^2$:
	$$ Col({\{S_j,S_k\}})=[\langle S_{\min(j,k)},S_{\max(j,k)}\rangle]_{\approx}.$$
	
	So, the set $[X]^2$, is partitioned  into $c$ classes, where $c$ is the number of possible colours, which is finite. In this context, the infinite Ramsey theorem (cf.~e.g.~\cite{Die16}, Theorem 9.1.2) says
	that $X$ has an infinite monochromatic subset, say $Y$. At this point let us note that:
	$$\mbox{if }\langle S_{j},S_{k}\rangle \approx \langle S_k,S_{l} \rangle\mbox{ then }\fA\obciety S_j,\ \fA\obciety S_k \text{ and }\fA\obciety S_l \text{ are isomorphic}.$$
	So, all substructures of $\fA_{}$ induced by the segments from $Y$ are isomorphic.
	
	In a similar way, using  the basic pigeonhole principle, one can find an infinite set $Z\subseteq Y$ such that
	$\langle S_0,S_{k}\rangle \approx \langle S_{0},S_{l}\rangle,$ for every $S_k,S_l\in Z$. Now define $$\fA_{}'=\fA \obciety (S_0\cup \bigcup_{S\in Z} S).$$
	It follows from our preliminary observations that $\fA_{}'$ is as required.
\end{proof}

\subsection{Decidability and complexity}

From Lemma \ref{twcanona} we get immediately the following
theorem. 
\begin{tw}\label{tw-4S} An $\fodeg$-sentence $\Psi$ is satisfiable if and only if
there exist a $\sigma$-structure $\fA$ and
$S_0,S_1,S_2,S_3\subseteq A$,  such that:
\begin{enumerate}
\item $|A|\leq 4\cdot {\mathbb M},$ 
\item either $S_1=S_2=S_3=\emptyset$, or $\{S_0,S_1,S_2,S_3\}$ is a partition of $A$ and then
\begin{enumerate}
\item $\langle S_0,S_1\rangle \approx_{\fA}\langle
S_0,S_2\rangle\approx_{\fA}\langle S_0,S_1\rangle,$
\item $\langle S_1,S_2\rangle \approx_{\fA}\langle
S_2,S_3\rangle\approx_{\fA}\langle S_1,S_3\rangle,$
\end{enumerate}
\item for every $a,b\in A$, $tp^{\mathfrak A}
[a,b]\models \psi_0$,
\item $T^\fA$ is transitive in $\mathfrak A$,
\item for every $j=0,1,2,$ for every $a\in S_j$
and for every $\gamma_i\in\Psi$,

\hspace{0,5cm}if ${W}_i^{\fA}(a)\cap S_0=\emptyset$, then
${W}_i^{\fA}(a)\cap S_{j+1}\neq \emptyset,$
\item for every  $a\in A$
and for every $\delta_i\in\Psi$, $a$ has a $\delta_i$-witness in $Cl^{\fA}(a)$.

\end{enumerate}
\end{tw}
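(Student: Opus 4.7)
The forward direction will follow directly from Lemma \ref{twcanona}. Apply that lemma to obtain a regular model $\fA$ of $\Psi$. If $\fA$ is finite (i.e.\ $A=K(\fA)$), then $|A|\le \mathbb{M}$, so take $S_0=A$ and $S_1=S_2=S_3=\emptyset$; all stated conditions then follow from $\fA\models\Psi$, with (2) vacuous. Otherwise $\fA$ carries an infinite partition $\{S_0,S_1,\ldots\}$ with $|S_k|\le\mathbb{M}$ for all $k$. Take the substructure $\fA\obciety(S_0\cup S_1\cup S_2\cup S_3)$, of cardinality at most $4\mathbb{M}$. Conditions (3), (4) are inherited by substructures; (6) holds because cliques lie entirely inside single segments of the narrow model, as ensured in Lemmas \ref{lem:wit-compr} and \ref{lem-narrow}; (5) is exactly the narrow-model clause of Definition \ref{def-narrow} read off for $j=0,1,2$; and (2a), (2b) follow from the regularity of $\fA$, strengthened by the Ramsey step in the proof of Lemma \ref{twcanona}, which forces all pairs of segments beyond $S_0$ to share one common connection type.

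For the backward direction, let $\fA$ be as in the statement. If $S_1=S_2=S_3=\emptyset$ then $\fA$ itself models $\Psi$: (5) with $j=0$ and (6) supply all existential witnesses, while (3), (4) take care of $\psi_0$ and transitivity. Otherwise I would build an infinite extension $\fB$ of $\fA$ with universe $S_0\cup S_1\cup S_2\cup S_3\cup T_4\cup T_5\cup\ldots$, each $T_k$ for $k\ge 4$ being a fresh isomorphic copy of $S_3$ via a fixed isomorphism $\phi_k\colon T_k\to S_3$. Setting $T_1:=S_1$, $T_2:=S_2$, $T_3:=S_3$, the 2-types in $\fB$ are defined so that for every $j\ge 3$ the substructure on $S_0\cup T_{j-2}\cup T_{j-1}\cup T_j\cup T_{j+1}$ is isomorphic to $\fA\obciety (S_0\cup S_1\cup S_2\cup S_3)$ via a canonical ``shift''. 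Concretely: for $k\ge 4$, the type $\langle S_0,T_k\rangle_\fB$ is transferred from $\langle S_0,S_3\rangle_\fA$ using $\phi_k$, while for $1\le j<k$ with $k\ge 4$ the type $\langle T_j,T_k\rangle_\fB$ is transferred from the common connection type $\langle S_1,S_2\rangle_\fA$, which by (2b) is equivalent to $\langle S_i,S_j\rangle$ for all $1\le i<j\le 3$.

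To verify $\fB\models\Psi$: every 2-type realized in $\fB$ is already realized in $\fA$, so (3) yields $\psi_0$ throughout $\fB$. For transitivity, any directed $T$-triangle in $\fB$ involves elements from at most four distinct segments; by the sliding-window construction, these sit inside an isomorphic copy of $\fA\obciety(S_0\cup S_1\cup S_2\cup S_3)$ in which transitivity holds by (4). A $\gamma_i$-witness for $a\in T_k$ ($k\ge 1$) is obtained by applying (5) to the image of $a$ in $S_1,S_2$ or $S_3$; the resulting witness in $S_0$ or in the next segment pulls back to a witness in $S_0$ or $T_{k+1}$ of $\fB$. The $\delta_i$-witnesses survive because they live in the same clique and cliques of $\fB$ lie inside single segments by construction. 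The main obstacle will be the coherence of the shift: conditions (2a) and (2b) each supply isomorphisms of connection types, but with independent automorphisms of $S_0$ and of the common structure on the $S_j$, so one must fix a consistent system of isomorphisms to guarantee that the shift is really a single pointwise isomorphism of four-segment blocks and that all the transferred 2-types assemble into a well-defined transitive structure. Once this bookkeeping is in place, satisfiability of $\Psi$ reduces to guessing a structure of size $\le 4\mathbb{M}$ and checking (1)--(6), which is straightforwardly executed in nondeterministic time doubly exponential in $|\sigma|$.
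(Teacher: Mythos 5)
Your plan matches the paper's proof essentially step for step: the forward direction restricts a regular model from Lemma~\ref{twcanona} to $S_0\cup S_1\cup S_2\cup S_3$ (with the finite case $A=K(\fA)$ handled separately), and the backward direction unfolds the four-segment certificate into an infinite structure in which every $\langle S_j,S_k\rangle$ with $0<j<k$ copies $\langle S_1,S_2\rangle$ and every $\langle S_0,S_j\rangle$ copies $\langle S_0,S_1\rangle$. The only respect in which you go beyond the paper is your caveat about fixing the witnessing isomorphisms of conditions (2a)--(2b) coherently, so that every triple of elements of the unfolded structure sits inside an isomorphic copy of $\fA$ and transitivity is inherited; the paper dismisses the entire backward direction with ``Obviously, $\fA'\models\Psi$'', so flagging this bookkeeping is a legitimate refinement of the same argument rather than a deviation from it.
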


\begin{proof}
{\bf (}$\mathbf{\Rightarrow}${\bf )} Assume $\fA'$ is a regular
model of $\Psi$ (given by Lemma \ref{twcanona})
 with partition $P_{A'}=\{S_0, S_1,\ldots\}$ and for every $0< k <\infty$:
\begin{enumerate}
\item $\langle S_{k+1},S_{k+2}\rangle \approx \langle S_k,S_{k+1} \rangle,$ 
\item $\langle S_0,S_{k+1}\rangle \approx \langle S_{0},S_{k}\rangle.$
\end{enumerate}
Define $\fA\stackrel{def}{=}\fA'\obciety (S_0\dot\cup
S_1\dot\cup S_2\dot\cup S_3)$. Note that $|A|\leq 4\cdot{\mathbb M}.$ 

\noindent{\bf (}$\mathbf{\Leftarrow}${\bf )}
Define a structure
$\fA'$ such that  $A'\stackrel{def}{=}S_0\;\dot\cup\;
S_1\;\dot\cup\; S_2\;\dot\cup\;
S_3\;{\dot\cup}\;{\dot{\bigcup}}_{j=4}^{\infty}\;S_j$ and, for every $0<j< k <\infty$:
\begin{itemize}
\item $\langle S_j,S_{k}\rangle \approx_{\fA'} \langle S_1,S_{2} \rangle$ and 
\item $\langle S_0,S_{j}\rangle \approx_{\fA'} \langle S_{0},S_{1}\rangle.$
\end{itemize}
Obviously,  $\fA'\models\Psi.$
\end{proof}

\begin{corollary}\label{col-compl}
SAT($\fodeg$) $\in$ {\TwoNExpTime}.
\end{corollary}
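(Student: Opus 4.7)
The plan is to derive the \TwoNExpTime{} upper bound directly from the small model characterization of Theorem~\ref{tw-4S}. First, I would reduce the input \fodeg-formula $\phi$ to an equisatisfiable normal form sentence $\Psi$ over an enlarged signature $\sigma$ via Lemma~\ref{lemma_normalform}; the length of $\Psi$ and the size of $\sigma$ are polynomial in $|\phi|$. Then, by Theorem~\ref{tw-4S}, $\Psi$ is satisfiable if and only if there is a $\sigma$-structure $\fA$ of size at most $4\mathbb{M}$, together with a partition $\{S_0,S_1,S_2,S_3\}$ of $A$ (or with $S_1=S_2=S_3=\emptyset$), satisfying conditions (1)--(6) of that theorem.

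The key quantitative observation is that $\mathbb{M} = (m+1)\cdot s^2\cdot h$ is doubly exponential in $|\sigma|$: the number $m$ of $\gamma_i$-conjuncts is linear in $|\Psi|$; the clique-size bound $h$ from Lemma~\ref{lma:theorem_si} is single-exponential in $|\sigma|$; and the number $s$ of splices is doubly exponential in $|\sigma|$, since splices range over triples of an isomorphism type of a clique of size at most $h$ together with two subsets of $\AAA$. Consequently $|A|\leq 4\mathbb{M}$ is at most doubly exponential in $|\sigma|$, and the whole atomic diagram of $\fA$ (i.e.~a specification of $tp^{\fA}[a]$ for each $a\in A$ and $tp^{\fA}[a,b]$ for each pair) has size doubly exponential in $|\phi|$.

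The decision procedure is then: nondeterministically guess such a structure $\fA$ together with the partition, and verify conditions (1)--(6). Verification of (3), (4), (5), (6) and the size constraint (1) is straightforward by inspecting all pairs (and, for (5) and (6), by searching for witnesses for each element and each conjunct), which costs time polynomial in $|A|$, hence doubly exponential in $|\phi|$. The only delicate part is condition (2): one must check $\langle S_0,S_1\rangle\approx_\fA\langle S_0,S_2\rangle\approx_\fA\langle S_0,S_3\rangle$ and the analogous equivalences between $\langle S_j,S_k\rangle$ for $1\leq j<k\leq 3$. By Definition~\ref{color} this amounts to exhibiting isomorphisms $f_{S_j},f_{S_k}$ compatible with all 2-types across the two pairs of segments. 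It is natural to guess these isomorphisms together with $\fA$, and then verification reduces again to checking 1-types and 2-types on all pairs, which remains polynomial in $|A|$.

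Putting the pieces together, the whole procedure is nondeterministic and runs in time doubly exponential in $|\phi|$, so $\mathrm{SAT}(\fodeg)\in\TwoNExpTime$. I do not expect any essential obstacle here, since Theorem~\ref{tw-4S} already concentrates all the difficulty; the only point requiring mild care is the handling of the segment isomorphisms in condition (2), which is resolved simply by guessing them as part of the certificate.
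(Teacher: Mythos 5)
Your proposal is correct and follows essentially the same route as the paper: guess a structure of size at most $4\mathbb{M}$ together with the four-segment partition, and verify conditions (1)--(6) of Theorem~\ref{tw-4S} in time polynomial in $|A|$, which is doubly exponential in the input since $s$ is doubly exponential and $h$, $m$ are at most singly exponential in $|\sigma|$. Your explicit accounting of the magnitude of $\mathbb{M}$ and the guessing of the segment isomorphisms for condition (2) only makes explicit what the paper leaves implicit.
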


\begin{proof} To check whether a given \fodeg{}-sentence  is satisfiable
we take its normal form $\Psi$ and follow Theorem \ref{tw-4S} to obtain a nondeterministic
double exponential time procedure, as described below.
\begin{enumerate}
\item {\tt Guess} a $\sigma$-structure $\fA$ of cardinality
$|A|\leq4\cdot {\mathbb M}$, 

\noindent{\tt guess} and partition
$P_A=\{S_0$,$S_1,S_2,S_3\}$ and

\item {\tt If not}:
\begin{enumerate}
\item $\langle S_1,S_0\rangle \approx\langle
S_2,S_0\rangle\approx\langle S_3,S_0\rangle$ and
\item $\langle S_2,S_1\rangle \approx\langle
S_3,S_2\rangle\approx\langle S_3,S_1\rangle$
\end{enumerate}
{\tt then} {\bf reject};
\item {\tt For every} $a,b\in A,$ {\tt if} $tp^{\mathfrak A}
[a,b]\not\models \psi_0$ {\tt then} {\bf reject};
\item {\tt For every} $a,b,c\in A,$ {\tt if not} ($T^{\mathfrak A}
[a,b]\wedge T^{\mathfrak A} [b,c]\Rightarrow T^{\mathfrak A}
[a,c])$ {\tt then} {\bf reject};
\item {\tt For every} $j=0,1,2,$ {\tt for every} $a\in S_j,$
{\tt for every} $\gamma_i\in\Psi$ such that ${W}_i^{\fA}(a)\cap
S_0=\emptyset$ {\tt if} ${W}_i^{\fA}(a)\cap S_{j+1}= \emptyset$
{\tt then} {\bf reject};
\item {\tt For every} $a\in A,$
{\tt for every} $\delta_i\in\Psi$ {\tt if} $a$ has no $\delta_i$-witness in $Cl^{\fA}(a)$
{\tt then} {\bf reject}; 
\item[]\hspace{-0.85cm} {\bf Accept};
\end{enumerate}
\end{proof}

\section{Discussion}\label{sec:nontr-wit}
The small clique property for \fodt{} implies in particular that in order to extend the decidability result from the fragment with transitive witnesses, $\fodeg$, to full \fodt{} it suffices to consider the situation when the transitive relation is required to be a partial order. Namely, one can reduce the (finite) satisfiability problem for \fodt{} to the (finite) satisfiability problem for \fod{} with one partial order encoding cliques by single elements satisfying some new unary predicates and connection types between cliques by pairs of elements satisfying new binary predicates. This reduction depends of the bound on the size of the cliques and in our case is exponential (see Lemma~5.6~in \cite{P-H18} for a detailed proof). 

We explain below  that the technique from this paper does not generalise to the fragment with free witnesses, giving an example of a satisfiable  \fodeng-formula $\Phi$ that is an axiom of infinity and does not have narrow models in the sense of Definition~\ref{def-narrow}.

It is perhaps worth noting first that the presumably simplest infinity axiom  $\forall x \exists y\; (x<y) \wedge \forall x\; \neg (x<x)$ is not in \fodeng (where the existential quantifier is applied to a subformula not allowed in this fragment).

 The formula $\Phi$ is written  over a signature $\sigma=\sigma_0 \cup \{<\}$, where $\sigma_0$ consists of unary symbols only and $<$ is a
binary predicate interpreted as a \emph{partial order}. 
We will use the abbreviation $x\sim y$ for the formula $x\neq y \wedge \neg (x<y) \wedge \neg (y<x)$ and say \emph {$x$ and $y$ are incomparable}. If $x<y \vee y<x$ then we say \emph {$x$ and $y$ are comparable}.

Let $I=\{0,1,2,3,4\}$, $\sigma_0 = \{A_i: i\in I\}$ and 
let $\varPhi$ be a conjunction of the following sentences.%
\begin{eqnarray}\label{ex1}\setcounter{equation}{1}
&&\forall x \dot{\bigvee}_{i\in I} A_i\,x\label{f-1} \\
&\bigwedge_{i\in I}& \forall x \forall y(A_ix\wedge A_iy\wedge x\neq y)]\rightarrow (x>y \vee y>x)\label{f0}\\
&\bigwedge_{i=0}^2\bigwedge_{j=i+2}^{j=i+3}&\forall x,y\, (A_ix\wedge A_{j}y)\rightarrow (x>y \vee y>x)\label{f2}\\
&\bigwedge_{i\in I}&\forall x A_ix\, \rightarrow [\exists y \,(A_{i\+ 1}y \wedge x\sim y)\wedge \exists y \,(A_{i\m 1}y \wedge x\sim y)]\label{f3}
\end{eqnarray}
In (\ref{f-1}) $\dot{\vee}$ denotes  {\em exclusive or}. In (\ref{f3}) and below, addition and subtraction in subscripts of $A$'s is always understood modulo 5.

Assume $\fA\models \varPhi$.  The realizations of the respective predicate letters in $\fA$ fulfill the following conditions:
\begin{itemize}
	\item by (\ref{f-1})  the sets $A_0, A_1, A_2, A_3$ and $A_4$  constitute a partition of $A$,
	\item by (\ref{f0}), if $a_i, b_i\in A_i$ $(i\in I)$ and $a_i\neq b_i$ then $a_i$ and $b_{i}$ are comparable,
	\item by (\ref{f2}) if $a_0,\ldots, a_4\in A$, where $a_i\in A_i$ ($i\in I$), then 
	each of the pairs $(a_0, a_{2})$, $(a_0, a_{3})$, $(a_1, a_{3})$,  $(a_1, a_{4})$,  $(a_2, a_{4})$ consists of comparable elements,
	\item by (\ref{f3}), if $a\in A_i$ then $a$ has a witness $b\in A_{i\+1}$ such that $a\sim b$ and $a$ has a  witness $e\in A_{i\m 1}$ such that $e\sim a$, for every $i$ $(i\in I)$. 
\end{itemize}
When $a$, $b$ and $e$ are as in the last item above, then we say that $b$ is a {\em right-witness of $a$} and $e$ is a {\em left-witness of $a$}. Note that if $b$ is a right-witness of $a$, then $a$ is a left-witness of $b$. 

Define a $\sigma$-structure $\fC$ with the universe $C=(a_l)_{l\in \Z}$ such that for every $l\in \mathbb{Z}$ (see Figure~\ref{fig1}):
\begin{itemize}
	\item   $a_l\in A_{(l\mod 5)}$,
	\item $a_l\sim a_{l+1}$,
	\item for $m\geq l+2$, set $a_l < a_m$.
\end{itemize}
It is clear  that
$\fC\models\varPhi$. 
Now we show that $\varPhi$ has only infinite models.

\begin{figure}[h]
	
	\begin{center}
		\xymatrix@-1pc{
			\dots
			&   A_{3}
			& 	A_{4}
			&	A_{0}
			&   A_{1}	
			&   A_{2} 
			&   A_{3} 
			&   A_{4} 
			&   A_0
			&   A_1
			&   \ldots\\
			\ldots \ar@/_0.9pc/[rr]^{}
			&   a_{-2}\ar@{.}[r] \ar@/^0.9pc/[rr]^{}\ar@/^1.4pc/[rrr]^{}
			&   a_{-1}\ar@{.}[r] \ar@/_0.9pc/[rr]^{}\ar@/_1.4pc/[rrr]^{}
			&   a_0\ar@{.}[r]    \ar@/^0.9pc/[rr]^{}\ar@/^1.4pc/[rrr]^{}
			&   a_1\ar@{.}[r]    \ar@/_0.9pc/[rr]^{}\ar@/_1.4pc/[rrr]^{}
			&   a_2\ar@{.}[r]    \ar@/^0.9pc/[rr]^{}\ar@/^1.4pc/[rrr]^{}
			&   a_3\ar@{.}[r]    \ar@/_0.9pc/[rr]^{}\ar@/_1.4pc/[rrr]^{}
			&   a_4\ar@{.}[r]    \ar@/^0.9pc/[rr]^{}\ar@/^1.4pc/[rrr]^{}
			&   a_5\ar@{.}[r]    \ar@/_0.9pc/[rr]^{}
			&   a_6
			&   \ldots
		}
	\end{center}
	\caption{\textsf{\footnotesize The model $\fC$ of  $\varPhi$ is a transitive closure of the directed graph above. For every $l\in \mathbb{Z}$, $a_l\in A_{(l\mod 5)}$. Arrows connect comparable elements, in particular $a_{0}<a_{2}$; dotted lines connect incomparable elements.}} \label{fig1}
\end{figure}

Let $\fA\models\varPhi$. We say that a sequence $C=(a_l)_{l\in \Z}$ of elements of $A$ is a {\em quasi-chain} in $\fA$ if, for every $l\in \Z$, $a_{l+1}$ is a right-witness of $a_l$. By clause \eqref{f3}, every model of $\Psi$ contains a quasi-chain $C$.

\begin{claim}\label{quasi-chain}
	Assume $\fA\models\varPhi$ and	$(a_l)_{l\in \Z}$ is a quasi-chain in $\fA$ with $a_0<a_{2}$.  Then for every $l,m\in\Z$ such that $l+2\leq m$ we have $a_l<a_m$.
\end{claim}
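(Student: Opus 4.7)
The plan is to reduce the claim to the key inequality $a_l<a_{l+2}$ for every $l\in\Z$ and then deduce the rest by transitivity. I would structure the argument in three steps.

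First, I would prove by two-sided induction on $l$ that $a_l<a_{l+2}$ for all $l\in\Z$, using the hypothesis $a_0<a_2$ as the base case. For the forward step, assume $a_l<a_{l+2}$ and aim to show $a_{l+1}<a_{l+3}$. Since $a_{l+1}$ and $a_{l+3}$ lie in classes whose indices differ by $2$ modulo $5$, formula (\ref{f2}) forces them to be comparable; suppose for contradiction that $a_{l+3}<a_{l+1}$. The classes of $a_l$ and $a_{l+3}$ differ by the non-consecutive amount $3$ modulo $5$, so by (\ref{f2}) they are comparable as well, and the argument splits into two subcases. If $a_l<a_{l+3}$, transitivity with $a_{l+3}<a_{l+1}$ gives $a_l<a_{l+1}$, contradicting the quasi-chain condition $a_l\sim a_{l+1}$. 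If $a_{l+3}<a_l$, transitivity with the induction hypothesis $a_l<a_{l+2}$ yields $a_{l+3}<a_{l+2}$, contradicting $a_{l+2}\sim a_{l+3}$. Hence $a_{l+1}<a_{l+3}$. The backward step $l\to l-1$ follows by a symmetric argument: comparability of $a_{l-1}$ with both $a_{l+1}$ and $a_{l+2}$ (their indices differing from $l-1$ by $2$ and $3$ respectively), together with $a_l<a_{l+2}$, refutes the wrong ordering $a_{l+1}<a_{l-1}$ via a single transitivity move that clashes with either $a_{l+1}\sim a_{l+2}$ or $a_{l-1}\sim a_l$.

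Second, I would derive $a_l<a_{l+3}$ for every $l\in\Z$: (\ref{f2}) forces comparability of $a_l$ and $a_{l+3}$, and the reverse order $a_{l+3}<a_l$ combined with $a_l<a_{l+2}$ from Step~1 would yield $a_{l+3}<a_{l+2}$, again contradicting $a_{l+2}\sim a_{l+3}$. Finally, for arbitrary $l,m\in\Z$ with $m\geq l+2$, I would split on the parity of $m-l$. If $m-l$ is even, chaining the distance-$2$ inequalities from Step~1 via transitivity yields $a_l<a_{l+2}<a_{l+4}<\cdots<a_m$. If $m-l$ is odd, then $m-l\geq 3$; combining the distance-$3$ inequality $a_l<a_{l+3}$ from Step~2 with the (possibly empty) even-length chain $a_{l+3}<a_{l+5}<\cdots<a_m$ produced by Step~1 delivers the result.

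The main obstacle is Step~1: the inductive move from $a_l<a_{l+2}$ to $a_{l+1}<a_{l+3}$ has no direct axiomatic link, and the crucial insight is that the distance-$3$ pair $\{a_l,a_{l+3}\}$ serves as a \emph{bridge} whose comparability -- guaranteed by (\ref{f2}) because $3$ is not a consecutive offset modulo $5$ -- lets both possible orderings be refuted through one transitivity step followed by a violation of an adjacent $\sim$-relation. Once Step~1 is in place, Steps~2 and~3 are routine applications of transitivity of~$<$.
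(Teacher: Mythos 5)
Your proof is correct and follows essentially the same route as the paper's: a two-sided induction along the quasi-chain that uses the comparability of non-adjacent classes (formula (3)) together with the incomparability of consecutive chain elements to force the direction of $<$, followed by transitivity to cover all gaps $m-l\geq 2$. The only (cosmetic) difference is that the paper carries both $a_l<a_{l+2}$ and $a_l<a_{l+3}$ in the induction hypothesis, whereas you carry only the distance-$2$ inequality and recover the distance-$3$ bridge by a case split inside the inductive step.
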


\begin{proof}
	Let $(a_l)_{l\in \Z}$ be a quasi-chain in $\fA$ with $a_0<a_{2}$. Observe that by \eqref{f-1} and \eqref{f3} there is an element $a_i\in (a_l)_{l\in \Z}$ such that $a_i\in A_0$. To simplify notation,  w.l.o.g.~assume $a_0\in A_0$, hence by (\ref{f3}), $a_3\in A_3$. By (\ref{f2}) we get $a_0<a_3$ or $a_3<a_0$. If $a_3<a_0$ then $a_3<a_0<a_2$; a contradiction with $a_2\sim a_3$. So we have that $a_0<a_3$. Now we will show that
	
	(*) for every $l\in \Z$ $a_l<a_{l+2}$ and $a_l<a_{l+3}$
	
	For $l>0$ we proceed by induction. Assume $a_l<a_{l+2}$ and $a_l<a_{l+3}$ for every $l=0,1,\ldots, i-1$. Since $a_{i}\in A_{i}$, $a_{i+2}\in A_{i\+2}$ and $a_{i+3}\in A_{i\+3}$ then  by (\ref{f2}) we get:
	
	$\bullet\,\,\,$  $a_i<a_{i+2}$ or $a_{i+2}<a_i$ and
	
	$\bullet\,\,\,$  $a_i<a_{i+3}$ or $a_{i+3}<a_i$.
	\\If $a_{i+2}<a_i$, by inductive hypothesis, we get $a_{i-1}<a_{i+2}<a_i$; a contradiction with $a_{i-1}\sim a_i$, so we have $a_i<a_{i+2}$. Then, if $a_{i+3}<a_i$ again by inductive hypothesis we get $a_{i+3}<a_{i}<a_{i+2}$; a contradiction with $a_{i+2}\sim a_{i+3}$, so we have $a_i<a_{i+3}$.
	
	In the same way we show (*) for $l<0$.
	
	Now by transitivity of $<$  we obtain that $a_l<a_m$, for $l+2\leq m$ .
\end{proof}
Hence we obtain the following
\begin{corollary}
	The sentence $\varPhi$ is an axiom of infinity.
\end{corollary}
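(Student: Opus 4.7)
My plan is to combine Claim~\ref{quasi-chain} with a short case analysis on the order between two elements of a quasi-chain. Let $\fA \models \varPhi$. Starting from an arbitrary element and iterating (\ref{f3}), I build a bi-infinite quasi-chain $(a_l)_{l \in \Z}$ in $\fA$ (forward via right-witnesses, backward via left-witnesses). After a cyclic shift of indices I may assume $a_0 \in A_0^{\fA}$, so that $a_l \in A_{l \bmod 5}^{\fA}$ for every $l$.

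By (\ref{f2}), the pair $(a_0, a_2) \in A_0 \times A_2$ is comparable. If $a_0 < a_2$, Claim~\ref{quasi-chain} yields $a_l < a_m$ for all $l + 2 \leq m$, so $a_0 < a_2 < a_4 < \cdots$ is a strictly $<$-increasing sequence of pairwise distinct elements, forcing $|A| = \infty$. If instead $a_2 < a_0$, the reasoning is entirely symmetric in $<$: every step of the proof of Claim~\ref{quasi-chain} refers to $<$ only via the symmetric combinations $x < y \vee y < x$ (from (\ref{f2})) and $x \sim y$ (from the quasi-chain condition), so reversing the base case yields, by the same induction, $a_m < a_l$ whenever $l + 2 \leq m$, hence a strictly decreasing sequence $\cdots < a_4 < a_2 < a_0$. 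A slicker packaging is to observe that the sentence $\varPhi$ itself uses $<$ only inside those symmetric combinations, so $\fA \models \varPhi$ implies that the order-reversed structure $\fA^{op}$ also models $\varPhi$; applying the first case to $\fA^{op}$ (in which the same sequence now satisfies $a_0 < a_2$) exhibits infinitely many elements in the common universe.

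The main (and rather mild) obstacle is the handling of the second case; once one observes that both $\varPhi$ and Claim~\ref{quasi-chain} are invariant under reversing $<$, the corollary is immediate.
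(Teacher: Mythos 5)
Your proposal is correct and follows essentially the same route as the paper: the corollary is an immediate consequence of Claim~\ref{quasi-chain} applied to a quasi-chain, which exists in every model by~(\ref{f3}), yielding an infinite strict chain $a_0<a_2<a_4<\cdots$. Your explicit treatment of the case $a_2<a_0$ (via the observation that $\varPhi$, the quasi-chain condition, and the Claim are all invariant under reversing $<$, so one may pass to $\fA^{op}$) neatly closes a small case distinction that the paper leaves implicit, since Claim~\ref{quasi-chain} is only stated under the hypothesis $a_0<a_2$.
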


We conclude this example noting that 
by Claim \ref{quasi-chain}, no model of $\varPhi$ is narrow in the sense of Definition~\ref{def-narrow}. This is because no elements $a_k, a_l\in A_0$, $a_k<a_l$, of a quasi-chain $C$ in $\fA$ (cf. $a_0$ and $a_5$ in Figure \ref{fig1})  can have the same right-witness (neither $a_1$ nor $a_6$ can be a candidate for the right witness). Hence the technique presented in Section~\ref{sec:narrow} cannot be applied to show decidability for \fodt{} with free witnesses. So, we have

\begin{corollary}
	\fodeng{} has no narrow model property .
\end{corollary}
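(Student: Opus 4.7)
My plan is to use the sentence $\varPhi$ constructed above as an explicit obstruction: it belongs to \fodeng{} (every existential in \eqref{f3} is guarded by $x\sim y$, i.e., by $\neg Txy\wedge\neg Tyx$) and by the preceding corollary it is an axiom of infinity, so every model $\fA\models\varPhi$ has $|A|=\aleph_0$ and hence $A\neq K(\fA)$. I will show that $\varPhi$ admits no narrow model.

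Suppose for contradiction that $\fA\models\varPhi$ is narrow with partition $\{S_0,S_1,\ldots\}$ and uniform bound $|S_j|\leq\mathbb{M}$. I would fix a quasi-chain $(a_l)_{l\in\Z}$ in $\fA$ satisfying $a_0<a_2$; such a chain exists because iterated applications of the witness conjuncts of \eqref{f3} produce a quasi-chain, and \eqref{f2} forces $a_0$ and $a_2$ to be $<$-comparable (reversing the chain if the inequality goes the wrong way). Claim \ref{quasi-chain} then yields $a_l<a_m$ whenever $l+2\leq m$; in particular the $A_0$-subsequence $a_0<a_5<a_{10}<\cdots$ is strictly $<$-increasing and infinite, and the chain-successors $a_1<a_6<a_{11}<\cdots$ inside $A_1$ are pairwise distinct.

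The crux is that these right-witnesses cannot be shared: for $k\neq l$, the element $a_{5k+1}$ cannot be a right-witness of $a_{5l}$, because Claim \ref{quasi-chain} places $a_{5k+1}$ in a strict $<$-relation with $a_{5l}$, contradicting the required $\sim$. A slightly extended argument, substituting an arbitrary putative common $b\in A_1$ at position $5k+1$ of the chain and reapplying Claim \ref{quasi-chain} to the modified chain (which still inherits $a_0<a_2$), should show that no $b\in A_1$ can be simultaneously a right-witness of $a_{5k}$ and of $a_{5l}$. Consequently the sets of right-witnesses of the $a_{5k}$'s are pairwise disjoint inside $A_1$.

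Finally I would combine this rigidity with the narrow-model counting. Since $|A_1\cap S_0|\leq\mathbb{M}$ and the right-witness sets are pairwise disjoint, only finitely many $a_{5k}$ can have a right-witness in $S_0$. Pick any $a_{5k}$ outside $S_0$ with no right-witness in $S_0$; Definition \ref{def-narrow} then forces this element to have a right-witness in every $S_{j+1}$ for all sufficiently large $j$, hence infinitely many distinct right-witnesses. The pairwise disjointness across $k$ yields infinitely many such elements, each contributing infinitely many distinct witnesses into $\bigcup_{j\geq 1}S_j$, which cannot fit in the uniformly bounded segments $|S_j|\leq\mathbb{M}$. The main obstacle I anticipate is the third step: making the ``no shared right-witness'' argument watertight for an arbitrary $b\in A_1$, not merely for the chain-designated successors $a_{5k+1}$; the substitution-and-reapplication has to be arranged so that Claim \ref{quasi-chain}'s strict inequality triggers an incompatibility with $b\sim a_{5l}$ after passing through the correct $A_i$-members of the modified chain.
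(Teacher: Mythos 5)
Your overall strategy coincides with the paper's: use $\varPhi$ as the obstruction, extract a quasi-chain, invoke Claim~\ref{quasi-chain} to obtain the increasing $A_0$-subsequence $a_0<a_5<a_{10}<\cdots$, show its members cannot share right-witnesses, and contradict the uniform bound $|S_j|\le\mathbb{M}$. One presentational remark on the last step: the contradiction must be localized to a single segment. Choose more than $\mathbb{M}$ elements $a_{5k}$ having no right-witness in $S_0$, take $J$ so large that all of them lie in $\bigcup_{k\le J}S_k$, and observe that $S_{J+1}$ must then contain pairwise distinct right-witnesses for each of them, so $|S_{J+1}|>\mathbb{M}$. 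The phrase ``infinitely many distinct witnesses into $\bigcup_{j\ge 1}S_j$'' is not by itself a contradiction, since that union is infinite.

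The genuine gap is the one you flag yourself: disjointness of the right-witness sets for an \emph{arbitrary} $b\in A_1$, not merely for the designated successors $a_{5k+1}$. Your substitution device does not work as stated: replacing $a_{5k+1}$ by $b$ need not produce a quasi-chain, because nothing guarantees that $a_{5k+2}$ is a right-witness of $b$, so Claim~\ref{quasi-chain} cannot be reapplied to the modified sequence. No substitution is needed; the argument routes through axiom \eqref{f2}. Suppose $b\in A_1$ satisfies $b\sim a_{5k}$ and $b\sim a_{5l}$ with $k<l$. Since $b\in A_1$ and $a_{5k+3}\in A_3$, axiom \eqref{f2} forces them to be comparable. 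If $b<a_{5k+3}$, then Claim~\ref{quasi-chain} gives $a_{5k+3}<a_{5l}$ (as $5k+5\le 5l$), whence $b<a_{5l}$, contradicting $b\sim a_{5l}$. If $a_{5k+3}<b$, then $a_{5k}<a_{5k+3}<b$, contradicting $b\sim a_{5k}$. With this replacement (plus a word on the case $a_2<a_0$, where reversing the chain exchanges right- and left-witnesses, so one should appeal to the symmetry of $\varPhi$ under $A_i\mapsto A_{-i}$ or rerun the claim's induction in the other direction), your proof closes and matches the paper's intended argument, which itself only spells out the non-sharing of the designated witnesses $a_1$ and $a_6$.
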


There is one more interesting observation about \fodeng: the fragment enjoys the {\em small antichain property}. Namely, every satisfiable \fodeng-formula has a model with finite (bounded) antichains. (An {\em antichain} is a set of pairwise incomparable elements.) The property can be shown using similar ideas as in Lemma~\ref{lem:sc} exploiting the normal form for this fragment. This can be contrasted with the expressive power of \fodt, where we can write a formula that induces an infinite antichain in a model. The formula $\Upsilon$ below illustrates the latter claim. 

Let $\sigma_0=\{P,Q\}$, where $P$ and $Q$ are monadic. Let $\Upsilon$ be the conjunction of the 
following
statements: 
\begin{enumerate}[(a)]
	\item  Elements of $P$ form one infinite chain.
	\item  Elements of $Q$ are incomparable.
	\item  Every element of $P$ has an incomparable element in $Q$.
	\item  Every element of $Q$ is smaller than some element in $P$.
\end{enumerate}

In any model satisfying $\Upsilon$ there is an infinite chain of
elements in $P$ that induces an infinite antichain of elements in
$Q$ (see Figure \ref{infaxiom}).

\begin{figure}[thb]
	\xymatrix@-0.4pc{*!<0.5pc,0pc>{\hspace{2cm}} &
		*!<0pc,0pc>{P} & *!<0pc,0pc>{\ldots} & *+{\circ} \ar@{->}[r]^{}
		\ar@{.}[d] & *+{\circ}\ar@{->}[r]^{}\ar@{.}[d] & *+{\circ}
		\ar@{->}[r]^{} \ar@{.}[d] & *+{\circ} \ar@{->}[r]^{} \ar@{.}[d] &
		*+{\circ} \ar@{->}[r]^{} \ar@{.}[d] &
		*+{\circ}\ar@{}[r] \ar@{.}[d] & 
		*!<1pc,0pc>{\ldots}\\
		*!<0.5pc,0pc>{\hspace{1cm}} & *!<0pc,0pc>{Q} & *!<0pc,0pc>{\ldots}
		& *+{\bullet} \ar@{->}[ru]^{} & *+{\bullet}\ar@{->}[ru]^{} &
		*+{\bullet} \ar@{->}[ru]^{} & *+{\bullet} \ar@{->}[ru]^{} &
		*+{\bullet} \ar@{->}[ru]^{}  &
		*+{\bullet}\ar@{}[r] & 
		*!<1pc,0pc>{\ldots}
	}
	\caption{\textsf{\footnotesize A model for $\Upsilon$ where $<$ is the transitive closure of the edge relation depicted by arrows.}}\label{infaxiom}
\end{figure}

\subsection*{Outlook}

Our paper leaves the following questions open: is the satisfiability problem for \fodt{} with free witnesses decidable? and, is the satisfiability problem for full \fod{} with one transitive relation decidable? 
We believe the answer to both questions is positive; in particular, noting the recent result by Pratt-Hartmann \cite{P-H18} who showed that the finite satisfiability problem for $\fodt$ is decidable in 3-\NExpTime, and in  \TwoNExpTime{} if the transitive predicate $T$ is  interpreted as a partial order. However, the techniques used in \cite{P-H18} do not generalise to the case of infinite structures. 

We also remark that as noted by Kiero\'{n}ski and Michaliszyn \cite{KM12} their technique for  deciding satisfiability  of the two-variable universal fragment of first-order logic with constants and a transitive closure operator of a single binary relation could be extended to a fragment corresponding to \fodt{} with transitive witnesses. However, also this approach leaves a gap in complexity between  \TwoNExpTime{} and \TwoExpTime, and does not generalize to full \fodt. 

We conclude recalling that  the status of the satisfiability problem for \fod{} with two linear orders, to the best of our knowledge, also remains open.

\subsection*{Acknowledgement} 
We are very grateful for valuable comments from the anonymous referees that helped to improve both the presentation and the proof strategy from  the first draft.

\subsection*{Funding}This work is supported by
the Polish National Science Centre grants [2013/09/B/ST6/01535 to W.S.]. 

\bibliographystyle{plain}


\end{document}